\documentclass[manuscript, screen, authorversion, nonacm, preprint]{jair}

\setcopyright{cc}
\setcopyright{none}
\copyrightyear{2025}
\acmDOI{10.1613/jair.1.xxxxx}

\JAIRAE{Insert JAIR AE Name}
\JAIRTrack{} %
\acmVolume{4}
\acmArticle{6}
\acmMonth{8}
\acmYear{2025}

\usepackage{amsmath}
\usepackage{amsthm}
\usepackage{xspace}
\usepackage{cleveref}
\usepackage[inline]{enumitem}
\usepackage{mathtools}
\usepackage{natbib}

\usepackage{caption}
\usepackage{subcaption}

\usepackage{pifont}
\renewcommand{\checkmark}{\ding{51}}

\usepackage{comment}
\usepackage{doi}

\usepackage{tabularx}
\usepackage{booktabs}
\usepackage{multirow}
\newcolumntype{Y}{>{\centering\arraybackslash}X}
\newcolumntype{M}[1]{>{\centering\arraybackslash}m{#1}}

\usepackage{placeins}

\usepackage{tikz}
\usetikzlibrary{decorations.pathreplacing}

\newcommand{\N}{\ensuremath{\mathbb{N}}}

\newcommand{\probName}[1]{\textsc{#1}\xspace}

\newcommand{\CCDCs}[2][]{\probName{#2-CC#1DC}}

\newcommand{\DCDCs}[2][]{\probName{#2-DC#1DC}}
\newcommand{\CCAC}[2][]{\probName{#2-Constructive Control by #1Adding Projects}}
\newcommand{\CCACs}[2][]{\probName{#2-CC#1AC}}

\newcommand{\DCACs}[2][]{\probName{#2-DC#1AC}}

\newcommand{\RXthreeC}{\probName{Restricted Exact Cover by~$3$-Sets}}
\newcommand{\RXthreeCs}{\probName{RX$3$C}}

\newcommand{\voters}{\ensuremath{V}} %
\newcommand{\numVoters}{\ensuremath{n}} %

\newcommand{\projects}{\ensuremath{P}} %
\newcommand{\numProjects}{\ensuremath{m}} %

\newcommand{\budget}{\ensuremath{B}} %
\newcommand{\cost}{\ensuremath{\operatorname{cost}}} %

\newcommand{\wFn}{\ensuremath{\omega}} %
\newcommand{\numDeleted}{\ensuremath{r}} %

\newcommand{\score}[1]{\ensuremath{\operatorname{score}_{\text{#1}}}}

\newcommand{\ruleName}[1]{\textsc{#1}\xspace}
\newcommand{\greedyAV}{\ruleName{GreedyAV}}
\newcommand{\AV}{\ruleName{AV}}
\newcommand{\greedyCost}{\ruleName{GreedyCost}}
\newcommand{\AVcost}{\ruleName{AV/c}}
\newcommand{\phragmen}{\ruleName{Phragmén}}

\newcommand{\equalShares}{\ruleName{Equal-Shares}}

\renewcommand{\P}{\textsf{P}\xspace}
\newcommand{\NP}{\textsf{NP}\xspace}
\newcommand{\NPh}{\NP-hard\xspace}
\newcommand{\NPhness}{\NP-hardness\xspace}
\newcommand{\NPc}{\NP-comp\-lete\xspace}
\newcommand{\W}[1][1]{\textsf{W[#1]}}
\newcommand{\Wh}[1][1]{\W[#1]-hard\xspace}
\newcommand{\Whness}[1][1]{\Wh{}ness\xspace}

\newcommand{\Oh}[1]{{\mathcal{O}(#1)}}

\newtheorem{example}{Example}
\newtheorem{theorem}{Theorem}

\newtheorem{definition}{Definition}
\newtheorem{claim}{Claim}
\newtheorem{observation}{Observation}
\newtheorem{remark}{Remark}
\newtheorem{proposition}{Proposition}

\Crefname{observation}{Observation}{Observations}
\Crefname{claim}{Claim}{Claims}
\Crefname{remark}{Remark}{Remarks}
\newenvironment{claimproof}[1]{\textsc{Proof.}\hspace{0.15cm}#1}{~\hfill~$\blacktriangleleft$\smallskip}

\newcommand{\Yes}{\textsc{Yes}\xspace}
\newcommand{\No}{\textsc{No}\xspace}

\begin{document}

\fancypagestyle{firstpagestyle}{
    \fancyhf{}%
    \fancyfoot[RO,LE]{}%
  }%

\fancyfoot{}

\title{Participatory Budgeting Project Strength via Candidate Control}
\titlenote{An extended abstract of this work has been published in the Proceedings of the 34th International Joint Conference on Artificial Intelligence~\cite{FaliszewskiJKPSSS2025}.}

\author{Piotr Faliszewski}
\affiliation{
  \institution{AGH University of Krakow}
  \city{Krakow}
  \country{Poland}}
\email{faliszew@agh.edu.pl}
\orcid{0000-0002-0332-4364}

\author{Łukasz Janeczko}
\affiliation{
  \institution{AGH University of Krakow}
  \city{Krakow}
  \country{Poland}}
\email{ljaneczk@agh.edu.pl}
\orcid{0000-0002-4596-8109}

\author{Dušan Knop}
\affiliation{
  \institution{Czech Technical University in Prague}
  \city{Prague}
  \country{Czech Republic}}
\email{dusan.knop@fit.cvut.cz}
\orcid{0000-0003-2588-5709}

\author{Jan Pokorný}
\affiliation{
  \institution{Czech Technical University in Prague}
  \city{Prague}
  \country{Czech Republic}}
\email{jan.pokorny@fit.cvut.cz}
\orcid{0000-0003-3164-0791}

\author{Šimon Schierreich}
\affiliation{
  \institution{AGH University of Krakow}
  \city{Krakow}
  \country{Poland}
}
\affiliation{
  \institution{Czech Technical University in Prague}
  \city{Prague}
  \country{Czech Republic}
}
\email{schiesim@fit.cvut.cz}
\orcid{0000-0001-8901-1942}

\author{Mateusz Słuszniak}
\affiliation{
  \institution{AGH University of Krakow}
  \city{Krakow}
  \country{Poland}}
\email{msluszniak1@gmail.com}
\orcid{0009-0006-9639-7197}

\author{Krzysztof Sornat}
\affiliation{
  \institution{AGH University of Krakow}
  \city{Krakow}
  \country{Poland}}
\email{sornat@agh.edu.pl}
\orcid{0000-0001-7450-4269}

\renewcommand{\shortauthors}{Faliszewski et al.}

\begin{abstract}
    We study the complexity of candidate control in participatory budgeting elections. The goal of constructive candidate control is to ensure that a given candidate wins by either adding or deleting candidates from the election (in the destructive setting, the goal is to prevent a given candidate from winning). We show that such control problems are \NPh to solve for many participatory budgeting voting rules, including \phragmen{} and \equalShares, but there are natural cases with polynomial-time algorithms (e.g., for the \greedyAV rule and projects with costs encoded in unary).
    We also argue that control by deleting candidates is a useful tool for assessing the performance (or, strength) of initially losing projects, and we support this view with experiments.
\end{abstract}

\maketitle

\section{Introduction}

Participatory budgeting is a recent democratic innovation where cities
allow their inhabitants to decide about a certain fraction of their
budgets~\citep{cab:j:participatory-budgeting,goe-kri-sak-ait:c:knapsack-voting,rey-mal:t:pb-survey}.
Specifically, some of the community members propose possible projects
to be implemented and, then, all the citizens get a chance to vote as
to which of them should be funded. Most commonly, such elections use
approval ballots, where people indicate which projects they would like
to see funded, and the \greedyAV{} rule, which selects the most
approved projects (subject to not exceeding the budget).  However,
there also are more advanced rules, such as
\phragmen{}~\citep{bri-fre-jan-lac:c:phragmen,los-chr-gro:c:phragmen-pb}
or
\equalShares{}~\citep{pet-sko:c:welfarism-mes,pet-pie-sko:c:pb-mes},
which produce arguably more fair---or, to be precise, more
proportional---decisions~\citep{fal-fli-pet-pie-sko-sto-szu-tal:c:pabulib}. Yet,
with more advanced rules come issues about understanding the
results. Indeed, recently
\citet{BoehmerFJPPSSS2024} have argued
that proposers whose projects were rejected may find it quite
difficult to understand the reason for this outcome. To alleviate this
problem, they introduced a number of performance measures---mostly
based on the bribery family of
problems~\citep{fal-hem-hem:j:bribery,FaliszewskiR2016}---that attempt
to answer the following question: As a proposer of a project that was
not funded, what could I have done differently to have it funded?  For
example, they ask if the project would have been funded if its cost
were lower (see also the work of \citet{bau-boe-hil:c:pb-manip}), or
if its proposer convinced more people to vote for it, or if the
proposer motivated some voters to only approve his or her project.
Similar bribery-style problems were also used to evaluate the
robustness of election
results~\citep{shi-yu-elk:c:robustness,bre-fal-kac-nie-sko-tal:c:robustness,bau-hog:c:robustness,boe-bre-fal-nie:c:counting-bribery,boe-fal-jan-kac:c:robustness-pb},
or the margin of victory for the
winners~\citep{mag-riv-she-wag:c:stv-bribery,xia:margin-of-victory}.

In this paper, we follow-up on these ideas, but using candidate
control.  The main difference is that instead of focusing on
circumstances that depend on a project's proposer (indeed, the
project's cost is his or her choice, and it is his or her choice what
support campaign he or she runs), we focus on external ones,
independent of his or her actions (such as some other projects being
submitted or not\footnote{We disregard the possibility that a proposer
  might try to discourage other people from proposing projects, albeit
  we acknowledge that this may happen. In fact, this might even be
  quite benign: A group of activists focused on making their city more
  green may discuss among themselves which projects to submit and
  which to withhold.}). We believe that looking at both types of
reasons for a project's rejection gives a more complete view of its
performance.

\paragraph{Candidate Control.}
The idea of the control-in-elections family of problems is that we are
given a description of an election, a designated candidate, and we ask
if it is possible to ensure that this candidate is a winner (in
constructive control) or ceases to be a winner (in destructive
control) by modifying the structure of the
election~\citep{BartoldiTT1992,FaliszewskiR2016}. Specifically,
researchers consider control by adding or deleting either candidates
or voters (some papers---including the one that introduced election
control~\citep{BartoldiTT1992}---also consider various forms of
arranging run-off elections as a type of control). So far, election
control was mostly studied theoretically, with a focus on the
complexity
analysis~\citep{BartoldiTT1992,hem-hem-rot:j:destructive-control,MeirPRZ2008,fen-liu-lua-zhu:j:parameterized-control,erd-fel-rot-sch:j:bucklin-fallback,Yang2019},
but some empirical results exist as
well~\citep{erd-fel-rot-sch:j:experimental-control}.  We study
candidate control in participatory budgeting, that is, we ask if it is
possible to ensure funding of a given project (or, preclude its
funding) by either adding new projects---from some a'priori known set
of projects---or by deleting them.  Our results are theoretical
and focus on the complexity of our problems, but we motivate them by
project performance analysis.
As our performance analysis is
based on control by deleting projects, we pay most attention to
results regarding this variant of control, and we include the addition
case for the sake of completeness and to be in sync with the preceding
literature.

\paragraph{Performance Analysis.}
Let us now discuss how one could use control by deleting candidates to
analyze the performance of projects in participatory budgeting (we
will use the terms projects and candidates interchangeably, e.g.,
using ``candidates'' in the names of control problems). Consider a
participatory budgeting election and some not-funded project~$p$. One
basic measure of its performance is the smallest number of other
projects that have to be removed from the election for~$p$ to be
funded. The lower this number, the closer was the project to winning:
Indeed, perhaps some proposers only managed to submit their projects
in the last minute and it was possible that they would have missed the
deadline, or some projects were close to be removed from the election
due to formal reasons, but the city officials were not strict in this
regard. However, it is more likely that such issues would affect
cheaper projects than the more expensive ones---which, likely, had
more careful proposers---so instead of asking for a smallest set of
projects to delete, we may ask for a set with the lowest total cost.

Another way of using control by deleting projects to assess a project's
performance is to use a probabilistic approach, along the lines of the
one taken by
\citet{boe-bre-fal-nie:c:counting-bribery}, \citet{boe-fal-jan-kac:c:robustness-pb},
and \citet{bau-hog:c:robustness} for bribery: We ask for the
probability that project~$p$ is funded assuming that a random subset
of projects (of a given cardinality) is removed. The higher it is, and
the lower is the number of removed project, the closer was project~$p$
to winning.

A different interpretation of the above measures is that instead of
thinking that some projects ``barely made it'' to participate in the
election, we learn how many projects performed better than~$p$. The
more projects we need to delete to have~$p$ funded (or, to have~$p$
funded with sufficiently high probability) the more projects can be
seen as critically stronger than~$p$.

Finally, we can use candidate control as a way of assessing rivalry
between projects. For example, if project~$p$ has a much higher
probability of being funded after deleting a random set of projects
under the condition that some other project~$q$ was included in this
set, then we can view~$q$ as a strong rival of~$p$.

\paragraph{Contributions.}

Our results are theoretical and regard the complexity of
candidate control for four well-known voting rules, depending on how
the costs of the projects are encoded (either in binary, or in unary,
or as unit costs, which means that each project costs the same
amount). We show the overview of our results in
\Cref{fig:complexityOverview}. We mention that all our~$\NP$-hardness
results also imply~$\#\P$-hardness for respective problems where we
ask for a number of solutions (e.g., number of ways in which we can
ensure a victory of a given project by deleting a given number of
others). This is interesting because solving such problems is
necessary for estimating the probability that a project wins if a
given randomly-selected set of projects is deleted.

Additionally, we provide an experimental analysis of real-world
participatory budgeting instances, showing what one can learn about
them via candidate control. E.g., we pursue rivalry analysis
by adopting the notion of the Banzhaf index.

\begin{table*}
    \centering
    \def\arraystretch{1.3}
    \Crefname{theorem}{Thm.}{Thm.}
    \Crefname{observation}{Obs.}{Obs.}
    \Crefname{proposition}{Prop.}{Prop.}
    \renewcommand{\crefpairconjunction}{,~}
    \renewcommand{\NPh}{{\NP-c} }
    \begin{tabular}{c|M{17mm}M{17mm}|M{17mm}M{17mm}|M{17mm}M{17mm}}\toprule
            & \multicolumn{2}{c|}{\texttt{unit}} & \multicolumn{2}{c|}{\texttt{unary}} & \multicolumn{2}{c}{\texttt{binary}} \\
            & Del & Add & Del & Add & Del & Add \\\midrule
         \greedyAV
            & \phantom{xx}\P{}$^\ddagger$\phantom{xx} {\small[\Cref{thm:AV:CCDC:P:ifUnitPricesAndBinaryWeights}]}
            & \phantom{xx}\P{}$^\ddagger$\phantom{xx} {\small[\Cref{thm:AV:CCAC:P:ifUnaryPrices}]}
            & \phantom{xx}\P{}$^\ddagger$\phantom{xx} {\small[\Cref{thm:AV:CCDC:P:ifUnaryPrices}]}
            & \phantom{xx}\P{}$^\ddagger$\phantom{xx} {\small[\Cref{thm:AV:CCAC:P:ifUnaryPrices}]}
            & \NPh{} {\small[\Cref{thm:AV:CCDC:NPh}]}
            & \NPh{} {\small[\Cref{thm:AV:CCAC:NPh}]}\\\hline
         \greedyCost
            & \phantom{xx}\P{}$^\ddagger$\phantom{xx} {\small[\Cref{thm:AV:CCDC:P:ifUnitPricesAndBinaryWeights}]}
            & \phantom{xx}\P{}$^\ddagger$\phantom{xx} {\small[\Cref{thm:AV:CCAC:P:ifUnaryPrices}]}
            & \phantom{xx}\P{}$^\ddagger$\phantom{xx} {\small[\Cref{thm:AV:CCDC:P:ifUnaryPrices}]}
            & \phantom{xx}\P{}$^\ddagger$\phantom{xx} {\small[\Cref{thm:AV:CCAC:P:ifUnaryPrices}]}
            & \NPh{} {\small[\Cref{thm:AV:CCDC:NPh}]}
            & \NPh{} {\small[\Cref{thm:AV:CCAC:NPh}]} \\\hline
         \phragmen
            & \NPh {\small[\Cref{thm:Phragmen:CCDC:NPh,thm:Phragmen:DCDC:NPh}]}
            & \NPh {\small[\Cref{thm:Phragmen:CCAC:NPh,thm:Phragmen:DCAC:NPh}]}
            & \NPh {\small[\Cref{thm:Phragmen:CCDC:NPh,thm:Phragmen:DCDC:NPh}]}
            & \NPh {\small[\Cref{thm:Phragmen:CCAC:NPh,thm:Phragmen:DCAC:NPh}]}
            & \NPh{} {\small[\Cref{thm:Phragmen:CCDC:NPh,thm:Phragmen:DCDC:NPh}]}
            & \NPh{} {\small[\Cref{thm:Phragmen:CCAC:NPh,thm:Phragmen:DCAC:NPh}]} \\\hline
         \equalShares
            & \NPh {\small[\Cref{thm:MES:CCDC:NPc}]}
            & \NPh {\small[\Cref{thm:MES:CCAC:NPc}]}
            & \NPh {\small[\Cref{thm:MES:CCDC:NPc}]}
            & \NPh {\small[\Cref{thm:MES:CCAC:NPc}]}
            & \NPh {\small[\Cref{thm:MES:CCDC:NPc}]}
            & \NPh {\small[\Cref{thm:MES:CCAC:NPc}]} \\\bottomrule
    \end{tabular}
    \caption{A basic overview of our complexity results. In the first column, we list the rules we are interested in. All the remaining columns contain the complexity classification of our problem in one of the three variants: by \texttt{unit}, we mean that all input projects are of the same price, \texttt{unary} stands for cases where the costs are of size polynomial in~$\numVoters+\numProjects$, and \texttt{binary} applies for the variant where costs need to be encoded in binary (and hence can be exponential in~$\numVoters$ and~$\numProjects$). By {\normalfont Del} ({\normalfont Add}, respectively), we mean that the control operation is project deletion (addition). The complexity classification is the same for both constructive and destructive objectives. Results marked with~$\ddagger$ hold even if the control is weighted and projects' weights are encoded in binary.}
    \label{fig:complexityOverview}
\end{table*}

\section{Preliminaries}

An instance of \emph{participatory budgeting} (PB) is a triple~$E = (\projects,\voters,\budget)$, where~$\projects=\{p_1,\ldots,p_m\}$ is a set of \emph{projects},~$\voters=\{v_1,\ldots,v_n\}$ is a set of voters, and~$\budget\in\N$ is an available budget. Each voter~$v\in\voters$ is associated with an \emph{approval set}~$A(v)\subseteq P$, which is the set of projects they approve. For each project~$p\in\projects$, we know its price~$\cost(p) \in \N_{\geq 1}, \cost(p) \leq B$, for which this project can be implemented.  We extend this notation from a single project~$p$ to a set of projects~$S\subseteq P$ and set~$\cost(S) = \sum_{p\in S} \cost(p)$. We say that~$S\subseteq P$ is a set of~$\budget$-\emph{feasible} projects if~$\cost(S) \leq \budget$.

A PB rule is a function~$f\colon E\to 2^\projects$ that, for a given PB instance, outputs a~$\budget$-feasible subset of projects. Note that we assume the rules to be resolute, which can be easily ensured by incorporating some tie-breaking order into them. Let~$W=f(E)$ for some rule~$f$ and some PB instance~$E$. We say that projects from~$W$ are \emph{selected} or, equivalently, \emph{funded}. The projects not in~$W$ are called \emph{losing}.%

In this work, we consider four different participatory budgeting rules. Each of these rules starts with an empty set~$W$ and sequentially, in rounds, extends~$W$ with additional projects unless the budget~$\budget$ is exhausted or all the projects were processed. More formally, given a PB instance~$E=(\projects,\voters,\budget)$, the rules of our interest works as follows:

\begin{description}[leftmargin=0.33cm]
    \item[\greedyAV.] We define the score of a project~$p\in\projects$ as the number of voters approving~$p$; formally~$\score{\AV}(p) = |\{v\in\voters\mid p\in A(v)\}|$. The \greedyAV rule then processes the projects in non-increasing order according to their scores (with ties resolved according to a given tie-breaking order). If the rule can afford the currently processed project~$p$, i.e.,~$\cost(W) + \cost(p) \leq \budget$, then it includes~$p$ in~$W$. Otherwise, the rule continues with the next project. The rule terminates when all projects were processed.

    \item[\greedyCost.] This rule is very similar to the \greedyAV rule. The only difference is in the order in which the projects are processed. Specifically, the score of a project~$p\in\projects$ under \greedyCost rule is~$\score{\AVcost}(p) = \score{\AV}(p)/\cost(p)$. The process is then identical to the \greedyAV rule.

    \item[\phragmen.] This rule is conceptually different from the two above. Here, each voter starts with an empty virtual bank account, which is, in a continuous manner, increased by one unit of money per unit of time. Once there is a project~$p\in\projects \setminus W$ such that the sum of balances of voters approving~$p$ is exactly~$\cost(p)$, the current round ends, and the rule performs several steps. First, it includes the project~$p$ into the set of funded projects. Next, it sets the balance of all voters approving~$p$ to zero. Finally, the rule removes all projects~$p'\in\projects\setminus W$ such that~$\cost(W)+\cost(p') > \budget$. Then, the rule continues with the next round. The rule terminates when there is no remaining project in~$P$.

    \item[\equalShares.] Our last rule is also based on the idea of virtual bank accounts. However, this time, the budget is proportionally spread between the voters, meaning that each voter starts with the initial balance of~$\budget/\numVoters$, and this initial value is never increased. Again, the rule works in rounds. In each round, the rule funds a project such that the set of supporters have enough cumulative budget to fund this project, and each of them covers as small a fraction of its cost as possible. Formally, let~$b_i$ be the current balance of a voter~$v_i$. We say that a project~$p\in \projects\setminus W$ is~$q$-affordable, where~$q\in[0,1]$, if
    \[
        \sum\nolimits_{v_i\in A(p)} \min\left( b_i, q\cdot \cost(p)\right) = \cost(p).
    \]
    In each round, the rule funds a project that is~$q$-affordable for the smallest~$q$ over all projects and appropriately adjusts the balances of voters supporting~$p$. Specifically, the balance~$b_i$ of each agent is decreased by~$q_i\cdot \cost(p)$, where~$q_i = q$ if~$b_i > q\cdot \cost(p)$ and~$b_i/\cost(p)$ otherwise. The rule terminates when no affordable project exists.
\end{description}

\paragraph{Control Problems.} We focus on the control by adding or deleting projects and follow the standard notation from single- and multi-winner voting~\citep{FaliszewskiR2016}. Let~$f$ be a PB rule. In the \textsc{$f$-Constructive Control by Deleting Candidates} projects (\CCDCs{$f$}, for short), we are given an instance~$E=(\projects,\voters,\budget)$ of PB, an integer~$\numDeleted$, and a project~$p\not\in f(E)$, and our goal is to decide whether it is possible to delete at most~$\numDeleted$ projects~$D$ such that~$p\in f((\projects\setminus D,\voters,\budget))$. In the \textsc{$f$-Destructive Control by Deleting Candidates} (\DCDCs{$f$}), the project~$p$ is initially not funded, and the goal is to decide whether we can delete at most~$\numDeleted$ project so that the project~$p$ gets funded.

In control by adding projects, there are two disjoint sets of projects:~$\projects$ is a set of standard projects, and~$Q$ is a set of spoiler projects. The rule does not initially assume the spoiler projects. The question here is whether we can find at most~$\numDeleted$ spoiler projects such that once we add them to the instance, the project~$p$ is (in the case of \CCACs{$f$}) or is not (in the case of \DCACs{$f$}) funded by the rule~$f$.

In our algorithmic results, we are sometimes interested in the weighted variant of the above-defined problems. Under this consideration, each project~$p$ is additionally associated with its weight~$\wFn(p)$, and the goal is to decide whether a set~$D$ of projects securing our goal exists such that~$\sum_{p\in D} \wFn(p) \leq \numDeleted$. We indicate the weighted variant by adding a dollar sign in front of the operation type. For example, the weighted variant of \CCDCs{$f$} is referred to as \CCDCs[\$]{$f$}.
Even though the weighted variant might seem unnatural at first glance, the motivation for it is two-fold. First, it is studied in similar literature on control in elections~\citep{FaliszewskiR2016}. Second, and more importantly, if we set the weight of each project equal to its cost, we can use a hypothetical algorithm for the weighted variant to find a set of the lowest total costs that secures our goal, which is one of the proposed performance measures.

\paragraph{Computational Problems.}

Most hardness proofs in this work rely on different reductions from the same problem---\RXthreeC. This is an \NPc~\citep{Gonzalez1985} variant of the well-known \probName{Exact Cover By 3-Sets} problem and is defined as follows.

\begin{definition}\label{def:RX3C}
    In the \RXthreeC problem (\RXthreeCs), we are given a universe~$U$ of~$3N$ elements~$u_1,\ldots,u_{3N}$ together with a family~$\mathcal{S}$ of~$3N$ size-$3$ subsets~${S_1,\ldots,S_{3N}\subset U}$ such that every element~$u_i\in U$ appears in exactly~$3$ subsets of~$\mathcal{S}$. The goal is to decide whether~$N$ subsets of~$\mathcal{S}$ exist so that they form an exact cover of~$U$.
\end{definition}

The reduction of \citet{Gonzalez1985} additionally shows that the problem is \NPc even if each pair of subsets of~$\mathcal{S}$ intersects in at most one element.

\section{Deleting Projects}

We start with the complexity picture of both constructive and destructive control by deleting projects. This operation is arguably more natural for real-life instances and is also assumed in our experimental results. Before we present our results, let us illustrate the concept of control by deleting projects using a toy example.

\begin{example}
    Assume an instance with three projects~$c_1$,~$c_2$, and~$p$. The project~$c_1$ is approved by three voters, the project~$c_2$ by two voters, and the project~$p$ by a single voter. The costs of the projects are~$\cost(c_1) = 1$,~$\cost(c_2) = 2$, and~$\cost(p) = 1$. The total budget is~$\budget=2$. Under the \greedyAV rule, the project~$c_1$ is assumed first, and since its cost is smaller than the budget, this project gets funded. Next, the rule assumes the project~$c_2$, but this project costs more than the remaining budget. Lastly, project~$p$ is considered and eventually also funded. If we remove the project~$c_1$, the project~$c_2$ gets funded and exhausts the budget, and therefore, the project~$p$ cannot be funded. Hence, removing~$c_1$ from the instance is a successful destructive control that prevents~$p$ from winning and, simultaneously, a successful control that makes~$c_2$ a winning project.
\end{example}

\subsection{\greedyAV and \greedyCost}

We start with the \greedyAV rule. In our first result, we show that both constructive and destructive control are computationally intractable, even if the instance is unweighted. Maybe surprisingly, this hardness result holds even if the instance contains only two agents.

\begin{theorem}
    \label{thm:AV:DCDC:NPh}\label{thm:AV:CCDC:NPh}
    \label{thm:AV:DCDC:pNPh:voters}\label{thm:AV:CCDC:pNPh:voters}
    Both \CCDCs{\greedyAV} and \DCDCs{\greedyAV} are \NPc, even if~$|\voters|=2$.
\end{theorem}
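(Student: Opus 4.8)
The plan is to reduce from \RXthreeCs. Given an instance $(U,\mathcal{S})$ with $|U|=3N$ and $|\mathcal{S}|=3N$, I would build a \greedyAV instance with exactly two voters, where the projects encode the sets $S_j\in\mathcal{S}$ together with a carefully chosen cost structure so that funding the target project $p$ becomes possible exactly when we delete a collection of set-projects whose removal leaves a sub-budget that ``fits'' an exact cover. The key trick for getting \NPhness with only two voters is that with two voters there are only three possible score values ($0$, $1$, $2$), so the greedy order is essentially fixed into three blocks; all the combinatorial work has to be pushed into the \emph{costs}. I would give every set-project $S_j$ the same (high) score $2$ (both voters approve it), so that the tie-breaking order among them is what the algorithm walks through, and encode each element $u_i$ by a ``digit'' in a large radix so that $\cost(S_j)$ is a sum of three element-digits. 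Deleting $r := 3N-N = 2N$ of the set-projects corresponds to \emph{keeping} $N$ of them; the budget $\budget$ is set so that after the greedy rule funds the kept set-projects, the remaining budget is exactly enough for $p$ (which has the lowest score and is considered last) if and only if the kept sets have total cost exactly equal to the ``all-ones'' target, i.e.\ form an exact cover.

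More concretely, the first steps would be: (1) fix radix $\rho = 3N+1$ and assign element $u_i$ the value $\rho^{i-1}$, so that $\cost(S_j) = \sum_{u_i\in S_j}\rho^{i-1}$ and the target value $T := \sum_{i=1}^{3N}\rho^{i-1}$; since each element lies in exactly three sets and any $N$ sets contribute at most $N$ copies of any digit (well below the radix), no carries occur and $\sum_{j\in\mathcal{S}'}\cost(S_j) = T$ holds iff $\mathcal{S}'$ is an exact cover of size $N$. (2) Set $\budget = T + \cost(p)$ with $\cost(p)$ a small value, say $1$ (rescaling the $\rho^{i-1}$ up by a constant if needed so that $\cost(p) < \rho^0$ does not clash). (3) Place all set-projects with score $2$, ordered by tie-breaking $S_1,\dots,S_{3N}$, and $p$ with score $1$ (approved by one voter only), so $p$ is processed after every surviving set-project. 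Then the \greedyAV run on the reduced instance funds a greedy-prefix of the surviving set-projects until the budget is tight, and $p$ is funded iff the surviving set-projects have total cost exactly $T$ and $p$'s remaining $\cost(p)$ fits — which forces the survivors to be exactly a size-$N$ exact cover. For the destructive version, I would either reuse the same construction with a symmetric argument, or add one cheap ``dummy rival'' of $p$ that blocks $p$ unless the same budget condition is met; since the complexity table states the classification is identical for constructive and destructive, a single construction serving both (as in the toy example's observation that one deletion can be simultaneously constructive and destructive) is the cleanest route.

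The subtle points I expect to wrestle with: first, ensuring the greedy rule does not fund a ``wrong'' prefix of surviving set-projects — because greedy skips an unaffordable project and tries the next, I must argue that if the surviving sets do \emph{not} form an exact cover then either some digit of their total is $\ge 1$ in a position it shouldn't be (overshooting $T$ so $p$ cannot fit even after greedy skips) or $\le T$ strictly, in which case the remaining budget after funding all survivors exceeds $\cost(p)$ only if... — I need the budget bookkeeping to be exact, which is why $T+\cost(p)$ and the no-carry property are essential. Second, I must handle the fact that greedy might fund $p$ \emph{before} exhausting the set-projects only if its score were high; keeping $\score{\AV}(p)=1 < 2$ rules this out cleanly. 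Third, membership in \NP is immediate (guess the deleted set, simulate the rule in polynomial time), so the work is entirely the hardness direction. The main obstacle is therefore the precise calibration of costs and budget so that ``greedy funds $p$'' is logically equivalent to ``the survivors are an exact cover,'' with no off-by-one slack exploitable by the greedy skipping behavior; the no-carry radix encoding is the device that makes this go through.
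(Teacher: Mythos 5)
Your high-level strategy (reduce from \RXthreeCs, use two voters to freeze the greedy order into score blocks, and push all of the combinatorics into a carry-free digit encoding of the set costs) matches the paper's, but the construction as you describe it has a genuine gap: \greedyAV only ever tests whether a project is \emph{affordable}, i.e.\ whether its cost is at most the remaining budget, whereas your correctness argument needs the survivors' total cost to equal the target $T$ \emph{exactly}. With $\budget = T + \cost(p)$ and no further projects, $p$ is funded whenever the set-projects that greedy manages to fund sum to at most $T$ --- a condition satisfied by many non-covers (e.g.\ $N$ sets that doubly cover low-indexed elements while missing high-indexed ones have total cost strictly below $T$), and typically also satisfied by the \emph{original} instance with no deletions at all, since greedy funds a maximal affordable subcollection of the $3N$ set-projects and generically leaves integer slack at least $\cost(p)$. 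So the backward implication fails and, worse, $p$ need not even be losing initially. You sense the problem (``I need the budget bookkeeping to be exact\ldots only if\ldots'') but never supply the mechanism that closes it.

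The paper's fix is twofold. First, it sets the budget equal to the total cost of \emph{all} set-projects ($\sum_{i=1}^{3N} 3\cdot 4^i$), so that initially the set-projects exhaust the budget exactly (hence $p$, with score $0$ and processed last, is provably losing) and so that after deleting a set $D$ of set-projects the leftover budget when $p$ is reached equals $\cost(D)$ exactly. Second, and crucially, it inserts $N+1$ \emph{guard-projects}, each of cost $\cost(p)+1$ and with score strictly between the set-projects and $p$: any surplus strictly exceeding $\cost(p)$ is absorbed by a guard before $p$ is considered, so $p$ is funded if and only if $\cost(D) = \cost(p) = \sum_{i=1}^{3N} 4^i$ exactly, which by the carry-free encoding (together with a pigeonhole claim that fewer than $N$ set-projects cannot reach this sum) forces $D$ to correspond to an exact cover. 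Your ``cheap dummy rival'' remark points in this direction, but you propose it only for the destructive variant and only as a single cheap project; what is needed are guards that cost \emph{one unit more than $p$} and are numerous enough ($r+1$ of them) that they cannot all be deleted. Without this equality-forcing gadget the reduction does not go through.
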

\begin{proof}
    It is easy to see that both problems are in \NP: we guess the projects to delete, and then, in polynomial time, we simulate the rule to get the set of funded projects. From this set, we can directly verify the correctness of the guess by checking whether our distinguished candidate~$p$ is (in the constructive case) or is not (in the destructive case) between the funded projects.

    To show \NPhness, we give a reduction from the \RXthreeC problem (see \Cref{def:RX3C} for a formal definition). In the following, we focus on the constructive variant, and at the end of the proof, we show how to tweak the construction to also work for the destructive variant.

    The general idea of the construction is that the projects are in one-to-one correspondence with the sets, and using project costs, we encode which elements are covered by each set. This we achieve by having project costs as a~$3N$ digit-length number in base~$4$. A project~$p_j$ corresponding to a subset~$S_j = \{u_{i_1},u_{i_2},u_{i_3}\}\in \mathcal{S}$,~$i_1 < i_2 < i_3$, then has cost of the form
    \[
        \underset{3N}{0}00\cdots00\underset{i_3}{1}00\cdots00\underset{i_2}100\cdots00\underset{i_1}100\cdots\underset{2}0\underset{1}0,
    \]
    where~$i$-th digit of this cost has value one if and only if the element~$u_i$ belongs to~$S_j$. All other digits are always zero. Next, we set the budget and the cost of our distinguished project~$p$ so that it gets funded only if the deleted projects correspond to subsets that form an exact cover in~$\mathcal{I}$.

    Formally, given an instance~$\mathcal{I}=(U,\mathcal{S})$, we construct an instance~$\mathcal{J}$ of \CCDCs{\greedyAV} as follows. For each set~$S_j\in\mathcal{S}$,~$S_j = \{u_{i_1},u_{i_2},u_{i_3}\}$, we create a \emph{set-project}~$p_j$ with cost~$1\cdot4^{i_1} + 1\cdot4^{i_2} + 1\cdot4^{i_3}$. Next, we add our \emph{distinguished project}~$p$ and~$N+1$ \emph{guard-projects}~$g_1,\ldots,g_{N+1}$. The cost of the distinguished project~$p$ is~$\sum_{i=1}^{3N} 1\cdot 4^i$ and for every~$i\in[N+1]$, we set~$\cost(g_i) = \cost(p) + 1$. That is, the guard projects are only one unit more expensive compared to our distinguished project. This is important to ensure that the budget left after we delete some set-projects is exactly the cost of~$p$.

    The set of voters consists of just two voters,~$ v_1$ and~$ v_2$. The first voter,~$v_1$, approves all projects except for~$p$. The second voter,~$v_2$, approves only the set-projects. Such a preference profile secures that, regardless of the tie-breaking order, the method first processes all set-projects, then all guard-projects, and only as the last possibility, the method processes the distinguished project~$p$.

    To complete the construction, we set~$\budget = \sum_{i=1}^{3N} 3\cdot 4^i$ and~$\numDeleted = N$. Observe that the number of guard-projects is by one greater than the number of projects we are allowed to delete; hence, no solution may delete all guard-projects. The budget is selected so that if we do not delete any project, all the set-projects are funded, and the budget is exhausted after the last set-project is taken into consideration by the rule. On the other hand, if we remove~$N$ projects corresponding to subsets forming an exact cover in~$\mathcal{I}$, the remaining budget after the rule processes all the set-projects will be exactly the cost of~$p$.

    First, let us show that~$p$ is indeed initially not funded. The~$\score{AV}$ of every set-project~$p_j$,~$j\in[3N]$, is exactly two, the~$\score{AV}$ of the guard-projects is exactly one, while the~$\score{AV}$ of our distinguished project~$p$ is zero. Therefore, all other projects are processed before project~$p$. Moreover, their total cost is~$\sum_{i=1}^{3N} 3\cdot 4^i$ due to the definition of the costs and the fact that every element~$u_i$ appears in exactly three subsets. Therefore, when the distinguished project~$p$ is processed by the rule, the budget is~$\budget - \sum_{i=1}^{3N} = \sum_{i=1}^{3N} - \sum_{i=1}^{3N} = 0$. Hence, the project~$p$ is clearly not affordable when it is assumed.%

    For left-to-right implication, let~$\mathcal{I}$ be a \Yes-instance and let~$C \subset \mathcal{S}$ be an exact cover of~$U$. We delete every set-project~$p_j$ such that~$S_j\in C$, and we claim that~$p$ is now founded, that is, the control is successful. Since~$C$ is an exact cover, we spend exactly~$\sum_{i=1}^{3N} 2\cdot 4^i$ on the set-projects. Consequently, after the last set-project is processed by the rule, the remaining budget is~$\budget - \sum_{i=1}^{3N} 2\cdot 4^i = \sum_{i=1}^{3N} 3\cdot 4^i - \sum_{i=1}^{3N} 2\cdot 4^i = \sum_{i=1}^{3N} (3-2)\cdot 4^i = \sum_{i=1}^{3N} 1\cdot 4^i$. This is one unit of money less than the cost of any guard-project. Therefore, no guard-project is funded, and once the rule processes~$p$, the remaining budget is still~$\sum_{i=1}^{3N} 1\cdot 4^i$
    which is the cost of~$p$, so~$p$ is funded.
    Consequently,~$\mathcal{J}$ is also a \Yes-instance and~$C$ is a solution.

    In the opposite direction, let~$\mathcal{J}$ be a \Yes-instance and~$D$ be a set of deleted projects such that when projects in~$D$ are removed from the instance,~$p$ becomes funded. First, we prove an auxiliary claim that shows that the cost of no~$N-1$ set-projects sums up to exactly~$\cost(p)$.

    \begin{claim}\label{clm:noSmallSetSumUpToCostOfp}
        There is no set~$X\subseteq[3N]$ of size at most~$N-1$ such that~$\sum_{j\in X} \cost(p_j) = \sum_{i=1}^{3N} 1\cdot 4^{i}$.
    \end{claim}
    \begin{claimproof}
        Let~$X \subseteq[3N]$ be an arbitrary set of size at most~$N-1$. The sum~$\sum_{i=1}^{3N} 1\cdot 4^{i}$ can be imagined also as a base-$4$ number with digit~$1$ on positions~$1$ up to~$3N$. By the definition of projects' costs, the cost of each set-project~$p_j$ is a base-$4$ number with digit~$1$ on exactly three positions and with digit~$0$ on all the remaining positions. Moreover, there are exactly three set-projects with digit~$1$ on position~$i$ for every~$i\in[3N]$. Therefore, by the Pigeonhole principle, we obtain that if we represent~$\sum_{j\in X} \cost(p_j)$ as a base-$4$ number, there are at least~$3$ digits with value zero, as there are~$3N$ possible positions, and the set-projects can cover at most~$3(N-1) = 3N - 3$ different positions. However, the target sum requires a value of~$1$ for every digit in base-$4$ representation.
    \end{claimproof}

    First, we show that~$D$ contains no guard-project. For the sake of contradiction, assume that it is the case, that is,~$D$ contains at least one guard-project. Consequently, we removed at most~$N-1$ set-projects. As the budget is big enough to fund all original set-projects, the rule will also fund all the remaining~$2N+1$ set-projects. Let~$\budget'$ be the budget after the last set-project is funded. If~$\budget' < \cost(p)$, we have a clear contradiction with~$D$ being a solution. Hence,~$\budget' \geq \cost(p)$. Assume first that~$\budget' = \cost(p)$. Recall that the original budget~$\budget$ is~$\sum_{i=1}^{3N} 3\cdot4^i$, which is also the sum of the costs of all set-projects. We require that the budget~$\budget'$ is equal to~$\cost(p)$; that is, it must hold for the removed projects that the sum of their costs is~$\cost(p) = \sum_{i=1}^{3N} 1\cdot 4^{i}$, which is not possible by \Cref{clm:noSmallSetSumUpToCostOfp}. Hence,~$\budget' > \cost(p)$, but in this case, the first guard-project not in~$D$ is funded before~$p$ (such a guard-project always exists since~$\numDeleted = N$ and there are~$N+1$ of them), and the distinguished project~$p$ is never funded. This contradicts that~$D$ is a solution; thus, no guard-project is part of~$D$.

    Now, we know that~$D$ consists solely of set-projects. Let~$\budget'$ be the budget after the last non-deleted set-project is funded and let~$\cost(D) = \sum_{p_j\in D} \cost(p_j)$. In the instance with removed projects in~$D$, the budget~$\budget'$ is exactly~$\sum_{i=1}^{3N} 3^\cdot 4^i - (\sum_{i=1}^{3N} 3\cdot 4^i - \cost(D)) = \cost(D)$. Now, if~$\cost(D) < \cost(p)$, we clearly cannot fund~$p$. Hence,~$\cost(D) \geq \cost(p)$. If~$\cost(D) > \cost(p)$, the rule funds a guard-project before~$p$, which exhausts the budget, and~$p$ is not funded. Consequently,~$\cost(D) = \cost(p)$. We set~$C = \{S_j \mid p_j \in D\}$ and claim that it is a solution for~$\mathcal{I}$. It is clearly of the correct size, so suppose that it is not an exact cover; that is, there is an element~$u_i$ such that there are at least two sets~$S_j,S_{j'}\in C$ containing~$u_i$. But then the sum of the corresponding set-projects would be strictly greater than~$\cost(p)$, which is not possible. Therefore,~$C$ is a solution for~$\mathcal{I}$ and~$\mathcal{I}$ is a \Yes-instance.

    For the \emph{destructive} variant, we tweak the construction as follows. Instead of having~$N+1$ guard-projects, we have only one guard-project~$g$ with~$\cost(g) = (\sum_{i=1}^{3N} 1\cdot4^i) + 1$. Next, we set~$\cost(p) = 1$ and we increase the budget by one to~$\budget= (\sum_{i=1}^{3N} 3\cdot 4^i) + 1$. Now, observe that the distinguished project~$p$ is indeed funded, as the \greedyAV rule first processes all set-project whose total cost is~$\sum_{i=1}^{3N} 3\cdot 4^i$. Hence, when the guard-project~$g$ is processed by the rule, the remaining budget is exactly one. Therefore,~$g$ cannot be funded, and there is enough budget to fund~$p$.

    For correctness, assume that~$\mathcal{I}$ is a \Yes-instance and let~$C\subset \mathcal{S}$ be an exact cover of~$U$. We remove every set-project corresponding to a set of~$C$. In the reduced instance, since~$C$ is an exact cover, the remaining budget when~$g$ is processed by the \greedyAV rule is exactly~$(\sum_{i=1}^{3N} 1\cdot4^i) + 1 = \cost(g)$. Therefore,~$g$ is funded, and the budget is exhausted. In the opposite direction, first, observe that the guard-project~$g$ is never part of the solution. For the sake of contradiction, assume that~$g\in D$, for some solution~$D$. Then we have~$\cost(\projects\setminus D) \leq \budget= (\sum_{i=1}^{3N} 3\cdot 4^i) + 1$, and hence, we can fund all projects including~$p$. This contradicts that~$D$ is a solution. Next, let~$B'$ be the remaining budget in the round when the \greedyAV rule processes the guard-project~$g$. If~$B' > \cost(g)$, then~$B' \geq \cost(g) + 1$ and also~$B' \geq \cost(g) + \cost(p)$ and hence, the \greedyAV rule funds both~$g$ and~$p$. Consequently,~$B' \leq \cost(g)$. If~$B' < \cost(g)$, then the guard-project~$g$ is not funded. However, the project~$p$ is clearly funded as~$B' > 1$ since the initial budget is greater than the sum of the costs of all set-projects. Therefore,~$B'$ must be equal to~$\cost(g)$, which can be achieved if and only if the cost of all funded set-projects is exactly~$\sum_{i=i}^{3N} 2\cdot 4^i$. By \Cref{clm:noSmallSetSumUpToCostOfp}, the size of every solution~$D$ is exactly~$N$, and, by the definition of set-projects' costs, the sets corresponding to set-projects in~$D$ form an exact cover of~$U$, finishing the proof.
\end{proof}

The hardness construction from \Cref{thm:AV:CCDC:NPh} requires prices of exponential size. That is, our problems are, from the computational complexity perspective, weakly \NPh. It is natural to ask whether \Cref{thm:AV:CCDC:NPh} can be strengthened to polynomial-size prices or if a pseudopolynomial time algorithm exists for the problem.

Before we resolve this question, we provide a simple property that facilitates argumentation about the input instances. Basically, we can ignore all projects that are, in the initial instance, processed by the \greedyAV rule after the distinguished project~$p$.

\begin{observation}
    \label{thm:AV:removeUnnecessaryProjects}
    Let~$\mathcal{I}$ be an instance of \CCDCs[\$]{\greedyAV} or \DCDCs[\$]{\greedyAV} and~$q\in\projects$ be a project which the \greedyAV rule processes after the distinguished project~$p$. Then, we can remove~$q$ from~$\mathcal{I}$ and continue with the reduced instances.
\end{observation}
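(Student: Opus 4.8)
The plan is to show that the presence or absence of~$q$ can never influence whether the distinguished project~$p$ is funded, and then to conclude that a solution of the control problem may be assumed to leave~$q$ untouched. The structural fact driving everything is that the \greedyAV{} score of a project, $\score{\AV}(p) = |\{v\in\voters\mid p\in A(v)\}|$, depends only on the voters' approval sets and not on which other projects are present in the instance. Hence deleting projects never changes the relative processing order of the surviving ones, and in every subinstance obtained by deleting some set of projects, $q$ is still processed strictly after~$p$ (the global tie-breaking order is likewise fixed). Since, when \greedyAV{} decides about~$p$, the current partial outcome~$W$ and the remaining budget are determined solely by the projects processed before~$p$, and~$q$ is not among them, the fate of~$p$ is the same in the instance containing~$q$ and in the instance without it, provided the same set of other projects is deleted.

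First I would make this precise as a small lemma: fix any $D\subseteq\projects\setminus\{p\}$ and run \greedyAV{} on $(\projects\setminus D,\voters,\budget)$ and on $(\projects\setminus(D\cup\{q\}),\voters,\budget)$ in parallel; by induction on the processing order, the two runs agree on every project up to and including~$p$, because they process the same prefix (all projects whose score exceeds that of~$p$, together with those tied with~$p$ but earlier in the tie-break) in the same order, and~$q$ lies in neither prefix. This yields $p\in\greedyAV(\projects\setminus D,\voters,\budget)$ if and only if $p\in\greedyAV(\projects\setminus(D\cup\{q\}),\voters,\budget)$. Taking $D=\emptyset$ also shows that the precondition ``$p$ is initially not funded'' is preserved when~$q$ is removed.

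Then I would transfer this to the (weighted) control problems. Write~$\mathcal{I}_{-q}$ for~$\mathcal{I}$ with~$q$ deleted, keeping~$p$, $\budget$, $\numDeleted$, and the weights~$\wFn$. If~$D$ is a solution of~$\mathcal{I}_{-q}$, then $D\subseteq\projects\setminus\{p,q\}$, $\sum_{x\in D}\wFn(x)\le\numDeleted$, and by the lemma~$p$ has the same status in $(\projects\setminus D,\voters,\budget)$, so~$D$ is a solution of~$\mathcal{I}$. Conversely, given a solution~$D$ of~$\mathcal{I}$, put $D'=D\setminus\{q\}$; its weight is at most that of~$D$, hence at most~$\numDeleted$, and $\projects\setminus(D'\cup\{q\})=(\projects\setminus\{q\})\setminus D'$, while by the lemma the status of~$p$ is unaffected by whether~$q$ is deleted alongside~$D'$; either way~$D'$ witnesses that~$\mathcal{I}_{-q}$ is a \Yes-instance. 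The destructive case is identical, only replacing ``is funded'' by ``is not funded'' in the statement about~$p$'s status.

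The only mildly delicate point---the ``main obstacle'', though it is minor---is the bookkeeping around tie-breaking: one must be certain that ``$q$ is processed after~$p$'' survives project deletion, which it does exactly because \greedyAV{}'s order is induced by the fixed scores together with a fixed global tie-breaking order, neither of which is affected by removing candidates. Everything else is a routine induction on the round-by-round execution of the rule.
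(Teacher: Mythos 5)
Your proposal is correct and rests on exactly the same key fact as the paper's proof: \greedyAV's processing order is induced by fixed scores and a fixed tie-breaking order, so deletions preserve the relative order, and since~$q$ comes after~$p$, the funded set and remaining budget at the moment~$p$ is processed are unaffected by~$q$'s presence. The paper packages this as an exchange argument (a minimum-weight solution never contains~$q$, since dropping~$q$ from a solution leaves the budget before~$p$ unchanged), whereas you prove the full two-way equivalence of the original and reduced instances directly; your write-up is, if anything, slightly more complete, but it is not a genuinely different route.
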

\begin{proof}
    The \greedyAV rule always processes the projects in non-increasing order by the number of approvals (and with ties resolved according to a given tie-breaking rule). If we remove a project, we do not change the relative order in which the rule processes the projects. Moreover, let~$D$ be a minimum size solution and, for the sake of contradiction, let~$q\in D$. We set~$D' = D\setminus\{q\}$ and claim that~$D'$ is also a solution. Let~$\budget'$ be the budget available just before the \greedyAV rule processes the distinguished project~$p$ if we remove projects from~$D$. Since~$q$ is processed after~$p$ by the \greedyAV rule, returning~$q$ to the instance keeps~$\budget'$ the same. Therefore,~$D'$ is also a solution but is smaller than~$D$. This contradicts the fact that~$D$ is of minimum size, so~$q$ can be safely deleted from the instance.
\end{proof}

In the following result, we show that if all projects are of the same price, then even the weighted variants of our problems can be solved by a simple greedy polynomial-time algorithm.

\begin{proposition}
    \label{thm:AV:CCDC:P:ifUnitPricesAndBinaryWeights}
    \label{thm:AV:DCDC:P:ifUnitPricesAndBinaryWeights}
    If all projects are of the same price, both the \CCDCs[\$]{\greedyAV} problem and the \DCDCs[\$]{\greedyAV} problem can be solved in polynomial time, even if the projects' weights are encoded in binary.
\end{proposition}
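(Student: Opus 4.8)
The plan is to use the fact that, with uniform costs, \greedyAV collapses to a plain top-$k$ selection. Write $c$ for the common cost and set $k := \lfloor \budget/c \rfloor$; since $\cost(\cdot)\le\budget$ for every project, $k\ge 1$. Because every project contributes the same cost, after \greedyAV has funded $j$ projects it funds the next one in its processing order iff $(j+1)c\le\budget$, i.e.\ iff $j\le k-1$; hence \greedyAV funds exactly the first $\min(\numProjects,k)$ projects of its (score, tie-break) order, and a project is funded iff at most $k-1$ projects precede it in that order. By \Cref{thm:AV:removeUnnecessaryProjects} we may further assume that $p$ is the last project in the processing order, so that every surviving project precedes $p$.

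For \CCDCs[\$]{\greedyAV}, let $\numProjects'$ be the number of projects after this preprocessing. As $p$ is initially not funded, $\numProjects' - 1 \ge k$. Deleting a set $D$ with $p\notin D$ removes exactly $|D|$ of the projects preceding $p$, so $p$ becomes funded iff $(\numProjects' - 1) - |D| \le k - 1$, i.e.\ iff $|D| \ge \numProjects' - k$; crucially this condition depends only on $|D|$ and not on which projects $D$ contains. Therefore a minimum-weight successful deletion set simply consists of $\numProjects' - k$ projects of smallest weight (deleting more never helps, since weights are non-negative), which we obtain by sorting the projects by weight. We output \Yes iff the total weight of these lightest $\numProjects' - k$ projects is at most $\numDeleted$; correctness follows directly from the characterization of ``funded'' above.

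For \DCDCs[\$]{\greedyAV} the task is even simpler: with uniform costs no deletion can turn a funded project into a losing one. Indeed, deleting projects other than $p$ can only weakly decrease the number of projects preceding $p$ in the processing order, while leaving $k=\lfloor\budget/c\rfloor$ unchanged, so a project that was funded before the deletion remains funded afterwards. Hence, whenever $p$ is initially funded, the instance is a \No-instance and the algorithm answers \No.

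The argument is elementary; the one point deserving attention is the observation that deleting \emph{any} project preceding $p$ decreases the relevant count by exactly one, which is precisely what makes the ``delete the lightest projects'' greedy optimal for the weighted objective (and, on the destructive side, what guarantees that deletions can never dislodge $p$ from the funded prefix).
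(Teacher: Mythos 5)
Your proof is correct, but it takes a genuinely different and more direct route than the paper. The paper's algorithm is generic: it sorts the surviving projects by weight, tries deleting the lightest $\ell$ projects for every $\ell\in[\numProjects-1]$, simulates \greedyAV each time, and justifies this by an exchange argument (any solution can be transformed, one swap at a time, into a prefix of the weight ordering without increasing total weight or breaking the control goal). You instead exploit unit costs to get a closed-form characterization: with common cost $c$ and $k=\lfloor \budget/c\rfloor$, the funded set is exactly the first $\min(\numProjects,k)$ projects of the processing order, so after the preprocessing of \Cref{thm:AV:removeUnnecessaryProjects} the constructive problem collapses to ``delete at least $\numProjects'-k$ projects of minimum total weight,'' solved by one sort, and the destructive problem is vacuously a \No-instance because deletions only move~$p$ forward in a prefix of fixed length~$k$. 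Your version buys a cleaner explanation of \emph{why} unit prices make the problem easy (the success condition depends only on $|D|$, not on which projects are deleted), plus the stronger structural fact that destructive control by deletion is outright impossible here, which the paper never states explicitly; the paper's version buys a template (weight-sorted prefixes plus simulation) that does not depend on the prefix characterization and so is closer in spirit to the dynamic program used later for unary costs. Two minor points to make explicit: you implicitly use that weights are non-negative when arguing that deleting more than $\numProjects'-k$ projects never helps, and that deleting projects does not change the relative \greedyAV processing order of the survivors (scores are determined by the unchanged voter set); both are also assumed, equally implicitly, by the paper.
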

\begin{proof}
    Let~$E$ be a PB instance reduced with respect to \Cref{thm:AV:removeUnnecessaryProjects} and~$p\in P$ be the distinguished project. If~$\wFn(\projects\setminus\{p\}) \leq \numDeleted$, we can remove all projects processed before the distinguished project~$p$. Therefore, we can directly decide on such an instance. Hence, for the rest of the proof, we assume that~$\wFn(\projects\setminus\{p\}) > \numDeleted$.
    The idea for the rest of the algorithm is very simple: we iteratively try to remove the projects in the order from the smallest to the highest weight as long as the sum of weights of the removed projects is below~$\numDeleted$ or we find a solution. Nevertheless, this approach works only due to the fact that all projects are of the same price.

    More formally, as the first step of the algorithm, we find the order~$p_1,\ldots,p_{m-1},p$ in which the \greedyAV rule processes the projects. Next, we find a permutation~$\sigma\colon [m-1]\to\projects\setminus\{p\}$ such that for every pair of~$i,j\in[m-1]$,~$i < j$, it holds that~$\wFn(\sigma(i)) \leq \wFn(\sigma(j))$ and whenever~$\wFn(\sigma(i) = p_{\ell}) = \wFn(\sigma(j) = p_{\ell'})$, it holds that~$\ell < \ell'$. That is, the permutation is a non-decreasing ordering of the projects according to their weights. Also, observe that the permutation~$\sigma$ is uniquely determined.

    The core of the algorithm is then a simple iteration. Specifically, for every~$\ell$ from one to~$m-1$, the algorithm constructs~$D_{\ell} = \bigcup_{i=1}^{\ell} \sigma(i)$, remove~$D_{\ell}$ from the instance and simulates the rule. If~$D_\ell$ is a solution for the instance and~$\wFn(D_\ell) \leq \numDeleted$, the algorithm returns \Yes. Otherwise, the algorithm continues with another~$\ell$. If no such~$\ell$ leads to a \Yes response, the algorithm returns \No. We claim that~$\mathcal{I}$ is a \Yes-instance if and only if our algorithm returns~\Yes.

    If our algorithm finds a solution, then~$\mathcal{I}$ is clearly a \Yes-instance. Therefore, let us focus on the opposite direction. So, assume that~$\mathcal{I}$ is a \Yes-instance, let~$D\subseteq \projects\setminus\{p\}$ be a solution of size~$\ell=|D|$ and of the minimum weight. If~$D = D_\ell$, then our algorithm necessarily returned \Yes by its definition. Hence, let~$D\not=D_\ell$. Observe that~$\wFn(D) \geq \wFn(D_\ell)$, as~$D_\ell$ is a minimum-weight~$\ell$-sized subset of~$\projects\setminus\{p\}$. We now show an auxiliary claim that states that if all projects are of the same price, we can replace projects in a solution with projects of smaller weight.

    \begin{claim}\label{clm:AV:CCDC:P:ifUnitPricesAndBinaryWeights:replace}
        Let~$X\subseteq \projects\setminus\{p\}$ be a solution,~$x\in X$ be a project in this solution, and~$y\in (\projects\setminus\{p\})\setminus X$ be a projects outside of the solution such that~$\wFn(y) \leq \wFn(x)$. Then, also a set~$Y = (X\setminus\{x\})\cup\{y\}$ is a solution.
    \end{claim}
    \begin{claimproof}
        The size of~$Y$ is~$\wFn(X) - \wFn(x) + \wFn(y)$ and since~$\wFn(y) \leq \wFn(x)$, it holds that~$\wFn(Y) \leq \wFn(X) \leq \numDeleted$. Hence,~$Y$ is of the correct weight. It remains to show that if we delete~$Y$, we will indeed achieve our control goal. Let~$y = p_j$ and~$x = p_{j'}$. First, assume that~$j < j'$, that is, the project~$y$ is processed by the \greedyAV rule before the project~$x$. Let~$\beta$ be the sum of prices of all projects between~$y$ and~$x$ and~$\budget'$ be the budget with~$X\setminus\{x\}$ removed at the beginning of the round in which the \greedyAV rule processes~$x$.
    \end{claimproof}

    First, let~$\wFn(D) > \wFn(D_\ell)$. In this, we get an immediate contradiction with~$D$ being a minimum weights solution, as by \Cref{clm:AV:CCDC:P:ifUnitPricesAndBinaryWeights:replace}, we can replace projects in~$D$ with projects in~$D_\ell$ to obtain a smaller weight solution. Therefore, it must hold that~$\wFn(D) = \wFn(D_\ell)$. But now, we can again use \Cref{clm:AV:CCDC:P:ifUnitPricesAndBinaryWeights:replace} to replace each project that is in~$D$ and not in~$D_\ell$ with a project of the same or smaller weight that is in~$D_\ell$ but not in~$D$. It follows that~$D_\ell$ is also a solution. But our algorithm checked~$D_\ell$; that is, by definition of the algorithm, it returned \Yes. This concludes its correctness.

    For the running time, the \greedyAV rule can be simulated in polynomial time, a permutation~$\sigma$ can be found in~$\Oh{\numProjects\cdot\log\numProjects\cdot\log\numDeleted}$ time, and the set~$D_\ell$ for every~$\ell\in[m-1]$ can be found in~$\Oh{\log r}$ time. As there are~$\Oh{m}$ iterations, obviously, the algorithm can be performed in polynomial time, even if the project weights are encoded in binary. This finishes the proof.
\end{proof}

Indeed, \Cref{thm:AV:CCDC:P:ifUnitPricesAndBinaryWeights} does not provide a complete answer to the raised question about the computational complexity status of our problems, as it requires all projects to be of the same size. But what happens if the prices are not all the same but still only polynomial in the number of projects and the number of voters? Our following result shows that, even for such instances, there is a polynomial-time algorithm. This time, the algorithm is more involved and is based on a dynamic programming technique.

\begin{theorem}
    \label{thm:AV:CCDC:P:ifUnaryPrices}
    \label{thm:AV:DCDC:P:ifUnaryPrices}
    If the costs of the projects are encoded in unary, both \CCDCs[\$]{\greedyAV} and \DCDCs[\$]{\greedyAV} can be solved in polynomial time, even if the projects' weights are encoded in binary.
\end{theorem}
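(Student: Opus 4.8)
The plan is to reduce, via \Cref{thm:AV:removeUnnecessaryProjects}, to instances in which the distinguished project~$p$ is the last one \greedyAV scans; after relabeling, the scan order is $p_1,\dots,p_{m-1},p$. Since deleting projects never changes the relative scan order, in every reduced instance $p$ is still scanned last, so $p$ is funded precisely when the budget still unspent at the moment $p$ is reached is at least $\cost(p)$. Hence the task becomes: is there $D\subseteq\{p_1,\dots,p_{m-1}\}$ with $\wFn(D)\le\numDeleted$ such that running \greedyAV on $\{p_1,\dots,p_{m-1}\}\setminus D$ with budget~$\budget$ spends at most $\budget-\cost(p)$ (the constructive goal) or strictly more than $\budget-\cost(p)$ (the destructive goal)?

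First I would compute, in polynomial time, the scan order of \greedyAV (sort by $\score{\AV}$, breaking ties by the given tie-breaking order), and then solve the reformulated question by dynamic programming. For $i\in\{0,\dots,m-1\}$ and $s\in\{0,\dots,\budget\}$, let $g(i,s)$ be the minimum of $\wFn(D)$ over all $D\subseteq\{p_1,\dots,p_i\}$ such that running \greedyAV on $\{p_1,\dots,p_i\}\setminus D$ with budget~$\budget$ spends exactly~$s$ (and $+\infty$ if no such~$D$ exists). The point that makes $s$ a sufficient state is that when \greedyAV reaches $p_i$, whether it takes $p_i$ is a function only of the amount already spent and of $\cost(p_i)$, not of which earlier projects were chosen. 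This yields the recurrence $g(0,0)=0$, $g(0,s)=+\infty$ for $s>0$, and, for $i\ge 1$, from each entry $g(i-1,s')$ we either (i)~delete $p_i$, moving to state $(i,s')$ with weight increased by $\wFn(p_i)$, or (ii)~keep $p_i$, moving to state $(i,s'+\cost(p_i))$ if $s'+\cost(p_i)\le\budget$ and to $(i,s')$ otherwise, with the weight unchanged; $g(i,\cdot)$ is obtained by taking the entrywise minimum over these options.

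Finally, I would read off the answer from the last layer: the constructive instance is a \Yes-instance iff $\min\{g(m-1,s)\mid 0\le s\le\budget-\cost(p)\}\le\numDeleted$, and the destructive instance is a \Yes-instance iff $\min\{g(m-1,s)\mid \budget-\cost(p)<s\le\budget\}\le\numDeleted$. For the destructive direction it is essential to keep the whole vector $(g(m-1,s))_s$ rather than only the smallest attainable spending, since there we want to force the spending above a threshold and deleting a project may either raise or lower the greedy's spending. For the running time, the table has $\Oh{\numProjects\cdot\budget}$ entries, each computed with a constant number of additions and comparisons of the binary-encoded weights $\wFn(p_i)$, hence in time polynomial in the input size; since the costs are encoded in unary we have $\budget\le\cost(\projects)$, which is polynomial in $\numVoters+\numProjects$, so the whole procedure runs in polynomial time even though the weights are given in binary.

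I expect the main obstacle to be the correctness of the dynamic program, i.e., verifying that $g(i,s)$ as defined really satisfies the stated recurrence. This comes down entirely to the structural fact that \greedyAV's scan order is fixed and its take/skip decision at step~$i$ depends only on the running spent total; this is also the only place where the ``costs in unary'' hypothesis enters (it bounds $\budget$, and hence the table, polynomially), whereas binary weights cause no difficulty because the procedure only ever compares and adds weights.
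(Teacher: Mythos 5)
Your proposal is correct and is essentially the paper's own proof: both run a dynamic program indexed by the position in \greedyAV's (deletion-invariant) scan order and the money spent so far (the paper equivalently tracks the remaining budget $\beta=\budget-s$), storing the minimum deletion weight achieving each state, and both read off the constructive/destructive answer by thresholding the last layer at $\cost(p)$. The key structural observations you isolate—\Cref{thm:AV:removeUnnecessaryProjects} to make $p$ last, and the fact that \greedyAV's take/skip decision depends only on the running spent total—are exactly the ones the paper relies on.
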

\begin{proof}
    \newcommand{\DP}{\ensuremath{\operatorname{DP}}}
    Our algorithm is based on the dynamic programming approach. We first present an algorithm for constructive control and later show what needs to be changed also to solve the destructive variant of control by deleting projects.

    We suppose that~$p_1,\ldots,p_{\numProjects-1},p$ is the order in which the rule processes the projects in the original instance; that is,~$p_1$ is processed the first and~$p_\ell$ is processed just before the distinguished project~$p$. Note that the assumption that~$p$ is processed the last is not in contradiction with the fact that our algorithm works for any tie-breaking order, as by \Cref{thm:AV:removeUnnecessaryProjects}, we can remove all projects that the \greedyAV rule processes after the distinguished project~$p$. Moreover, we can remove all projects with~$\cost(p_j) > \budget$, as such projects cannot be afforded under any condition.

    The central part of the algorithm is to compute a dynamic programming table~$\DP[ j, \beta ]$, where
    \begin{itemize}
        \item~$j\in[m-1]$ is an index of the last processed project, and
        \item~$\beta\in[\budget]$ is a desired remaining budget just before the rule processes a project~$p_{j+1}$.
    \end{itemize}
    We call the pair~$(j,\beta)$ a \emph{signature}. For every signature, the dynamic programming table stores the weight of a minimum-weight \emph{partial solution}~$D_{j,\beta} \subseteq \{p_1,\ldots,p_j\}$ such that if the projects from~$D_{j,\beta}$ are removed, the remaining budget just before the \greedyAV rule processed project~$p_{j+1}$ is exactly~$\beta$. If no such partial solution exists, we store some large value~$\infty$ (in fact, for the algorithm, it is enough to store value greater than~$\numDeleted$).

    The computation is then defined as follows. The basic step is when~$j=1$. Here, we just decide whether~$p_1$ needs to be deleted or not, based on the required remaining budget~$\beta$. Formally, we set the dynamic programming table as follows:
    \[
        \DP[ 1, \beta ] = \begin{cases}
            0 & \text{if } \beta = \budget - \cost(p_1),\\
            \wFn(p_1) & \text{if } \beta = \budget\text{, and}\\
            \infty & \text{otherwise.}
        \end{cases}
    \]

    For every~$j \in [2,m-1]$, the computation of the algorithm is defined as follows.
    \[
        \DP[ j, \beta ] = \begin{cases}
            \min\{ \DP[j-1,\beta] + \wFn(p_j),
            \DP[ j-1, \beta + \cost( p_j ) ] \} &\\
                \phantom{xxxxxxxxxx}
                \text{if }  \DP[j-1,\beta] + \wFn(p_j)\leq \numDeleted \text{ and }%
                \beta + \cost( p_j ) \leq \budget, &
                \\
            \DP[j-1,\beta] &\\
                \phantom{xxxxxxxxxx}
                \text{if } \beta + \cost( p_j ) > \budget,\text{ and}&
                \\
            \DP[ j-1, \beta + \cost( p_j ) ] &\\
                \phantom{xxxxxxxxxx}
                \text{if }  \DP[j-1,\beta] + \wFn(p_j) > \numDeleted . %
        \end{cases}
    \]
    The first case corresponds to a situation where we need to decide whether to include~$p_j$ in a solution or not. In the second case, we cannot fund~$p_j$ anyway, so we need not delete it, and we are only interested in whether the same budget can be achieved just before~$p_j$ is processed. In the last case, we cannot delete~$p_j$, as it would exceed the budget. Hence, the rule must fund~$p_j$.

    In the following claims, we show that the computation is indeed correct. First, we prove that a corresponding partial solution exists whenever the stored value is not infinite.

    \begin{claim}
        Whenever~$\DP[ j, \beta ] = w < \infty$, there exists a set~$D\subseteq \{p_1,\ldots,p_j\}$ such that~$\wFn(D) = w$ and~$\cost(W) = \budget - \beta$, where~$W = \greedyAV(\{p_1,\ldots,p_j\}\setminus D,\voters,\budget)$.
    \end{claim}
    \begin{claimproof}
        We show the property by induction over~$j$. First, let~$j=1$. In this case, by the definition of the computation,~$w$ is either zero or~$\wFn(p_1)$. In the former case,~$\beta = \budget - \cost(p_1)$ and the only possible solution is~$D = \emptyset$. If we do not remove any project, we obtain~$\greedyAV(\{p_1\},\voters,\budget) = \{p_1\} = S$, since~$\cost(p_1) \leq \budget$. Moreover,~$\cost(W) = \budget - \cost(p_1) = \beta$. In the latter case,~$\beta = \budget$. The only possible way to secure this is to delete all projects, that is, setting~$D = \{p_1\}$. Clearly,~$\wFn(D) = \wFn(p_1) = w$, and hence, the computation is correct also in this case.

        Now, let~$j > 1$ and assume that the claim holds for~$j-1$. Let~$\DP[ j, \beta ] = w < \infty$. By the definition of the computation, the value~$w$ is copied to~$\DP[j,\beta]$ as the result of one of the following computation:~$\DP[j-1,\beta]$,~$\DP[j-1,\beta] + \omega(p_j)$, or~$\DP[j-1,\beta+\cost(p_j)] = w$.

        First, let it be the case that~$\DP[j,\beta] = w$ because~$\DP[j-1,\beta] = w$. Then, it must hold that~$\beta + \cost(p_j) > B$. By the induction hypothesis, there exists a set~$D' \subseteq \{p_1,\ldots,p_{j-1}\}$ such that~$\wFn(D') = w$ and~$\cost(\greedyAV(\{p_1,\ldots,p_{j-1}\}\setminus D',V,\budget)) = B - \beta$. Since~$\beta+\cost(p_j) > B$, there is not enough budget to fund~$p_j$ when the \greedyAV rule processes this project. Therefore, also~$\cost(\greedyAV(\{p_1,\ldots,p_{j}\}\setminus D',V,\budget)) = B - \beta$ and~$D'$ is a sought solution.

        Next, assume that~$\DP[j,\beta] = w$ because~$\DP[j-1,\beta] + \wFn(p_j) = w$. Then necessarily~$\DP[j-1,\beta] = w - \wFn(p_j) < \infty$ and therefore, by the induction hypothesis, there exists a set~$D'\subseteq \{p_1,\ldots,p_{j-1}\}$ of weight~$w - \wFn(p_j)$ such that~$\cost(\greedyAV(\{p_1,\ldots,p_{j-1}\}\setminus D',\voters,\budget)) = B - \beta$. If we take~$D = D'\cup\{p_j\}$, we have that~$\wFn(D) = \wFn(D') + \wFn(p_j) = w - \wFn(p_j) + \wFn(p_j) = w$ and~$\cost(\greedyAV(\{p_1,\ldots,p_{j-1},p_j\}\setminus D,\voters,\budget)) = \cost(\greedyAV(\{p_1,\ldots,p_{j-1},p_j\}\setminus (D'\cup\{p_j\}),\voters,\budget)) = \cost(\greedyAV(\{p_1,\ldots,p_{j-1}\}\setminus D',\voters,\budget)) = B - \beta$. Consequently,~$D$ is a solution.

        Finally, assume that we set~$\DP[j,\beta] = w$ because~$\DP[j-1,\beta+\cost(p_j)] = w$. Clearly,~$\beta+\cost(p_j) \leq B$, and therefore, once the project~$p_j$ is processed, the rule funds it. By the induction hypothesis, there is a set~$D'$ such that~$\wFn(D') = w$ and~$\cost(\greedyAV(\{p_1,\ldots,p_{j-1}\}\setminus D'),\voters,\budget) = B - (\beta + \cost(p_j))$. If we take~$D = D'$, we obtain that~$\wFn(D) = w$. Moreover, the project~$p_j$ is not deleted and, hence, it is processed by the rule. As there is enough budget, the rule funds~$p_j$ and we get that~$\cost(\greedyAV(\{p_1,\ldots,p_{j-1},p_j\}\setminus D,\voters,\budget)) = \cost(\greedyAV(\{p_1,\ldots,p_{j-1}\}\setminus D'),\voters,\budget) + \cost(p_j) = B - (\beta + \cost(p_j)) + \cost(p_j) = B - \beta - \cost(p_j) + \cost(p_j) = B - \beta$.
    \end{claimproof}

    To complete the correctness of the algorithm, in the following claim, we show that the stored weight is indeed the minimum possible.

    \begin{claim}
        For every~$j\in[m-1]$ and every~$D \subseteq \{p_1,\ldots,p_j\}$ such that~$\wFn(D) \leq \numDeleted$ and~$\cost(\greedyAV(\{p_1,\ldots,p_j\}\setminus D,\voters,\budget)) \leq B$, we have that~$\DP[ j, \beta ] \leq \wFn(D)$.
    \end{claim}
    \begin{claimproof}
        Again, we show correctness by induction over~$j$. First, let~$j=1$. There are only two possible subsets of~$\{p_1\}$, namely~$\emptyset$ and~$\{p_1\}$. In the former situation, the project~$p_1$ is clearly funded by the \greedyAV rule. Hence, the remaining budget~$\beta$ after~$p_1$ is processed by the rule is~$B - \cost(p_1)$. By the definition, in such case, we set~$\DP[1,\beta] = 0$, which is clearly at most~$\wFn(\emptyset) = 0$. In the latter situation, the project~$p_1$ is deleted; hence, the rule does not fund any projects, and we have the remaining budget~$\beta$ after the rule processes~$p_1$ is~$B$. For such~$\beta$, we set~$\DP[1, B] = \wFn(\{p_1\})$, which is again clearly at most~$\wFn(D) = \wFn(\{p_1\})$. This finishes the basic step of the induction.

        Next, let~$j> 1$ and let the claim hold for~$j-1$,~$D$ be a subset of~$\{p_1,\ldots,p_j\}$ satisfying the conditions in the statement of the claim, and let~$\beta = \budget - \cost(\greedyAV(\{p_1,\ldots,p_j\}\setminus D,\voters,\budget))$. We distinguish two cases: either~$p_j\in D$ or~$p_j\not\in D$. Let us start with the former situation and let~$D' = D \setminus \{p_j\}$. Since~$p_j$ is removed from the instance, it holds that~$\budget - \cost(\greedyAV(\{p_1,\ldots,p_{j-1}\}\setminus D',\voters,\budget)) = \beta$. Now, by the induction hypothesis, the cell~$\DP[j-1,\beta]$ is at most~$\wFn(D) - \wFn(p_j)$. By the definition of the computation, we store in~$\DP[j,\beta]$ value at most~$\DP[j-1,\beta] + \wFn(p_j) \leq \wFn(D) - \wFn(p_j) + \wFn(p_j) = \wFn(D)$. Therefore, in this situation, the claim is true.
        Now, assume that~$p_j\not\in D$. First, suppose that~$p_j$ is funded by the rule. Then the remaining budget before~$p_j$ is processed with~$D$ removed is~$\beta + \cost(p_j)$. By the induction hypothesis, in~$\DP[j-1,\beta + \cost(p_j)]$, we store at most~$\wFn(D)$. Moreover, the cell~$\DP[j,\beta]$ is at most~$\DP[j-1,\beta+\cost(p_j)]$ and henceforth at most~$\wFn(D)$. If the rule does not fund~$p_j$, then the remaining budget when~$p_j$ is processed by the rule is lower than~$\cost(p_j)$. In this case, we set~$\DP[j,\beta] = \DP[j-1,\beta]$, where the right side is at most~$\wFn(D)$ by the induction hypothesis. This finishes the proof.
    \end{claimproof}

    Once all the cells of the dynamic programming table~$\DP$ are correctly computed, we can decide the instances. Specifically, we return \Yes whenever there exists a cell~$\DP[m-1,\beta]$, where~$\beta \geq \cost(p)$, such that~$\DP[m-1,\beta] \leq \numDeleted$. Assume that such a cell exists. Then, by definition, there is a partial solution~$D_{m-1,\beta}$ such that the sum of weights of all projects in~$D_{m-1,\beta}$ is at most~$\numDeleted$ and, after projects in~$D_{m-1,\beta}$ are deleted from the instance, the budget remaining just before~$p$ is processed is~$\beta > \cost(p)$. Hence, the \greedyAV rule necessarily funds~$p$. If no such cell of the table~$\DP$ exists, we return~\No.

    The dynamic programming table has~$\Oh{m\cdot \budget}$ cells, and each cell can be computed in time~$\Oh{\log(\numDeleted)}$. The final check can be done in~$\Oh{\budget}$ time; therefore, the overall running time of the algorithm is~$\Oh{(m\cdot\budget)\cdot\log(\numDeleted) + \budget}$, which, assuming the budget is encoded in unary, is clearly a polynomial-time algorithm, even if the projects' weights are encoded in binary. %

    For the destructive control, the computation of the dynamic programming table is identical. The only difference is in the final step. Now, we are interested whether there exists~$\beta < \cost(p)$ such that~$\DP[m-1,\beta] \leq \numDeleted$. If this is the case, there exists a partial solution~$D_{m-1,\beta}$ such that the sum of weights of the projects in~$D_{m-1,\beta}$ is at most~$\numDeleted$ and, just before the \greedyAV rule processes the project~$p$, the remaining budget is less than~$\cost(p)$. Therefore, the \greedyAV rule does not fund~$p$. If such a cell exists, we return \Yes. Otherwise, we return \No. The running time of the algorithm remains the same as in the constructive case.%
\end{proof}

In the following remark, we discuss the relation between results we developed for the \greedyAV and their counterparts for the \greedyCost rule.

\begin{remark}
    The hardness construction provided in \Cref{thm:AV:CCDC:NPh} and the algorithms from \Cref{thm:AV:CCDC:P:ifUnitPricesAndBinaryWeights} and \Cref{thm:AV:CCAC:P:ifUnaryPrices} also work for the \greedyCost rule. For the former result, one can observe that even under the \greedyCost rule, the property that the set-projects are processed before the guard-projects, and that the guard-projects are processed before the distinguished project~$p$ is preserved. This comes from the fact that set-projects are approved by exactly two voters, guard-projects by exactly one voter, and~$p$ by no voter. Moreover, no set-project is more expensive than any guard-project. The algorithms require that the relative ordering of the projects is not affected by deletions. This is clearly preserved also in the \greedyCost rule.%
\end{remark}

The outcome of this section is that, even though in the full generality control under the \greedyAV and \greedyCost rules is intractable, for practical instance, we have an efficient algorithm. This implies that for real-life elections, performance measures based on project control can be computed efficiently. In fact, we even implemented the algorithm from \Cref{thm:AV:CCDC:P:ifUnaryPrices} and used it in our experimental analysis.

\subsection{\phragmen}

Now, we turn our attention to the \phragmen rule. Here, the situation is significantly less positive. Specifically, in the following theorem, we show that for this rule, it is \NPh to decide whether successful control is possible, even if the input instance is unweighted and all projects are of the same cost.

The idea behind the construction is that we have one project for every set of \RXthreeCs instance and a lot of direct competitors of the distinguished project~$p$. The set-projects have significantly higher support than~$p$ or its competitors, and, moreover, the competitors of~$p$ share their voters with the set-projects. Hence, all the set-projects are always funded before the first project of a different type may be funded. These set-projects exhausts most of the budget, and, unless every voter approving a competitor of~$p$ approves a funded set-project, the remainder of the budget is spend on the competitor of~$p$, which is, consequently, not funded.

\begin{theorem}
    \label{thm:Phragmen:CCDC:NPh}
    \CCDCs{\phragmen} is \NPc, even if the projects are of unit cost.
\end{theorem}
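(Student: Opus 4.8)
The plan is to place both problems in \NP{} by a standard guess‑and‑check, and to prove \NPhness{} by a reduction from the (pairwise‑almost‑disjoint) strengthening of \RXthreeCs{} recalled above. Membership is routine: a nondeterministic algorithm guesses the set $D$ of at most $\numDeleted$ projects to delete and then simulates \phragmen{} on $(\projects\setminus D,\voters,\budget)$. Although \phragmen{} is phrased as a continuous process, each of its at most $|\projects|$ rounds ends at the first instant at which the accounts of the supporters of some not‑yet‑funded project reach that project's cost, and both this instant and the project funded at it are computable in polynomial time from the current balances; after the simulation we simply check whether $p$ is funded.

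For the reduction I would start from an instance $(U,\mathcal{S})$ of \RXthreeCs{} with $|U|=|\mathcal{S}|=3N$ in which every two sets intersect in at most one element, and build a unit‑cost \phragmen{} instance as follows. For each $S_j\in\mathcal{S}$ I add a \emph{set-project} $c_j$, and I equip the set-projects with pairwise disjoint private blocks of voters of a common size, large enough that all set-projects receive the same score and this score exceeds the score of any other project; since, with unit costs, \phragmen{} reaches a project's cost the sooner the more fresh supporters it has, this forces \phragmen{} to process all surviving set-projects before funding any project of another type, \emph{for every tie-breaking order}. For each element $u_i\in U$ I add a \emph{competitor} $q_i$ of $p$ whose supporters it shares with the set-projects of the sets containing $u_i$, so that funding those set-projects zeroes the corresponding supporters' accounts. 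Finally I add the distinguished project $p$ with a small private supporter set. I would choose the budget $\budget$ so that the set-projects consume all but (essentially) one unit of it---so \phragmen{} funds exactly one further project after it is done with the set-projects---and choose $\numDeleted$ and $\budget$ together so that deleting too few projects leaves the set-projects consuming the whole budget (whereupon $p$ and all competitors are removed by the budget-pruning step) while $\numDeleted$ bounds the deletions from above; this pins the deletions down to a sub-family of $\mathcal{S}$ of the intended size.

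Correctness would then go as follows. Given an exact cover $\mathcal{C}$, delete the corresponding projects; once \phragmen{} has finished with the surviving set-projects, the cover property forces every competitor $q_i$ to have \emph{all} of its supporters drained by already-funded set-projects at sufficiently late times, so no $q_i$ can reach its cost before $p$, and the last unit of budget is spent on $p$. Conversely, if some deletion makes $p$ funded, then no surviving competitor reaches its cost before $p$, which forces every competitor to have all its supporters drained, i.e., the deleted (equivalently, the surviving) sets ``hit'' every element; since each element lies in exactly three sets and the number of deletions is fixed, this cover must in fact be exact. The same construction, with $p$'s supporter set and the guard budget slack tweaked, should give the destructive variant.

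I expect the technical heart of the proof to be the \emph{quantitative} calibration of \phragmen's continuous money flow: one must pick the private-block sizes, the competitor and $p$ supports, and the budget so that simultaneously the set-projects are handled first under every tie-breaking order; the funding of the set-projects depletes the competitors' accounts \emph{deeply enough} that a competitor with an undrained supporter still beats $p$ to its cost whereas one with all supporters drained does not; and the single possible tie at the final round is broken consistently with $p$ winning in precisely the intended case. The second of these is the delicate one, because the highly popular set-projects fund very early, so a naive construction makes ``drained'' essentially indistinguishable from ``undrained''; the resets therefore have to be engineered---e.g.\ through voters shared between set-projects whose underlying sets intersect, using the pairwise‑intersection bound---so that they bite at times that are non‑negligible compared to when $p$ and the competitors would otherwise fund. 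Once these inequalities are fixed, the two implications above become routine bookkeeping, and the addition variants treated in the following theorems follow by analogous modifications of the same gadget.
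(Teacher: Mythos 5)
Your reduction has the same architecture as the paper's: set-projects in bijection with $\mathcal{S}$ whose support is high enough that \phragmen{} must exhaust them first under every tie-breaking order, per-element competitors whose supporters are shared with the set-projects containing that element, a budget that leaves exactly one unit after the set-projects, and a deletion bound that pins any successful deletion to exactly $2N$ set-projects, so that the $N$ survivors must hit every element and hence (by counting, $N$ sets of size $3$ over $3N$ elements) form an exact cover. The argument for membership in \NP{} is likewise the standard guess-and-simulate one.

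The only real gap is the step you yourself flag as the technical heart, and it should be closed rather than deferred: the worry that ``drained'' and ``undrained'' supporters become indistinguishable because the set-projects fund so early dissolves once the competitors and $p$ are given \emph{single} dedicated supporters. The paper attaches $N$ single-supporter guard-projects $g_i^1,\dots,g_i^N$ to each element $u_i$ (voter $u_i^\ell$ approves $g_i^\ell$ and the three set-projects containing $u_i$), gives $p$ one private voter $v$, and sets unit costs, $\budget=N+1$, $\numDeleted=2N$. Then a guard whose supporter was reset at any time $t_0>0$ cannot be bought before time $t_0+1$, whereas $p$ is bought at time exactly $1$; a guard whose supporter was never reset is bought at time $1$ and beats $p$ by the tie-breaking order. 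The dichotomy is thus the strict inequality $1<t_0+1$, valid for every positive $t_0$ however early the set-projects fund; no calibration of reset times and no use of the pairwise-intersection property of \RXthreeCs{} is needed. This also makes your private voter blocks for the set-projects superfluous---replicating the element-voters $N$ times already gives every set-project $3N$ supporters versus one for each guard and for $p$, and it avoids the mild complication that funding one set-project perturbs the shared support of the others. With these concrete choices your two implications do reduce to the routine bookkeeping you describe, so the proposal is essentially the paper's proof with its key quantitative gadget left uninstantiated.
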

\begin{proof}
    As \phragmen is a polynomial-time computable rule, both problems are trivially in \NP; given a solution~$S$, we can simulate the rule with the projects of~$S$ removed and check whether our control goal is achieved. Therefore, let us focus on the hardness part now. We again reduce from the \RXthreeC problem.

    Let~$\mathcal{I}=(U,\mathcal{S})$ be an instance of the \RXthreeCs problem. We construct an instance of the \CCDCs{\phragmen} problem as follows (see \Cref{fig:Phragmen:CCAC:NPh:construction} for an illustration of the construction). The set of projects contains one \emph{set-project}~$p_j$ for every set~$S_j\in\mathcal{S}$,~$3N^2$ \emph{guard-projects}~$g_1^1,\ldots,g_1^N,g_2^1,\ldots,g_{3N}^N$, and our distinguished project~$p$. The tie-breaking rule is so that any project is preferred before the distinguished project~$p$. The set of voters is as follows. There is a single voter~$v$ approving solely the distinguished project~$p$. Then, each voter~$u_i^\ell$ approves the guard-project~$g_i^\ell$ and all (three) set-projects~$p_j$ such that~$u_i\in S_j$. To finalize the construction, we set~$\budget = N+1$,~$\numDeleted = 2N$, and the cost of each project is one.

     \begin{figure*}
        \centering\renewcommand{\arraystretch}{1.2}
        \begin{tabular}{c|cccc|cccccccccc|c}
                &~$p_1$ &~$p_2$ &~$\cdots$ &~$p_{3N}$
                &~$g_1^1$ &~$\cdots$ &~$g_1^N$ &~$g_2^1$ &~$\cdots$ &~$g_2^N$ &~$\cdots$ &~$g_{3N}^1$ &~$\cdots$ &~$g_{3N}^N$
                &~$p$ \\\hline
           ~$v$ & & & & & & & & & & & & & & & \checkmark\\\hline
           ~$u_1^1$  &
                & \checkmark & & & \checkmark & & & & & & & & & \\
           ~$\vdots$ &
                &~$\vdots$ & & & &~$\ddots$ & & & & & & & & & \\
           ~$u_1^N$  &
                & \checkmark & & & & & \checkmark & & & & & & & & \\\hline
           ~$u_2^1$  &
                \checkmark & & & \checkmark & & & & \checkmark & & & & & & & \\
           ~$\vdots$  &
               ~$\vdots$ & & &~$\vdots$ & & & & &~$\ddots$ & & & & & \\
           ~$u_2^N$ &
                \checkmark & & & \checkmark & & & & & & \checkmark & & & & & \\\hline
           ~$\vdots$  &
                 & & & & & & & & & &~$\ddots$ & & & \\\hline
           ~$u_{3N}^1$  &
                \checkmark & \checkmark & & & & & & & & & & \checkmark & & & \\
           ~$\vdots$  &
               ~$\vdots$ &~$\vdots$ & & & & & & & & & & &~$\ddots$ & & \\
           ~$u_{3N}^N$  &
                \checkmark & \checkmark & & & & & & & & & & & & \checkmark & \\
        \end{tabular}
        \caption{An illustration of the election instance created in the proof of \Cref{thm:Phragmen:CCDC:NPh}.}
        \label{fig:Phragmen:CCAC:NPh:construction}
    \end{figure*}

    First, we show that the distinguished project~$p$ is not funded initially. Observe that each set-project~$p_j$ can be funded in time~$1/3N$. Consequently, at the latest in time~$(N+1)/3N$,~$N+1$ set-projects are funded by the rule. Currently, the support of project~$p$ is only~$(N+1)/3N < 1$. Therefore, the distinguished project~$p$ was not funded until now, and the budget is now exhausted. Consequently, no other project will be funded by the rule.

    Next, assume that~$\mathcal{I}$ is a \Yes-instance, and~$C\subseteq \mathcal{S}$ is an exact cover of~$U$. By the definition of the \RXthreeCs problem,~$C$ is of size exactly~$N$. We create a set~$D$ containing all projects~$p_j$ such that the corresponding set~$S_j\in\mathcal{S}$ is \emph{not} element of~$C$. Now, we show that~$D$ is a solution for~$\mathcal{J}$. The size of~$D$ is exactly~$|\mathcal{S}| - |C| = 3N - N = 2N$, which is correct. What remains to show is that if we remove projects in~$D$, the project~$p$ gets funded. Since~$C$ is an exact cover of~$U$, after we remove projects of~$D$ from the instance, each voter~$u_i^\ell$ is approving exactly two projects: a guard-project~$g_i^\ell$ and a set-project~$p_j$ such that~$u_i\in S_j$. Moreover, each set-project is approved by exactly~$3N$ voters. By this property, we have that at time~$1/3N$, all~$N$ remaining set-projects are funded. This decreases the overall budget to one and also decreases the personal budget of all voters~$u_i^\ell$ to zero. Consequently, the first guard project may be funded in time~$1/3N + 1$. However, the budget of voter~$v$ would now be~$1/3N + 1 > 1$. Therefore, the rule must fund the project~$p$ before this time.

    In the opposite direction, let~$\mathcal{J}$ be a \Yes-instance, and~$D\subseteq \projects\setminus\{p\}$ be a set of projects such that if we remove~$D$ from the instance, the \phragmen rule funds our distinguished project~$p$. First, we show that the size of~$D$ is exactly~$2N$. For the sake of contradiction, assume that~$|D| < 2N$. Then, after removing~$D$ from the instance, the instance contains at least~$N+1$ set-projects. However, at the latest, at~$(N+1)/3N$, exactly~$N+1$ of them are funded. At this time, the personal budget of the voter~$v$ is only~$(N+1)/3N < 1$, and therefore, the project~$p$ cannot be funded now or earlier. Hence, the funded set-projects exhaust the budget, and~$p$ is never funded. This contradicts the~$D$ is a solution. Thus,~$|D| = 2N$. By the same argumentation, we get that~$D$ does not contain any guard-project; if it would be the case, then we would again have~$N+1$ remaining set-projects which exhaust the overall budget before the voter~$v$ has enough personal budget to fund~$p$. Next, we show that there is no~$u_i^\ell$ such that~$A(u_i^\ell) = \{g_i^\ell\}$. For the sake of contradiction, let such a voter exist. At the latest, at time~$N/3N = 1/3$, the \phragmen rule funds all set-project. Recall that once a set-project approved by a voter~$u_{i'}^{\ell'}$ is funded, the guard-project~$g_{i'}^{\ell'}$ loses all its support. However, we considered that there exists a voter~$u_i^\ell$ not approving any of the set-projects. Consequently, the guard-project~$g_i^j$ never loses its support and, at time~$1$, it can be funded. At this time, the distinguished project~$p$ may also be funded. However, due to the tie-breaking, the rule first funds the guard-project~$g_i^\ell$. This exhausts the whole budget, and therefore, we have a contradiction with~$D$ being a solution. Now it is easy to see that if we take~$C = \{ S_j\in \mathcal{S} \mid p_j \not\in D \}$, we obtain a solution for~$\mathcal{I}$. This finishes the proof for constructive control.
\end{proof}

\begin{theorem}
    \label{thm:Phragmen:DCDC:NPh}
    \DCDCs{\phragmen} is \NPc, even if the projects are of unit cost.
\end{theorem}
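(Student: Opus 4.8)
The plan is to reduce from \RXthreeCs, reusing the skeleton of the construction in the proof of \Cref{thm:Phragmen:CCDC:NPh}: a set-project~$p_j$ for every set~$S_j\in\mathcal{S}$, approved by the element-voters~$u_i^\ell$ with~$u_i\in S_j$; a single voter~$v$ approving solely the distinguished project~$p$; guard-projects~$g_i^\ell$ approved only by~$u_i^\ell$; unit costs; and a tie-breaking rule that prefers every other project over~$p$. (Membership in \NP is again trivial: guess the deleted set and simulate \phragmen.) The essential difference from the constructive case is that the budget must now be chosen so that~$p$ \emph{is} funded in the starting instance --- the natural choice is one unit above the total cost of the set-projects (possibly plus a small dedicated block of tie-break-preferred ``eater'' projects whose role is to absorb surplus budget). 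Then, in the starting instance, all~$3N$ set-projects are funded well before time~$1$, a single unit of budget survives, and since~$v$ has never been reset, the balance of~$v$ reaches~$\cost(p)=1$ at time~$1$ and~$p$ is taken. The remaining design freedom --- the number of guard copies per element, the budget surplus, and the deletion bound (morally~$\numDeleted=2N$) --- is used to arrange that the \emph{only} way to eliminate that last surviving unit before~$p$'s turn is to delete a family of set-projects whose complement is an exact cover.

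Forward direction: given an exact cover~$C$, delete the~$2N$ set-projects outside~$C$. The~$N$ surviving set-projects have pairwise disjoint voter sets, so each reaches its cost at the same instant, they are funded in immediate succession, every element-voter is emptied exactly once (so no guard is armed \emph{early}), but now the budget left after the set-projects is too small: the tie-break-preferred eater block (equivalently, one extra guard freed by the emptying) ripens at time~$1$ and soaks up the last unit, so~$p$ is removed exactly when the budget is exhausted. Backward direction, which is the crux: show that any deletion~$D$ with~$|D|\le\numDeleted$ for which~$p$ is removed from the outcome must consist of exactly~$2N$ set-projects whose complement is an exact cover. One argues, in order, that~(i)~$D$ contains no guard and no eater, (ii)~$D$ consists only of set-projects and has size exactly~$2N$, and (iii)~the~$N$ surviving set-projects are pairwise disjoint --- the step where any two overlapping survivors, or any element left uncovered by the survivors, yields a configuration in which at time~$1$ enough budget remains for~$p$ (either because overlapping survivors interfere and strictly fewer than~$N$ units are committed to set-projects, or because the arithmetic of the freed-guard block never exactly matches the freed slack). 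Combining, the~$N$ surviving set-projects cover all~$3N$ elements, each exactly once, i.e.\ they form an exact cover.

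The main obstacle is precisely the \emph{robustness} of this backward direction, together with the continuous-time accounting of \phragmen. Unlike the constructive setting, here one must rule out ``cheap'' partial deletions: removing, say, the three set-projects containing a single element frees a small block of guards and a small amount of budget slack, and it must be shown that the freed guards re-absorb at most their own freed slack, leaving~$p$'s unit intact, whereas an exact-cover-complement deletion frees exactly one unit more than the guards (or eaters) it arms can re-absorb. Getting this balance right forces a careful coupled choice of the number of guard copies per element and of the budget surplus, and verifying it requires tracking, for every candidate ``bad'' deletion, the exact instants at which set-projects and guards reach their costs and the order in which \phragmen funds the simultaneously-ripe projects at time~$1$; the destructive variant (where we merely need to block~$p$) makes this surplus-budget bookkeeping, rather than the existence of a single good deletion, the heart of the argument.
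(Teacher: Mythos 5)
There is a genuine gap: the skeleton you propose cannot be tuned to work, and the paper's destructive proof is in fact a substantially different construction, not a tweak of the constructive one. In your skeleton each guard~$g_i^\ell$ is approved by the single voter~$u_i^\ell$, so after the set-project phase it ripens at time $(\text{last reset of }u_i^\ell)+1$; since~$p$ ripens at time~$1$ via its lone supporter~$v$, a guard can compete with~$p$ only if its supporter is \emph{never} reset, i.e., only if element~$u_i$ is left uncovered by the surviving set-projects. The only signal your gadget can read is therefore the binary ``covered vs.\ uncovered,'' which cannot encode ``the survivors form an exact cover.'' A budget count makes this fatal: with budget $3N+1$ and $d$ deleted set-projects, $d+1$ units remain at time~$1$, while the number of uncovered elements (hence of armed guards, at one guard per element) is at most~$d$ --- so one unit always survives and~$p$ is \emph{never} blocked. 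Raising the guard multiplicity to~$N$ per element (as in the constructive gadget) overshoots in the opposite direction: deleting just the three sets containing a single element arms~$N$ guards against only~$3$ freed units plus the surplus, so~$p$ is blocked by a trivial $3$-project deletion that has nothing to do with exact covers. No choice of multiplicity, surplus, or ``eater'' block escapes this dichotomy, because eaters too would need to distinguish ``reset exactly once, as early as possible'' from ``reset later or more often,'' and a single-supporter project cannot do that.

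The paper's construction resolves exactly this by making the guards' ripening time a \emph{continuous} function of the reset schedule rather than a binary flag. There is one guard per element approved by six of eight element-voters; each set-project additionally has six dedicated voters~$x_j^1,\ldots,x_j^6$ guaranteeing that \emph{all} set-projects are funded by time $1/6$ no matter what is deleted; and~$p$ has five dedicated voters, fixing its funding time at $T_p=1/5$. A guard then accumulates exactly $6(T_p-T_0)=1$ by time $T_p$ (and is preferred to~$p$ by tie-breaking) if and only if its six supporters are reset exactly once, at the earliest possible instant $T_0=1/30$ --- which happens for all guards simultaneously iff the surviving set-projects are pairwise disjoint, i.e., iff the deletion is the complement of an exact cover; the budget $4N$ is then exactly exhausted by the $N$ survivors and the $3N$ guards. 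Your proposal correctly identifies that the backward direction and the surplus bookkeeping are the crux, but it does not supply the mechanism (multi-supporter guards, per-set dedicated voters, and a sub-unit funding time for~$p$) without which that bookkeeping provably cannot be made to come out right.
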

\begin{proof}
    Again,~$\DCDCs{\phragmen}\in\NP$ is obvious, so we discuss only the \NPhness part of the proof. The reduction is again from the \RXthreeC problem, but the construction is very different from all the previous ones. Specifically, given an instance~$\mathcal{I}=(\mathcal{S},U)$ of \RXthreeCs, we create an equivalent instance~$\mathcal{J}$ of \DCDCs{\phragmen} as follow.

   We have three types of projects. First, we have~$3N$ \emph{set-projects}~$p_1,\ldots,p_{3N}$ which are in one-to-one correspondence to the sets of~$\mathcal{S}$. Next, we have~$3N$ \emph{guard-projects}~$g_1,\ldots,g_{3N}$. Finally, there is the distinguished project~$p$ we aim to remove from the set of winning projects. The set of voters contains
   \begin{itemize}
       \item for every~$i\in[3N]$, eight \emph{element-voters}~$u_i^1,\ldots,u_i^8$ corresponding to an element~$u_i\in U$ and approving a project~$p_j$ if and only of~$u_i\in S_j$. Moreover, the element-voters~$u_i^1,\ldots,u_i^j$ additionally approve the guard-project~$g_i$.
       \item for every~$j\in[3N]$, six voters~$x_j^1,\ldots,x_j^6$ approving only the set-project~$p_j$.
       \item five voters~$v_1,\ldots,v_5$ approving only the distinguished project~$p$.
   \end{itemize}
   To finalize the construction, we set~$\budget = 4N$,~$\numDeleted = 2N$, and the tie-breaking to be~$p_1\succ p_2 \succ \cdots p_{3N} \succ g_1 \succ \cdots \succ g_{3N} \succ p$. The idea behind the construction is that every guard-project has all supporters in common with exactly three set-projects, and the time~$T_p$ in which the distinguished project~$p$ is funded is fixed. Moreover, \emph{all} set-projects are funded before any guard project and also before the distinguished project. Because of this, whenever a set-project is funded, then all voters supporting some guard-projects lose their whole support. The numbers of voters are designed so that a guard-project gets funded if and only if its supporters lose their support at most once and in the concrete time step. Otherwise, the distinguished project is funded instead.

   We first show that initially, the distinguished project~$p$ is indeed funded. Let us analyze how the \phragmen rule proceeds. In time~$T_0 = 1/(6+3\cdot 8)$, first set-projects can be funded. At this time, the overall balance of supporters of any guard-project~$g_j$ is~$6/30 < 1$, and for the distinguished project~$p$, it is~$5/30 < 1$. We buy at most~$N$ set-projects, and the process continues. In time~$T_N = 1/6$, the voters~$x_j^1,\ldots,x_j^6$ have enough money to fund each set-project~$p_j$, and therefore, all the set-projects are funded, and the remaining budget~$\budget' = \budget - 3N = N$. The sum of balances of supporters of~$p$ is~$5/6$, and for any guard-project~$g_\ell$,~$\ell\in[3N]$, it is strictly smaller than~$6\cdot 1/6 = 1$, because each of them lost the whole support exactly three times. Finally, in time~$T_p = 1/5$, the distinguished project~$p$ may be funded. At this time, the sum of balances of voters~$u_j^1,\ldots,u_j^6$ is strictly smaller than~$6\cdot T_p - 6\cdot T_0 = 6/5 - 6/30 = 5/5 = 1$ because each of them lost the whole support exactly three times, and the last time in which they lost support the last time is strictly greater than~$T_0$. Hence,~$p$ gets funded as the budget still allows to fund~$N$ projects at this point.

   Now, assume that~$\mathcal{I}$ is a \Yes-instance, and~$C\subseteq \mathcal{S}$ is an exact cover of~$U$. We set~$D = \{ p_j \mid S_j \not\in C \}$; that is, we remove all set-projects whose corresponding sets are not in the exact cover~$C$. We claim that~$D$ is a solution for~$\mathcal{J}$. Again, we analyze the process. The first time a project can be funded is~$T_0 = 1/30$. At this time, we fund \emph{all} set-projects that were not deleted (if it would not be the case, then there is a pair of set-projects supported by the same element-voters, which contradicts that~$C$ is an exact cover), and the remaining budget is~$\budget' = \budget - N = 3N$. This resets the balance of all element-voters to zero. Next, at time~$T_p = 1/5$, we can fund the distinguished project~$p$. However, the support of each guard-project~$g_j$,~$j\in[3N]$, is~$6\cdot T_p - 6\cdot T_0 = 6/5 - 6/30 = 1$ and due to the tie-breaking order and the fact that their sets of supporters are disjoint, they are funded before~$p$, which exhausts the budget and~$p$ is therefore not funded. Hence,~$\mathcal{J}$ is also a \Yes-instance, and~$D$ is a solution.

   In the opposite direction, assume that~$\mathcal{J}$ is a \Yes-instance and~$D\subseteq \projects\setminus\{p\}$ is a solution. We show that in every solution~$D$, there is exactly one round ($T_0 = 1/30$, to be precise) in which the set-projects are funded. For the sake of contradiction, assume that it is not the case, and there are at least two time-steps~$T_0$ and~$T_1$,~$T_0 < T_1$, in which the \phragmen rule funds some set-projects. Then, some guard-projects lose their whole support at time~$T_1 > T_0$, and therefore, by the previous argumentation, the sum of balances of their supporters at time~$T_p = 1/5$ is strictly smaller than~$6\cdot T_p - 6\cdot T_0 = 1$, meaning that they cannot be funded at time~$T_p$. Now, we need to show that, after all set-projects are funded, the remaining budget is big enough to fund~$p$. Assume first that~$|(P\setminus D)\cap\{p_1,\ldots,p_{3N}\}| = N$, that is, we removed exactly~$2N$ set-projects. Then, after all set-projects are funded, the remaining budget is~$4N - N$. Additionally, at least three guard-projects cannot be funded in time~$T_p$, so before~$p$ is assumed, the remaining budget is at least~$3N - 3N + 3 = 3$, meaning that~$p$ is funded. This contradicts that~$D$ is a solution. Therefore, we must have~$|(P\setminus D)\cap\{p_1,\ldots,p_{3N}\}| \geq N$. Let~$\ell$ be the number of set-projects funded at some time~$t > T_0$ and~$k = |(P\setminus D)\cap\{p_1,\ldots,p_{3N}\}| - N$. It holds that~$\ell \geq k$. The remaining budget after all set-projects are funded is~$4N - N - k = 3N - k$, and each of~$\ell$ set-projects funded after~$T_0$ resets support of at least two unique guard-projects, meaning that at time~$T_p$, the remaining budget is~$4N - N - k - (3N - 2\cdot\ell) = -k + 2\cdot\ell > 0$. That is,~$p$ gets funded, and we have a contradiction with~$D$ being a solution. As we exhausted all possibilities, there must be exactly one round~$T_0$ in which set-projects are funded. This can happen if and only if, for each pair of set-projects, the sets of their supporters are disjoint.
\end{proof}

\subsection{\equalShares}

To finalize the complexity picture, we show that for the \equalShares rule, the control by adding or deleting projects is intractable even in the simplest setting where all projects are of the same cost and are unweighted. To prove this result, we exploit a reduction of \citet{JaneczkoF2023}, who showed that it is \NPh to decide whether, for a given PB instance, the \equalShares rule outputs the same outcome for every tie-breaking order.

\begin{theorem}
    \label{thm:MES:CCDC:NPc}\label{thm:MES:DCDC:NPc}
    Both \CCDCs{\equalShares} and \DCDCs{\equalShares} are \NPc, even if the projects are of unit cost.
\end{theorem}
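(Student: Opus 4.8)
Membership in \NP is straightforward: once \equalShares has a fixed internal tie-breaking order it is polynomial-time computable, so we guess the set~$D$ of at most~$\numDeleted$ projects to delete, run the rule on~$(\projects\setminus D,\voters,\budget)$, and check whether~$p$ is funded (constructive case) or is not funded (destructive case).

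For the hardness, the plan is to recycle the reduction of \citet{JaneczkoF2023}. That reduction turns an instance~$\mathcal{I}$ of an \NP-hard problem into a \emph{unit-cost} PB instance~$E$ together with a distinguished project~$p$ such that, under the canonical (say, lexicographically smallest) tie-breaking order, \equalShares does not fund~$p$, whereas~$\mathcal{I}$ is a \Yes-instance if and only if \emph{some} tie-breaking order makes \equalShares fund~$p$. What makes this reusable for control is the restricted role of the tie-breaking order there: it influences the outcome only through a family of \emph{choice gadgets}, where each gadget is a group of mutually cloned projects (identical approval sets, becoming~$q$-affordable at the same value of~$q$) of which \equalShares funds exactly one, the tie-breaking order deciding which one; nothing else depends on the order. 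Thus a tie-breaking order is, for this instance, just a tuple of per-gadget choices, and the condition above reads: $\mathcal{I}$ is a \Yes-instance if and only if some tuple of gadget choices makes~$p$ funded.

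I would then take this very instance~$E$, declare the rule's internal tie-breaking order to be the canonical one (so that~$p\notin\equalShares(E)$, matching the precondition of constructive control), and set~$\numDeleted$ to the number of choice gadgets. Correctness rests on two observations, to be checked against the construction. First, deleting one project of a gadget has exactly the same effect on the run of \equalShares as breaking that gadget's tie in favour of the surviving clone: since the deleted clone shares all of its supporters with the one that remains, its removal changes neither the value of~$q$ at which the gadget resolves, nor any voter's balance afterwards, nor the rule's behaviour on any other project. Hence deleting one project per gadget realizes an arbitrary tuple of gadget choices, and conversely every such tuple is realized by such a deletion set. Second, a minimum-size successful deletion set may be assumed to have exactly this form: deleting both clones of a gadget is never helpful, since keeping either clone and deleting only the other forces the same gadget outcome at a smaller deletion cost, and a gadget left untouched is simply resolved by the canonical choice, which is itself legal; so ``at most~$\numDeleted$ deletions'' is no weaker than ``one deletion per gadget''. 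Putting the two directions together, $\mathcal{I}$ is a \Yes-instance if and only if deleting at most~$\numDeleted$ projects makes~$p$ funded, which gives \NP-hardness of \CCDCs{\equalShares}; all costs are unit, so the stated restriction holds. For \DCDCs{\equalShares} I would run the mirror argument on the variant of the construction in which~$p$ is funded precisely when the gadget choices \emph{fail} to witness~$\mathcal{I}$ (choosing the canonical order so that~$p\in\equalShares(E)$ initially); then at most~$\numDeleted$ deletions can unfund~$p$ if and only if some tuple of gadget choices is a witness, i.e., if and only if~$\mathcal{I}$ is a \Yes-instance.

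The delicate point is the first observation above: one has to verify, against the concrete construction of \citet{JaneczkoF2023}, that the tie-breaking order really influences \equalShares only through the clone-gadgets, and that deleting a ``losing'' clone is genuinely inert for the rest of the execution --- in particular, that it never frees up budget that would let some unrelated project, or~$p$ itself in the destructive variant, flip its status. A secondary technical point is pinning down the correct \emph{initial} funding status of~$p$ (unfunded for the constructive problem, funded for the destructive one) without perturbing the rest of the reduction, which may call for a small auxiliary gadget and a deliberate choice of the canonical tie-breaking order.
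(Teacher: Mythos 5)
Your high-level strategy---reusing the tie-breaking-robustness reduction of \citet{JaneczkoF2023} and simulating a tie-breaking choice by deleting the projects that the chosen order would \emph{not} select---is essentially the paper's strategy too (the paper sets $\numDeleted=2N$ and the intended solutions delete the $2N$ set-projects outside the exact cover). However, there are two genuine gaps in your argument.

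First, your key structural claim about the construction is wrong. The projects whose selection the tie-breaking order governs are the $\mathcal{S}$-projects, which are in one-to-one correspondence with the $3$-element sets of the \RXthreeCs{} instance; they do \emph{not} have identical approval sets, and they interact precisely through the element-voters they share. So the ``choice gadgets of mutually cloned projects'' picture does not apply: deleting one of them is not inert (it changes which voters retain budget and hence the $q$-affordability of the overlapping set-projects), and there is no per-gadget ``one deletion selects the survivor'' correspondence. The correct accounting is global: one must delete $2N$ of the $3N$ set-projects so that the $N$ survivors are pairwise disjoint and exhaust the budgets of all element-voters, which is exactly the exact-cover condition. You flag this as ``the delicate point to be verified,'' but the verification would fail as you set it up, because the premise (identical approval sets) is false.

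Second, and more seriously, you never rule out deletion sets that attack the construction elsewhere. In constructive control the cheapest way to get $p$ funded may simply be to delete $p$'s direct competitor $d$ (and nothing about $\mathcal{I}$ is then encoded). Your replacement argument only covers deletions \emph{inside} gadgets; it says nothing about deleting $d$, or projects from the auxiliary groups $C_B$, $C_U$, $c_1$, $c_2$, $D$. The paper's proof hinges on a modification you do not have: it adds $2N$ extra copies $d_1,\dots,d_{2N}$ of the competitor, so that with only $\numDeleted=2N$ deletions at least one copy always survives and $p$ must genuinely beat it; it then verifies by an explicit budget computation that exactly one of $p,d,d_1,\dots,d_{2N}$ can ever be funded and that tampering with $C_B$ and $C_U$ cannot change this. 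Without such a duplication (or an equivalent safeguard) the reverse direction of your reduction does not go through.
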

\begin{proof}
    Membership of both problems to \NP is clear; it suffices to guess a set of projects to remove and check whether the distinguished project is (not) funded.
    To prove \NPhness, we reduce from \RXthreeCs. Let~$\mathcal{I}=(U,\mathcal{S})$ be an instance of the \RXthreeCs problem. Let~$|U|=3N$ be the number of elements in~$U$. We construct an election exactly as in \cite[Theorem 3.4]{JaneczkoF2023} with the only difference we create~$2N$ more copies of candidate~$d$ (denoted as~$d_1, \ldots, d_{2N}$ respectively). We set~$\numDeleted=2N$, projects' costs to one, and~$\budget$ to the same value as the committee size in the original instance. We set tie-breaking order to be~$C_b \succ C_u \succ \mathcal{S} \succ p \succ d \succ d_1 \succ \ldots \succ d_{2N} \succ c_1 \succ c_2 \succ D$. Let~$p$ be the distinguished project for the constructive case and~$d$ for the destructive case.

    The initial budgets in the instance will be the same as in the original instance, so we can see that the rule will process the groups in the same order as in the original groups. Therefore, irrespectively which up to~$\numDeleted=2N$ projects we delete, a)~$p$ will need to compete with at least one of~$d, d_1, d_2, \ldots, d_{2n}$ and b) total budget of voters~$V_{pd}$ and~$U'$ after processing nonremoved projects from~$C_B$ and~$C_U$ will always be lower than~$2$ because~$12N \cdot (\frac{1}{12N} + 2N \cdot \frac{1}{144N^3}) + 3N \cdot \frac{1}{18N+6} + 2N \cdot \frac{1}{54N^3+27N^2+3N} = 1 + \frac{1}{6N} + \frac{3N}{18N+6} + \frac{6N^2}{54N^3+27N^2+3N} < 2$. Having taken~$12N \cdot \frac{1}{12N} = 1$ into account, we know that exactly one nonremoved project from~$p, d, d_1, \ldots, d_{2N}$ will get funding. From the original reasoning, we know that~$p$ always loses with~$d, d_1, d_2, \ldots, d_{2N}$ unless the formerly selected~$\mathcal{S}$-projects form an exact cover over~$U$. Thus, if~$\mathcal{I}$ has an exact cover~$X$ (and is a \Yes-instance), we can remove all projects that correspond to sets~$\mathcal{S} \setminus X$ and make~$p$ winning. On the other hand, if our control instance is a \Yes-instance, then we know that the budget of each voter from~$U'$ must have been spent totally before considering~$p, d, d_1, \ldots, d_{2N}$. We know that voters from~$C_U$ cannot take the whole budget of voters from~$U'$, so it must have been projects from~$\mathcal{S}$ that finished their budget. But as we consider only disjoint sets before considering~$p, d, d_1, \ldots, d_{2N}$, we must have considered~$N$ pairwise disjoint~$\mathcal{S}$-projects before~$p, d, d_1, \ldots, d_{2N}$, which means that we found an exact cover over~$U$ and~$\mathcal{I}$ is a \Yes-instance.
\end{proof}

\section{Adding Projects}\label{sec:AC}

In this section, the control operation we can perform to achieve our goal is adding candidates. Formally, this operation is based on the idea that certain projects are initially inactive, meaning that the rule under consideration does not assume them. However, we have full information about which voters will vote for them if these projects are activated.

Especially for constructive control, it may not be intuitive how adding projects may help select the initially unfunded projects. In the following example, we illustrate the concept of control by adding projects.

\begin{example}
    Let the set of standard projects~$\projects$ consist of a project~$d$ and our distinguished project~$p$, and let the costs of the standard projects be~$\cost(d) = 2$ and~$\cost(p) = 1$. There is one spoiler project~$c$ with~$\cost(c) = 1$. Moreover, we have three voters~$v_1$,~$v_2$, and~$v_3$. The voter~$v_1$ approves only the project~$c$, the voter~$v_2$ approves both~$c$ and~$d$, and the voter~$v_3$ approves all the projects. Finally, the budget~$\budget$ is two, and we assume the \greedyAV rule. Initially, the project~$d$ is the one with the highest support. Therefore, the rule funds~$d$. This exhausts the whole budget, so~$p$ is not funded. If we add the spoiler project~$c$, this project is assumed by the rule the first. Since~$\cost(c) < \budget$,~$c$ gets funded and the budget is decreased to one. Next, the rule considers project~$d$, which cannot be funded because~$\cost(d) = 2$. The only remaining project is~$p$. This time, there is enough money in the budget to fund it.
\end{example}

In the rest of this section, we provide the complexity picture for all combinations of goals and PB rules we consider. Both the results and techniques used are very similar to the case of project deletion operation. Therefore, we defer most of the proofs into the supplementary material.

\subsection{\greedyAV and \greedyCost}

We start with the case of \greedyAV rule. First, we show that it is \NPh to decide whether a subset of spoiler projects of the correct size exists for both control goals. Similarly to the proof of \Cref{thm:AV:CCDC:NPh}, we reduce from the \RXthreeCs problem, create one project for every set~$S$, and encode elements of~$S$ using project cost. The difference here is that the projects corresponding to sets are initially inactive and, unless we add a set of projects corresponding to an exact cover for the \RXthreeCs instance, the distinguished project~$p$.

\begin{theorem}
    \label{thm:AV:CCAC:NPh}\label{thm:AV:DCAC:NPh}
    Both \CCACs{\greedyAV} and \DCACs{\greedyAV} are \NPc, even if~$|\voters|=2$ and there are two standard projects.
\end{theorem}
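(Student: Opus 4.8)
The plan is to mirror the reduction from \Cref{thm:AV:CCDC:NPh} as closely as possible, adapting it to the ``adding projects'' setting. Membership in \NP is again immediate: guess the set of spoiler projects to add, simulate the (polynomial-time) \greedyAV rule, and check whether the distinguished project~$p$ is funded (constructive) or not funded (destructive). For \NPhness, I would again reduce from \RXthreeCs. Given~$\mathcal{I}=(U,\mathcal{S})$, the two standard projects will be the distinguished project~$p$ together with a single auxiliary ``blocker'' standard project, and the spoiler set~$Q$ will contain one \emph{set-project}~$p_j$ for each~$S_j\in\mathcal{S}$, with cost $1\cdot 4^{i_1}+1\cdot 4^{i_2}+1\cdot 4^{i_3}$ exactly as before, so that the base-$4$ digits encode the covered elements. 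We again use two voters: $v_1$ approves everything except~$p$, and $v_2$ approves only the set-projects; this guarantees (for any tie-breaking order) that whatever set-projects are activated are processed first, then the blocker, and finally~$p$. Setting~$\numDeleted=N$ forces us to add exactly~$N$ set-projects, and by the Pigeonhole argument of \Cref{clm:noSmallSetSumUpToCostOfp} (which only depends on the costs, not on whether projects are added or deleted) the total cost of any~$N$ added set-projects equals~$\cost(p)=\sum_{i=1}^{3N}4^i$ if and only if the corresponding sets form an exact cover.

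The role that the guard-projects played in \Cref{thm:AV:CCDC:NPh} — namely, absorbing any ``leftover'' budget so that $p$ is funded precisely when the spent amount is exactly right — is now played by the budget itself together with the single blocker standard project. Concretely, in the constructive case I would set~$\budget=\cost(p)$ and give the blocker cost~$\cost(p)+1$ (or simply cost~$\budget$ with a tie-breaking position making it competing): if the~$N$ activated set-projects form an exact cover, their total cost is~$\sum_{i=1}^{3N}2\cdot 4^i$, which exceeds~$\budget$, so — wait, that is too large. Instead I would take~$\budget = \cost(p) + \sum_{i=1}^{3N}2\cdot 4^i$ and let the blocker have cost~$\cost(p)+1$; then exactly~$N$ set-projects forming a cover leave residual budget~$\cost(p)$, too little for the blocker and exactly enough for~$p$, while any~$N$ set-projects whose cost is strictly larger than~$\sum_{i=1}^{3N}2\cdot 4^i$ (i.e., not a cover) leave strictly less than~$\cost(p)$, and the set-projects never overshoot because costs are chosen below budget; fewer than~$N$ added is disallowed by... no — adding \emph{fewer} than~$N$ set-projects is allowed (we may add ``at most''~$\numDeleted$). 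So I would additionally ensure that adding strictly fewer than~$N$ set-projects leaves more than~$\cost(p)$, so that the blocker is funded and kills~$p$; this requires that the blocker cost sit strictly between the residual budget after~$N-1$ set-projects and that after~$N$, which again follows from \Cref{clm:noSmallSetSumUpToCostOfp}. The destructive variant is obtained by the same dualization trick used in \Cref{thm:AV:CCDC:NPh}: make~$p$ cheap and always fundable in the base instance, introduce a single expensive blocker that gets funded exactly when an exact cover is added, and increase the budget by one.

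The main obstacle I anticipate is the bookkeeping around the ``at most~$\numDeleted$'' quantifier: unlike deletion, where deleting \emph{fewer} projects is typically weakly monotone in our favor, here adding a smaller-than-intended subset of set-projects could conceivably leave a convenient residual budget and accidentally fund~$p$ (constructive) or block it (destructive) for spurious reasons. Pinning the arithmetic so that every ``small'' or ``wrong'' choice of added set-projects leaves a budget strictly on the wrong side of the blocker's cost — and simultaneously never lets the set-projects overshoot the budget — is the delicate part, and it is exactly what \Cref{clm:noSmallSetSumUpToCostOfp} is designed to handle. The remaining directions are routine: a \Yes-instance of~$\mathcal{I}$ yields the obvious solution (add the cover), and conversely any successful control must add exactly~$N$ set-projects whose total cost equals~$\cost(p)$, which by the base-$4$ digit structure forces pairwise-disjoint sets, hence an exact cover. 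I would present the construction once, carry out the forward direction, invoke \Cref{clm:noSmallSetSumUpToCostOfp} for the backward direction, and then describe the one-line modification for the destructive case, exactly paralleling the structure of \Cref{thm:AV:CCDC:NPh}.
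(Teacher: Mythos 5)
Your overall architecture matches the paper's proof of \Cref{thm:AV:CCAC:NPh} --- two standard projects (the distinguished project~$p$ and one guard/blocker), spoiler set-projects with the base-$4$ cost encoding, the same two voters, $\numDeleted=N$, and a reuse of \Cref{clm:noSmallSetSumUpToCostOfp} --- but your arithmetic contains a genuine error that breaks the construction. You assert that $N$ set-projects forming an exact cover have total cost $\sum_{i=1}^{3N}2\cdot 4^i$. That is the cost of the \emph{complement} of a cover (the $2N$ set-projects that remain funded in the deletion proof of \Cref{thm:AV:CCDC:NPh}); the cover itself touches each element exactly once, so its $N$ set-projects cost exactly $\sum_{i=1}^{3N}1\cdot 4^i=\cost(p)$. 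With your budget $\budget=\cost(p)+\sum_{i=1}^{3N}2\cdot 4^i=3\cost(p)$, adding a cover leaves residual budget $2\cost(p)\geq\cost(p)+1=\cost(g)$, so the blocker \emph{is} funded and $p$ is then left with $\cost(p)-1<\cost(p)$: the forward direction fails. The fix is exactly the paper's choice $\budget=\sum_{i=1}^{3N}2\cdot 4^i=2\cost(p)$, which your own design principle (residual after a cover should be $\cost(p)$, one unit short of the blocker) would have produced had the cover's cost been computed correctly.

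A secondary issue: you claim that any $N$ set-projects not forming a cover have cost \emph{strictly larger} than a cover's. That is false --- a non-cover has some base-$4$ digit at least~$2$ and some digit equal to~$0$, and depending on which positions carry the excess its total can lie above or below $\cost(p)$; all one can say (via uniqueness of base-$4$ representations with digits at most~$3$, which is the content of \Cref{clm:noSmallSetSumUpToCostOfp}) is that the total equals $\cost(p)$ if and only if the sets form an exact cover of size at least~$N$. The backward direction must therefore argue both sides: if the added projects cost more than $\cost(p)$, the residual is too small for~$p$; if they cost less, the residual is at least $\cost(g)$, so the blocker is funded first and kills~$p$. This second case also absorbs your worry about adding fewer than~$N$ projects, so no extra mechanism is needed beyond getting the budget right. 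The destructive tweak you sketch is the paper's, but its constants inherit the same correction.
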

\begin{proof}
    It is very easy to see that both problems are in \NP: if a solution is given to us, we can simulate (in polynomial time) the rule and check whether~$p$ is or is not funded.

    For \NPhness, we again reduce from the \RXthreeC problem. The idea of the construction is similar to the one used in \Cref{thm:AV:CCDC:NPh}. Again, we create a project for every set of~$\mathcal{S}$ and encode its elements using the cost function. However, these projects are initially inactive. There is also a direct competitor~$g$ of the distinguished project~$p$, which is always funded unless we return to the instance exactly projects corresponding to an exact cover in the RX3C instance.

    More formally, let~$\mathcal{I} = (U,\mathcal{S})$ be an instance of the RX3C problem. We create an instance~$\mathcal{J}$ of the \CCACs{\greedyAV} problem (we discuss the destructive variant at the end of the proof) as follows. The qualified projects comprise just two: a \emph{guard-project}~$g$ and our distinguished project~$p$. The set~$Q$ of spoiler projects then consists of~$3N$ projects, one for each set~$S\in\mathcal{S}$. We set the prices for our projects as follows. The cost of the distinguished project~$p$ is~$\cost(p) = \sum_{i=1}^{3N} 1\cdot 4^i$ and the cost of the guard-project is~$\cost(g) = \cost(p) + 1$. Let~$p_j$ be a spoiler project corresponding to a set~$S_j=\{u_{i_1},u_{i_2},u_{i_3}\}\in\mathcal{S}$. We set~$\cost(p_j) = 1\cdot 4^{i_1} + 1\cdot 4^{i_2} + 1\cdot 4^{i_3}$. The set of voters contains just two:~$v_1$ and~$v_2$. The first voter~$v_1$ approves all spoiler projects and the guard-project~$g$, and the second voter~$v_2$ approves only spoiler projects. To finalize the construction, we set~$\numDeleted = N$ and~$\budget = \sum_{i=1}^{3N} 2\cdot 4^i$.

    First, we show that the distinguished project~$p$ is indeed initially not funded. The PB instance without spoiler projects consists only of~$g$ and~$p$. Moreover, the guard-project~$g$ has a higher \greedyAV-score than the distinguished project~$p$; therefore, the \greedyAV rule processes~$g$ first. The budget is big enough to fund~$g$. Thus,~$g$ is funded, and the remaining budget is
    \[
        \budget-\cost(g) = \sum_{i=1}^{3N} 2\cdot 4^i - \left[(\sum_{i=1}^{3N} 1\cdot 4^i) + 1\right] = (\sum_{i=1}^{3N} 1\cdot 4^i) - 1,
    \]
    which is clearly smaller than~$\cost(p) = \sum_{i=1}^{3N} 1\cdot 4^i$. Hence, there is not enough remaining budget to fund~$p$.

    Next, let us show the correctness of the construction. Assume that~$\mathcal{I}$ is a \Yes-instance and~$C\subseteq \mathcal{S}$ is an exact cover in~$\mathcal{I}$. We set~$S = \{ p_j \mid S_j\in C \}$ and claim that it is a solution for~$\mathcal{J}$. Since the spoiler projects are approved by both voters and the guard-project by only one, they are clearly processed before~$g$ by the rule. Moreover, as~$C$ was a set cover, the overall price of the added projects is~$\sum_{i=1}^{3N} 1\cdot 4^i$. Hence, before~$g$ is processed, the remaining budget is~$\budget-\sum_{i=1}^{3N} 1\cdot 4^i = \sum_{i=1}^{3N} 2\cdot 4^i - \sum_{i=1}^{3N} 1\cdot 4^i = \sum_{i=1}^{3N} 1\cdot 4^i$. This is smaller by one unit than the price of~$g$, so~$g$ is not funded. On the other hand, it is exactly the price of~$p$, so~$p$ gets funded. Therefore,~$S$ is a solution, and~$\mathcal{J}$ is also a \Yes-instance.

    In the opposite direction, assume that~$\mathcal{J}$ is a \Yes-instance, and~$S\subseteq Q$ is a solution. First, observe that the costs of the projects are set up identically to the set-projects in the proof of \Cref{thm:AV:CCDC:NPh}; therefore, \Cref{clm:noSmallSetSumUpToCostOfp} is true also for our instance. It is not hard to see that every solution~$S$ satisfies~$\sum_{p_j\in S} \cost(p_j) = \cost(p)$, as~$\budget = 2\cdot\cost(p)$. If this is not the case, then we either spend too much budget on~$S$, and~$p$ is not affordable, or the remaining budget after projects of~$S$ are funded is strictly greater than~$\cost(p)$, and thus, the guard-project~$g$ is funded, and~$p$ is not. By \Cref{clm:noSmallSetSumUpToCostOfp}, we obtain that~$|S| = N$. Since~$\cost(S) = \sum_{p_j\in S} \cost(p_j) = \sum_{i=1}^{3N} 1\cdot 4^i$, by the Pigeonhole principle and the definition of projects' costs, each project~$p_j\in S$ contributes to~$\cost(S)$ with exactly~$4^{i_1} + 4^{i_2} + 4^{i_3}$ for unique~$i_1,i_2,i_3\in[3N]$. That is, if we set~$C = \{ S_j\mid p_j\in S \}$, we immediately obtain that~$C$ is an exact cover for~$\mathcal{I}$. This shows that~$\mathcal{I}$ is a \Yes-instance, which finishes the correctness of the construction.

    For the \emph{destructive} variant of control, we again just tweak the construction a bit. Specifically, we set~$\cost(g) = \sum_{i=1}^{3N} 1\cdot 4^i$,~${\cost(p) = 1}$, and the budget to~$\budget = \sum_{i=1}^{3N} 2\cdot 4^i$. Obviously, both projects are initially funded as~$\cost(g) + \cost(p) < \budget$. For correctness, suppose that~$\mathcal{I}$ is a \Yes-instance, and~$C\subseteq \mathcal{S}$ is an exact cover. We add to the instance all spoiler projects corresponding to sets of~$C$. Since~$C$ is an exact cover, and by the definition of the costs, the rule spends~$\sum_{i=1}^{3N} 1\cdot 4^i$ on the added set-projects. Then, it funds the guard-project~$g$, which exhausts the overall budget, so the distinguished project~$p$ cannot be funded. In the opposite direction, let~$S$ be a solution for~$\mathcal{J}$. There are two cases in which adding projects of~$S$ can lead to~$p$ not being funded. The first possibility is that the whole budget is exhausted on projects of~$S$, that is,~$\cost(S) = \sum_{i=1}^{3N} 2\cdot 4^i$. This is, however, not possible, as~$|S| \leq N$, each project contributes to the sum by three distinct powers of four, and each distinct power of four is present in the cost of exactly three projects. Therefore, it must be the case that~$\cost(S) = \cost(g)$ since, in this case,  after the rule funds the added projects~$S$, the remaining budget before the guard-project~$g$ is processed is exactly~$\sum_{i=1}^{3N} 1\cdot 4^i$ and~$g$ completely exhausts it. Clearly, if~$\cost(S) < \sum_{i=1}^{3N} 1\cdot 4^i$, the remaining budget when~$g$ is processed is at least~$(\sum_{i=1}^{3N} 1\cdot 4^i) + 1$ and therefore, the rule can fund both~$g$ and~$p$. Thus,~$\cost(S) \geq \sum_{i=1}^{3n} 1\cdot 4^i$. If~$\cost(S) > \sum_{i=1}^{3n} 1\cdot 4^i$, then once the \greedyAV rule processes~$g$, there is not enough budget to fund it, and consequently, the project~$p$ is funded. Hence, we have that~$\cost(S) = \sum_{i=1}^{3n} 1\cdot 4^i$, and if we select~$C$ to consist of sets corresponding to set-projects in~$S$, by the definition of the costs, we obtain that~$C$ is an exact cover of~$U$. This finishes the proof.
\end{proof}

The costs used in the hardness results are exponential in the number of voters and projects. In our next result, we show that such costs are necessary for hardness. In particular, we present a pseudopolynomial time algorithm for both variants of the problem. The algorithm works even for the weighted setting and is again based on the dynamic programming over the projects, as in \Cref{thm:AV:CCDC:P:ifUnaryPrices}, albeit technical details are adapted for project addition.

\begin{theorem}
    \label{thm:AV:CCAC:P:ifUnaryPrices}\label{thm:AV:DCAC:P:ifUnaryPrices}
    If the costs of the projects are encoded in unary, \CCACs[\$]{\greedyAV} and \DCACs[\$]{\greedyAV} can be solved in polynomial time for any tie-breaking order, even if the projects' weights are encoded in binary.
\end{theorem}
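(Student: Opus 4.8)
The plan is to reuse, with only cosmetic changes, the dynamic-programming argument of \Cref{thm:AV:CCDC:P:ifUnaryPrices}, replacing ``decide which standard projects to delete'' by ``decide which spoiler projects to add''. The single observation that makes the adaptation go through is that activating a spoiler project changes neither the \greedyAV-score of any other project nor the given tie-breaking order, and hence does not alter the relative order in which \greedyAV processes the projects that are present. So the first step is to fix, once and for all, the \emph{master order} $q_1,\dots,q_M$ in which \greedyAV would process the whole union $\projects\cup Q$ of standard and spoiler projects under the given tie-breaking order. For any choice $S\subseteq Q$ of activated spoilers, the actual run processes the standard projects together with the spoilers of $S$ in exactly this order, each spoiler outside $S$ being simply absent. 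By the same reasoning as in \Cref{thm:AV:removeUnnecessaryProjects}, any project---standard or spoiler---occurring after $p$ in the master order is irrelevant to whether $p$ is funded, and activating such a spoiler only wastes weight; so I truncate the order immediately after $p$ and relabel so that $q_M=p$, and I also discard every project of cost exceeding $\budget$, which can never be funded.

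Next I build a table $\operatorname{DP}[\,j,\beta\,]$, indexed by $j\in\{0,\dots,M-1\}$ and $\beta\in\{0,\dots,\budget\}$, storing the minimum weight of a set $S\subseteq\{q_1,\dots,q_j\}\cap Q$ of activated spoilers such that, after adding exactly the spoilers of $S$, the budget remaining just before \greedyAV processes $q_{j+1}$ is exactly $\beta$; if there is no such $S$ (or the minimum weight would exceed $\numDeleted$) the cell holds $\infty$. The base case is $\operatorname{DP}[\,0,\budget\,]=0$ and $\operatorname{DP}[\,0,\beta\,]=\infty$ for $\beta\neq\budget$. For $j\geq 1$ the recurrence distinguishes whether $q_j$ is standard or a spoiler, and, whenever $q_j$ is processed, whether the incoming budget suffices to fund it (budget after equals budget before minus $\cost(q_j)$ when the latter is at least $\cost(q_j)$, and is unchanged otherwise). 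Concretely: if $q_j$ is standard, $\operatorname{DP}[\,j,\beta\,]$ is the minimum of $\operatorname{DP}[\,j-1,\beta+\cost(q_j)\,]$ (admissible only when $\beta+\cost(q_j)\leq\budget$, i.e., $q_j$ is funded) and $\operatorname{DP}[\,j-1,\beta\,]$ (admissible only when $\beta<\cost(q_j)$, i.e., $q_j$ is skipped for lack of budget); if $q_j$ is a spoiler one additionally allows the branch ``do not activate $q_j$'', which copies $\operatorname{DP}[\,j-1,\beta\,]$ with the weight unchanged, whereas the two ``activate'' branches are exactly the standard-project updates above with $\wFn(q_j)$ added to the stored weight. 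This is precisely the three-case recurrence of \Cref{thm:AV:CCDC:P:ifUnaryPrices} augmented by the extra cost-free ``skip this spoiler'' transition.

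Correctness then follows from the same two claims as in the proof of \Cref{thm:AV:CCDC:P:ifUnaryPrices}, each proved by induction on $j$: first, whenever $\operatorname{DP}[\,j,\beta\,]<\infty$ there really is an activation set of that weight realising budget $\beta$ just before $q_{j+1}$; second, conversely, every feasible activation set $S\subseteq\{q_1,\dots,q_j\}\cap Q$ of weight at most $\numDeleted$ that realises budget $\beta$ before $q_{j+1}$ satisfies $\operatorname{DP}[\,j,\beta\,]\leq\wFn(S)$. Given the finished table, \CCACs[\$]{\greedyAV} is a \Yes-instance iff some cell $\operatorname{DP}[\,M-1,\beta\,]$ with $\beta\geq\cost(p)$ has value at most $\numDeleted$ (so that $p=q_M$ is funded), and \DCACs[\$]{\greedyAV} is a \Yes-instance iff some cell $\operatorname{DP}[\,M-1,\beta\,]$ with $\beta<\cost(p)$ has value at most $\numDeleted$. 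The table has $\Oh{(|\projects|+|Q|)\cdot\budget}$ cells, each computable with $\Oh{1}$ lookups and $\Oh{\log\numDeleted}$ arithmetic on the binary-encoded weights, and the concluding scan costs $\Oh{\budget}$; since $\budget$ is given in unary this is a polynomial-time algorithm even with weights in binary, and it only ever uses the tie-breaking order to fix the master order, so it works for any such order.

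The one step that needs genuine care---and the only point at which the addition case is not a verbatim copy of the deletion case---is the justification for computing with a single master order over $\projects\cup Q$: I must verify that the set of activated spoilers affects only which spoiler slots are occupied and never the positions of the remaining projects or of $p$ (in particular when scores tie and the tie-breaking order is consulted), and that truncating the order after $p$ loses nothing. I expect everything downstream of that observation to be routine bookkeeping inherited from \Cref{thm:AV:CCDC:P:ifUnaryPrices}.
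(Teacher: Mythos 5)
Your proposal is correct and takes essentially the same approach as the paper's proof: both fix the \greedyAV{} processing order over $\projects\cup Q$ once (since activation does not change scores or relative order), prune everything after $p$ via \Cref{thm:AV:removeUnnecessaryProjects}, run the same $(j,\beta)$ dynamic program over (last processed project, remaining budget) with the two inductive correctness claims, and finish by scanning for $\beta\geq\cost(p)$ (constructive) or $\beta<\cost(p)$ (destructive). The only differences are cosmetic: your base case at $j=0$ versus the paper's normalization that the first project is a spoiler, and your slightly more explicit admissibility condition ($\beta<\cost(q_j)$) on the skip branch.
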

\begin{proof}
    \newcommand{\DP}{\operatorname{DP}}
    We start by adding all spoiler projects to the input PB instance and computing an order~$p_1,\ldots,p_{m-1},p_m = p$ in which the \greedyAV rule processes the projects. By \Cref{thm:AV:removeUnnecessaryProjects}, we can assume that~$p$ is the last project. Moreover, we can assume that~$p_1$ is a spoiler project because we cannot affect projects before the first spoiler project. Hence, we can just simulate the rule on these projects to obtain a budget~$\budget'$ just before the first spoiler project is processed. Then, we remove these standard projects and set~$\budget=\budget'$ and obtain an equivalent instance. We can also, without loss of generality, assume that for every project~$p_j$, we have~${\cost(p_j) \leq \budget}$. Adding projects that are not affordable, even with the whole available budget, into a solution is never optimal; therefore, we can delete them without changing the solution. This step can be done clearly in polynomial time, as \greedyAV is polynomial-time computable rule.

    The core of our algorithm is a dynamic programming table~$\DP[j,\beta]$, where
    \begin{itemize}
        \item~$j\in[m-1]$ is an index of the last processed project, and
        \item~$\beta\in[\budget]_0$ is a desired remaining budget just before the \greedyAV rule processes a project~$p_{j+1}$.
    \end{itemize}
    The pair~$(j,\beta)$ is called a \emph{signature}, and for every signature, the dynamic programming table stores the weight of a minimum-weight \emph{partial solution}~$S_{j,\beta}\subseteq \{p_1,\ldots,p_j\}\cap Q$ such that~${\wFn(S_{j,\beta}) \leq \numDeleted}$ and if the projects from~$S_{j,\beta}$ are added to the input instance, the remaining budget just before \greedyAV rule processes project~$p_{j+1}$ is exactly~$\beta$. If no such partial solution exists, we store some large value~$\infty > \numDeleted$.

    The computation is now defined as follows. We start with the base case, which is when~$j=1$. As was argued before, we can assume that~$p_1$ is a spoiler project. Hence, we can decide whether to include~$p_1$ into a solution or not. This leads to the following computation.
    \[
        \DP[1,\beta] = \begin{cases}
            0 & \text{if } \beta = \budget\text{,}\\
            \wFn(p_1) & \text{if } \beta + \cost(p_1) = \budget\text{, and}\\
            \infty & \text{otherwise.}
        \end{cases}
    \]

    Next, we show the computation for every~$j > [2,m-1]$. We distinguish two cases based on the type of project. First, let~$p_j$ be a standard project. Our control operation cannot affect such a project, so we decide whether the prescribed remaining budget~$\beta$ allows for funding~$p_j$ or not. This is achieved as follows.
    \[
        \DP[j,\beta] = \begin{cases}
            \DP[j-1,\beta] & \hspace{-5pt}\text{if } \beta + \cost(p_j) \hspace{-1pt}>\hspace{-1pt} \budget,\\
            \DP[j-1,\beta+\cost(p_j)] & \hspace{-5pt}\text{otherwise.}
        \end{cases}
    \]

    The most interesting part is when~$p_j$ is a spoiler project. Then, we need to decide whether to include~$p_j$ in a solution or not. This is done via the following recurrence:
    \[
        \DP[ j, \beta ] = \begin{cases}
            \DP[j-1,\beta] \hspace{45pt}\text{if } \beta+\cost(p_j) > \budget,\\
            \min\big\{ \DP[j-1,\beta + \cost(p_j)] + \wFn(p_j),\\
            \phantom{\min\big\{}\DP[ j-1, \beta]\big\} \hspace{10pt}\text{ otherwise.}
        \end{cases}
    \]

    For the correctness of the algorithm, we first show that whenever the dynamic programming table stores some value smaller than infinity, a corresponding partial solution exists.

    \begin{claim}
        Whenever~$\DP[j,\beta] = w < \infty$, there exists a set~$S \subseteq \{p_1,\ldots,p_j\} \cap Q$ such that~$\wFn(S) = w$ and~$\cost(W) = \budget - \beta$, where~$W=\greedyAV((\{p_1,\ldots,p_j\}\cap \projects) \cup S, V, \budget)$.
    \end{claim}
    \begin{claimproof}
        We show the claim by induction over~$j$. The basic step of the induction is when~$j=1$. and, by the definition of the computation,~$w < \infty$ if and only if either~$\beta = \budget$ or~$\beta = \budget - \cost(p_1)$. In the former case, the value of~$w$ is zero. If we take~$S=\emptyset$, clearly~$\wFn(S) = 0$ and the budget before the rule processes project~$p_2$ is exactly~$\budget$ since~$p_1$ is a spoiler project and is not added to the instance. In the latter case,~$w=\wFn(p_1)$. We can set~$S = \{p_1\}$ and we obtain that~$\wFn(S) = \wFn(p_1) = w$ and the rule now funds project~$p_1$ and hence, the remaining budget before~$p_2$ is processed by the rule is exactly~$\cost(p_1) = \budget - \beta$. Therefore, the claim holds for~$j=1$.

        Let~$j>1$ and assume the claim holds for~$j-1$. We distinguish two cases based on whether~$p_j$ is a standard or a spoiler project. First, assume that~$p_j$ is a standard project. If~$\beta + \cost(p_j) > \budget$, the value stored in the cell~$\DP[j,\beta]$ is equal to the value of~$\DP[j-1,\beta]$. By the induction hypothesis, there is a set~$S'\subseteq \{p_1,\ldots,p_{j-1}\} \cap Q$ such that~$\wFn(S') = w$ and~$\cost(\greedyAV((\{p_1,\ldots,p_{j-1}\}\cap \projects) \cup S', V, \budget)) = \budget - \beta$. Since the remaining budget before~$p_j$ is processed is~$\beta$, the \greedyAV rule cannot fund~$p_j$. Consequently,~$S'$ is also a partial solution corresponding to~$\DP[j,\beta]$. Now, assume that~$\beta + \cost(p_j) \leq \budget$. Then~$\DP[j-1,\beta+\cost(p_j)] = w$ and, by the induction hypothesis, there exists a set~$S'\subseteq\{p_1,\ldots,p_{j-1}\}\cap W$ such that~$\wFn(S') = w$ and~$\cost(\greedyAV((\{p_1,\ldots,p_{j-1}\}\cap \projects) \cup S', V, \budget)) = \budget -  \beta + \cost(p_j)$. Since~$p_j$ is a standard project, we cannot affect it. Moreover, the remaining budget is clearly big enough to fund~$p_j$, the \greedyAV rule selects~$p_j$, and consequently,~$\cost(\greedyAV((\{p_1,\ldots,p_{j-1},p_j\}\cap \projects) \cup S', V, \budget)) = \cost(\greedyAV((\{p_1,\ldots,p_{j-1}\}\cap \projects) \cup S', V, \budget)) + \cost(p_j) = \budget - (\beta + \cost(p_j)) + \cost(p_j) = \budget - \beta$. Henceforth, we obtain the~$S'$ is also a partial solution corresponding to~$\DP[j,\beta]$ and that the claim holds even for standard projects.

        Finally, let~$p_j$ be a spoiler project. First, assume that~$\beta + \cost(p_j) > \budget$. Then, the value stored in~$\DP[j,\beta]$ corresponds to the value stored in~$\DP[j-1,\beta]$ by the definition of the computation. Using induction hypothesis, there must exists a set~$S'\subseteq \{p_1,\ldots,p_{j-1}\}\cap W$ such that~$\wFn(S') = w$ and~$\cost(\greedyAV((\{p_1,\ldots,p_{j-1},p_j\}\cap \projects) \cup S', V, \budget)) = \budget - \beta$. If we add spoiler projects from~$S'$ to the instance, the remaining budget before~$p_j$ is processed is exactly~$\beta$. If we decide not to include~$p_j$ into~$S'$, observe that the remaining budget before~$p_{j+1}$ is processed is still~$\beta$, since~${\beta + \cost(p_j) > \budget}$ and therefore, even if we would add the project to the instance, the rule does not fund this project. Hence,~$S'$ is a partial solution corresponding to~$\DP[j,\beta]$. What remains to show is that the computation is correct even if~$\beta + \cost(p_j) \leq \budget$. By the definition of the computation, the value of~$\DP[j,\beta]$ is computed as the minimum of~$\DP[j-1,\beta+\cost(p_j)]+\wFn(p_j)$ and~$\DP[j-1,\beta]$. First, assume that~$\DP[j-1,\beta] \leq \DP[j-1,\beta+\cost(p_j)]+\wFn(p_j)$. Then~$\DP[j-1,\beta] = w$ and, by the induction hypothesis, there exists a set~$S'$ such that~$\wFn(S') = w$ and~$\cost(\greedyAV((\{p_1,\ldots,p_{j-1}\}\cap \projects) \cup S', V, \budget)) =  \budget - \beta$. If we keep the set~$S'$ the same, the project~$p_j$ is not added into a solution and therefore, also~$\cost(\greedyAV((\{p_1,\ldots,p_{j-1},p_j\}\cap \projects) \cup S', V, \budget)) =  \budget - \beta$, proving that~$S'$ is corresponding partial solution. Second, let~$\DP[j-1,\beta+\cost(p_j)]+\wFn(p_j) < \DP[j-1,\beta]$. By the induction hypothesis, there exists a set~$S'\subseteq\{p_1,\ldots,p_{j-1}\}\cap W$ such that~$\wFn(S') = w - \wFn(p_j)$ and~$\cost(\greedyAV((\{p_1,\ldots,p_{j-1}\}\cap \projects) \cup S', V, \budget)) =  \budget - (\beta+\cost(p_j))$. We set~$S = S' \cup \{p_j\}$ and claim that~$S$ is a partial solution corresponding to~$\DP[j,\beta]$. Clearly,~$\wFn(S) = \wFn(S') + \wFn(p_j) = w-\wFn(p_j) + \wFn(p_j) = w$. Now, consider how the rule proceeds. Just before~$p_j$ is processed, the remaining budget is~$\beta + \cost(p_j)$. We included~$p_j$ in a solution, and moreover,~$p_j$ is affordable. Hence, the rule selects the project~$p_j$, and the budget before the project~$p_{j+1}$ is processed by the \greedyAV rule is exactly~$\beta + \cost(p_j) - \cost(p_j) = \beta$, is intended. This finishes the proof of the claim.
    \end{claimproof}

    To complete the proof of correctness, in the next claim, we show that for every admissible solution set~$S$, the table stores at most the value~$\wFn(S)$. That is, the stored weight corresponds to a minimum-weight solution securing the goal prescribed by the signature.

    \begin{claim}
        For every~$j\in[m-1]$ and every~$S\subseteq \{p_1,\ldots,p_j\}\cap W$ such that~$\wFn(S) \leq r$ and~$\cost(W) \leq B$, where~$S=\greedyAV((\{p_1,\ldots,p_j\}\cap \projects) \cup S, V, \budget)$, we have that~$\DP[j,\beta] \leq \wFn(S)$.
    \end{claim}
    \begin{claimproof}
        Again, we prove the claim by induction over~$j$. First, consider the basic step when~$j=1$. There are only two possible sets~$S$, namely~$\emptyset$ and~$\{p_1\}$, and both of them satisfy the conditions from the claim statement. If~$S = \emptyset$, then the \greedyAV rule before processing~$p_2$ does not fund any project; hence, the remaining budget~$\beta$ is still~$\budget$. However, by the definition of the computation,~$\DP[1,\beta]$ is zero whenever~$\beta = \budget$, which is clearly at most~$\wFn(\emptyset) = 0$. For the latter case, if~$S=\{p_1\}$, then the rule funds~$p_1$ in the first round and the remaining budget~$\beta$ just before~$p_2$ is processed is~$\budget - \cost(p_1)$. By the definition of the computation, our table stores~$\wFn(p_1)$ whenever~$\beta = \budget - \cost(p_1)$. It clearly holds that~$\wFn(p_1) \leq \wFn(S) = \wFn(p_1)$ and the basic step indeed holds.

        Next, let~$j> 1$, the claim hold for~$j-1$, and let~$p_j$ be a standard project. For any set~$S$ satisfying the conditions of the claim, we have that~$p_j\not\in S$. If~$\beta + \cost(p_j) > \budget$, then the rule cannot fund~$p_j$, as it would exceed the budget. Therefore, by the induction hypothesis,~$\DP[j-1,\beta] \leq \wFn(S)$ and, by the definition of the computation, also~$\DP[j,\beta] \leq \wFn(S)$. If~$\beta + \cost(p_j) \leq \budget$, then the rule must fund~$p_j$. Consequently, by the induction hypothesis,~$\DP[j-1,\beta+\cost(p_j)] \leq \wFn(S)$. Since, in this case, we set~$\DP[j-1,\beta]$ to the same value, it must hold that~$\DP[j-1,\beta] \leq \wFn(S)$.

        Finally, let~$j > 1$, the claim hold for~$j-1$, and let~$p_j$ be a spoiler project. Se distinguish two cases. First, consider that~$p_j\not\in S$. Then~$p_j$ is not added to the instance, and hence, by the induction hypothesis, it must hold that~$\DP[j-1,\beta] \leq \wFn(S)$. By the definition of the computation, we store either directly the value of~$\DP[j-1,\beta]$ or a minimum of~$\DP[j-1,\beta]$ and~$\DP[j-1,\beta+\cost(p_j)] + \wFn(p_j)$. In any case, the value stored in~$\DP[j,\beta]$ is at most~$\wFn(S)$. If~$p_j$ is an element of~$S$ and~$p_j$ is not funded by the \greedyAV rule, it is unnecessary to add~$p_j$ into a solution. Thus,~$\DP[j-1,\beta] \leq \wFn(S\setminus\{p_1\}) < \wFn(S)$, where the first inequality holds by the previous case. Hence, let~$p_j\in S$ and~$p_j$ be funded by the \greedyAV rule. Then, by the induction hypothesis, the value stored in~$\DP[j-1,\beta + \cost(p_j)]$ is at most~$\wFn(S) - \wFn(p_j)$. By the definition of the computation, our table stores minim of~$\DP[j-1,\beta]$ and~$\DP[j-1,\beta+\cost(p_j)] + \wFn(p_j)$, which is at most~$\wFn(S) - \wFn(p_j) + \wFn(p_j) = \wFn(S)$, finishing the proof.
    \end{claimproof}

    Once the dynamic programming table~$\DP$ is correctly computed for all~$j\in[m-1]$ and~$\beta\in[\budget]_0$, we can directly decide our problem. For the constructive variant, we check whether there exists~$\DP[\numProjects-1,\beta] \leq \numDeleted$ for some~$\beta \geq \cost(p)$. If this is the case, we return \Yes. If no such~$\beta$ exists, we return \No. Observe that, since we assume~$\DP$ to be correctly computed, there exists a partial solution~$S_{\numProjects-1,\beta}$ with~$\wFn(S_{\numProjects-1,\beta}) < \numDeleted$ such that the remaining budget when the \greedyAV rule processes the project~$p$ is exactly~$\beta\geq \cost(p)$. Hence, the rule fund~$p$ and the constructive control is indeed successful.
    For the destructive variant, we check whether there exists~$\beta < \cost(p)$ such that~$\DP[\numProjects-1,\beta] \leq \numDeleted$, and decide the instance correspondingly. The argumentation about the correctness of the outcome is similar to the constructive case.

    The first step, precomputation of the correct order of the projects, takes~$\Oh{\numProjects\log\numProjects}$ time, as we need to order the projects according to their \greedyAV score (we assume that the \greedyAV score of each project is given). The size of the dynamic programming table is~$\Oh{\numProjects \cdot \budget}$. Each cell can be computed in time~$\Oh{\log{\numDeleted}}$, and the final decision can be performed in time linear in~$\budget$. Therefore, the overall running time is~$\Oh{\numProjects\cdot\log\numProjects + \numProjects\cdot\budget\cdot\log\numDeleted + \budget}$, which is clearly polynomial in~$\numVoters+\numProjects$, even if~$\numDeleted$ is of size exponential in the number of projects~$\numProjects$.
\end{proof}

\begin{remark}
    Again, both \Cref{thm:AV:CCAC:NPh,thm:AV:CCAC:P:ifUnaryPrices} work even in the case of \greedyCost. The argumentation is still the same as in the project deletion operation; project addition does not change the ordering in which the projects are considered, and hence, all properties that were necessary for proofs of \Cref{thm:AV:CCAC:NPh,thm:AV:CCAC:P:ifUnaryPrices} are preserved even for the \greedyCost rule.
\end{remark}

\subsection{\phragmen}

For the \phragmen rule, we again show that both constructive and destructive goals are computationally intractable, even if all the projects are of the same costs. This time, the construction is almost identical to the project deletion operation. We just make all set-projects initially inactive and we adjust the number of projects we may add.

\begin{theorem}
    \label{thm:Phragmen:CCAC:NPh}
    \CCACs{\phragmen} is \NPc, even if the projects are of unit costs.
\end{theorem}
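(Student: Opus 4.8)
The plan is to transfer the reduction from \Cref{thm:Phragmen:CCDC:NPh} to the addition setting by turning the set-projects into spoilers. Given a \RXthreeCs instance $\mathcal{I}=(U,\mathcal{S})$ with $|U| = 3N$, I would create a \CCACs{\phragmen} instance whose qualified projects are the $3N^2$ guard-projects $g_i^\ell$ ($i\in[3N]$, $\ell\in[N]$) together with the distinguished project $p$, and whose spoiler set $Q$ consists of one set-project $p_j$ for each $S_j\in\mathcal{S}$. Voters, approvals, unit costs, and the budget $\budget = N+1$ stay exactly as in \Cref{thm:Phragmen:CCDC:NPh} (voter $v$ approves only $p$; voter $u_i^\ell$ approves $g_i^\ell$ and every $p_j$ with $u_i\in S_j$), the tie-breaking prefers every project to $p$, and we set $\numDeleted = N$. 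Membership in \NP is clear since \phragmen is polynomial-time computable. I would first note that $p$ is unfunded in the spoiler-free instance: each $g_i^\ell$ and $p$ is then approved by a single, never-reset voter, so all of them become affordable only at time $1$, and the tie-breaking makes the rule spend the entire budget on $N+1$ guard-projects before $p$ is ever considered.

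For the forward direction, from an exact cover $C$ I would add exactly the $N$ spoiler projects $\{p_j : S_j\in C\}$. Since $C$ is exact, these set-projects have pairwise disjoint supporter sets of size $3N$ each, so they all become affordable at time $1/(3N)$ and are funded in that round, resetting every voter $u_i^\ell$ and leaving remaining budget exactly $1$. Afterwards the only relevant never-again-reset voter is $v$, so $p$ is affordable at time $1$ while every $g_i^\ell$ needs time $1+1/(3N)$; hence $p$ wins and the control succeeds.

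For the reverse direction, suppose adding $S\subseteq Q$ with $|S|\le N$ makes $p$ funded. The crux is a timing claim: every added set-project is funded strictly before time $1$, hence before any guard-project or $p$. The intuition is that a set-project has $3N$ supporters while funding any project consumes only one unit of balance, so the total balance of the $u_i^\ell$ voters, which grows at rate $3N^2$, quickly suffices to fund all (at most $N$) added set-projects, whereas a guard-project or $p$ needs its unique supporter to reach balance $1$, impossible before time $1$. Granting this, a short case analysis finishes the job: if $|S| < N$ then $S$ misses some element $u_i$, so $g_i^\ell$ is affordable at time $1$ for all $\ell$ and, since at least $2$ units of budget remain after funding $S$, the tie-breaking forces the budget to be spent on guard-projects rather than $p$ --- so $|S| = N$ and exactly one unit of budget remains. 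Then, if any $u_i^\ell$ is never reset, $g_i^\ell$ is again affordable at time $1$ and is funded instead of $p$, a contradiction; hence every element is covered by $\{S_j : p_j\in S\}$, and with $|S| = N$ this is an exact cover.

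The step I expect to be the main obstacle is making the timing claim rigorous: one must verify that no added set-project is discarded by \phragmen's ``remove unaffordable projects'' step before it is funded (which holds while $\cost(W)\le N-1$), that resets of the $u_i^\ell$ voters only ever delay the guard-projects, and that the balance-accounting really pins all set-project fundings strictly below time $1$ --- here the bounded multiplicity of \RXthreeCs (each element in exactly three sets, so each $u_i^\ell$ is reset at most three times) is what keeps the argument tight. Everything else is a routine adaptation of \Cref{thm:Phragmen:CCDC:NPh}, with ``deleting the $2N$ set-projects outside the cover'' replaced by ``adding the $N$ set-projects inside the cover''.
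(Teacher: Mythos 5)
Your proposal is correct and follows essentially the same route as the paper: the paper likewise reuses the construction from \Cref{thm:Phragmen:CCDC:NPh}, turns the set-projects into spoilers, sets $\numDeleted=N$, and argues both directions via the disjointness of supporter sets under an exact cover and the existence of an uncovered element (hence a never-reset guard-project supporter) otherwise. The timing concern you flag is real but minor, and the paper handles it with the same observation you sketch --- set-projects, having $3N$ supporters each, are all funded by time $1/(3N)$, strictly before any guard-project or $p$ becomes affordable at time $1$.
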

\begin{proof}
    We use the same reduction as in the case of deleting projects (cf. \Cref{thm:Phragmen:CCDC:NPh}) with just two minor changes: we set~$\numDeleted = N$ instead of~$2N$ and we make all set-projects spoiler projects. Now, we show that, with this slight tweak, the reduction indeed shows \NPhness for \CCAC{\phragmen}.

    First, we show that initially~$p$ is indeed not funded by the \phragmen rule. By the rule's definition, all voters start with their virtual bank account set to zero. Then, in time~$1$, a guard-project~$g_j$ can be funded. Moreover, since the sets of voters approving any pair of distinct guard-projects are disjoint, \emph{all} guard-projects can be funded at this point in time. Even though the project~$p$ can also be funded at this point, the rule funds~$N+1$ of guard-projects because of the tie-breaking order. This exhausts the whole budget, and therefore, the distinguished project~$p$ is never funded.

    Next, let~$\mathcal{I}$ be a \Yes-instance and~$C\subseteq \mathcal{S}$ be an exact cover of~$U$. We introduce all spoiler projects~$p_j$ such that~$S_j\in C$ and claim that the distinguished project~$p$ is now funded. Observe that the support for the set-projects is much bigger than the support for the guard-projects. Hence, the first round of funding takes place in time~$1/3N$. Now, we can fund at least one set-project~$p_j$. Since~$C$ was an exact cover of~$U$, the sets of~$A(p_j)$ and~$A(p_{j'})$ are disjoint for each pair of added set-projects~$p_j$ and~$p_{j'}$. Consequently, all added set-projects are funded simultaneously, decreasing the budget to one. After this, the sum of virtual bank accounts of all guard-projects supporters is~$0$, while the sum for supporters of~$p$ is~$1/3N$.
    The first time when a guard-project may be funded is~$1/3N + 1$; however, the sum of money of supporters of~$p$ at this point is~$1/3N + 1 > 1$. Thus, the distinguished project must have been funded before this time. Hence, the added set-projects indeed form a solution, and~$\mathcal{J}$ is a \Yes-instance.

    In the opposite direction, let~$\mathcal{J}$ be a \Yes-instance and let~$S\subseteq Q$ be a set of~$N$ spoiler projects such that if we add them to the instance, the project~$p$ gets funded. We clearly added at least one set-project since, if not, we are in the initial state where~$p$ is not funded. We create a solution~$C$ such that we add a set corresponding to a set-project in~$S$, and we claim that it is an exact cover of~$U$. For the sake of contradiction, assume that it is not the case. Then there must exist an element~$u_i\in U$ such that~$u_i\not\in C$. By the construction, there exists a set~$U_i=\{u_i^1,\ldots,u_i^N\}$ such that~$U_i\cap \bigcup_{p_j\in S} A(p_j) = \emptyset$. Let us analyze the rule in this situation. In time~$1/3N$, at least one set-project is funded. This decreases the sum of money for supporters of some guard-projects to zero. However, there are~$N$ guard-projects~$G_i = \{g_i^1,\ldots,g_i^N\}$ who keep their money since voters of~$U_i$ do not approve any added set-project. As~$\numDeleted = N$, in time~$1$, there is still some budget remaining. Now, both projects in~$G_i$ and the project~$p$ may be funded, and, due to the tie-breaking, the projects of~$G_i$ are funded before~$p$, which completely exhausts the budget. This contradicts that~$S$ is a solution. Therefore, such~$U_i$ cannot exist, and~$C$ is, therefore, an exact cover of~$U$, which finishes the proof.
\end{proof}

\begin{theorem}\label{thm:Phragmen:DCAC:NPh}
    \DCACs{\phragmen} is \NPc, even if the projects are of unit costs.
\end{theorem}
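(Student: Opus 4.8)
The plan is to adapt the construction in the proof of \Cref{thm:Phragmen:DCDC:NPh} in exactly the way that \Cref{thm:Phragmen:CCAC:NPh} adapts \Cref{thm:Phragmen:CCDC:NPh}: keep the $3N$ guard-projects $g_1,\dots,g_{3N}$, the eight element-voters per element (six of which also approve the corresponding guard-project), the six private voters per set-project, the five voters approving only $p$, the budget $\budget = 4N$, unit costs, and the tie-breaking $p_1\succ\cdots\succ p_{3N}\succ g_1\succ\cdots\succ g_{3N}\succ p$; the only changes are that all set-projects become spoiler projects and that we set $\numDeleted = N$ instead of $2N$. Membership in \NP is immediate because \phragmen is polynomial-time computable: guess the spoilers to add and simulate. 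For the remainder I would reduce from \RXthreeCs.

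First I would check that $p$ is funded in the original instance, which contains no set-projects at all: each guard-project has six supporters who approve nothing else active, so all $3N$ guard-projects (their supporter sets being pairwise disjoint) are funded at time $1/6$, leaving budget $N\ge 1$; afterwards the five voters of $p$ reach combined balance $1$ at time $1/5$, and $p$ is funded. For the forward direction, given an exact cover $C$ I would add the $N$ set-projects corresponding to $C$; they are pairwise disjoint, hence funded simultaneously at time $1/30$, which resets the balance of every element-voter and leaves budget $3N$. From then on each guard-project's six supporters accumulate from zero starting at time $1/30$, so they reach balance $1$ at time $1/30 + 1/6 = 1/5$, precisely when $p$ also becomes affordable; by the tie-breaking order the rule funds all $3N$ guard-projects first, exhausting the budget, so $p$ is not funded and $\mathcal{J}$ is a \Yes-instance.

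The converse is the heart of the proof. Suppose adding a set $S$ of at most $N$ spoiler (set-)projects makes $p$ unfunded. Since $p$'s five supporters always reach balance $1$ at time $1/5$, $p$ stays unfunded only if the budget is fully spent by that time; the funded set $W$ satisfies $W\subseteq S\cup\{g_1,\dots,g_{3N}\}$ with $|S\cap W|\le|S|\le N$ and $|\{g_i\}\cap W|\le 3N$, so $|W|=4N$ forces $|S|=N$ and forces every guard-project to be funded no later than time $1/5$. Because a guard-project $g_i$ is funded exactly $1/6$ time units after the last time its supporters are reset, ``funded by $1/5$'' means that every set-project of $S$ containing the element $u_i$ must already have been funded at time $1/30$; letting $i$ range over all elements, every project in $S$ is funded at time $1/30$. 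Finally, invoking the restricted \RXthreeCs property that any two sets intersect in at most one element, together with the six private voters of each set-project, two set-projects sharing a voter cannot both be funded at time $1/30$ (funding the first drops the second below threshold), so $S$ is pairwise disjoint; a family of $N$ pairwise disjoint $3$-element subsets of a $3N$-element universe is an exact cover of $U$.

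The step I expect to be the main obstacle is the timing bookkeeping in this last direction: one must argue carefully that $p$ is dropped by the rule exactly when the running cost first reaches $\budget$, that this cannot happen strictly after time $1/5$ without $p$ already being funded, and that the tie at time $1/5$ is broken toward the guard-projects---and only then chain ``budget fully spent by $1/5$'' through ``all guard-projects funded by $1/5$'', ``all last-reset times equal $1/30$'', ``all of $S$ funded at $1/30$'', down to ``$S$ pairwise disjoint''. The pairwise-disjointness conclusion is exactly where the ``at most one common element'' restriction of \citet{Gonzalez1985} and the private $x_j$-voters are indispensable.
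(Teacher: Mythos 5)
Your proof is correct, but it takes a genuinely different route from the paper's. The paper does not recycle the \Cref{thm:Phragmen:DCDC:NPh} gadget for the addition case; it builds a fresh, leaner instance with only two standard projects --- a single guard~$g$ approved by $3N-1$ private voters and the distinguished~$p$ approved by one voter per element --- plus $3N$ spoiler set-projects each padded with $9N^2-3N-3$ dummy voters, and budget $N+1$. The mechanism there is dual to yours: adding an exact cover resets \emph{$p$'s own} supporters at time $1/(9N^2-3N)$, delaying $p$ to exactly $1/(3N-1)$, where the guard ties with it and wins by tie-breaking; in your construction the added cover instead resets the \emph{guards'} supporters so that all $3N$ guards synchronize with $p$ at time $1/30+1/6=1/5$ and exhaust the budget first. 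Your converse chain (budget must reach $4N$ by time $1/5$, hence every guard is funded by $1/5$, hence every last reset happens at $1/30$, hence all added set-projects are funded simultaneously at $1/30$, hence they are pairwise disjoint and form an exact cover) is sound and complete; the paper's converse is shorter because a single uncovered element already gives $p$'s supporters a balance strictly above $1$ at time $1/(3N-1)$. One small correction to your closing remark: the Gonzalez ``at most one common element'' restriction is not actually needed in your last step --- two sets sharing even a single element share eight element-voters, so funding the first drops the second to balance $22/30<1$ at time $1/30$, and pairwise disjointness of $N$ funded triples already forces an exact cover of the $3N$ elements. The private $x_j$-voters \emph{are} essential, but for the timing identity $1/30+1/6=1/5$ rather than for the disjointness argument.
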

\begin{proof}
    Let~$\mathcal{I} = (U,\mathcal{S})$ be an instance of \RXthreeC. We create an instance of the \DCACs{\phragmen} as follows (see \Cref{fig:Phragmen:DCAC:NPh:construction} for an illustration of the construction). The set of standard projects consists of a \emph{guard-project}~$g$ and the distinguished project~$p$. The set of spoiler projects contains one \emph{set-project}~$p_j$ for every set~$S_j\in\mathcal{S}$. The tie-breaking is so that any other project is preferred before the distinguished project~$p$, and the cost of every project is one. The set of voters contains~$3N-1$ voters~$v_1,\ldots,v_{3N-1}$ approving only the guard-project~$g$. Next, for every element~$u_i\in U$, we create (slightly abusing the notation) a voter~$u_i$ approving the distinguished project~$p$ and a set-project~$p_j$ if and only if~$u_i\in S_j$. Additionally, every set-project~$p_j$ is associated with~$x=9N^2-3N-3$ \emph{dummy-voters}~$d_j^1,\ldots,d_j^{x}$ approving solely~$p_j$. Observe that the number of supporters for each~$p_j$ is~$x+3$, the number of voters approving~$p$ is~$3N$, while only~$3N-1$ voters approve the guard-project~$g$. To finalize the construction, we set~$\budget = N+1$ and~$\numDeleted = N$. Also, it is clear that~$p$ is initially funded, as there are only two projects, each with cost one, and the available budget is~$N+1$.

    \begin{figure}
        \centering\renewcommand{\arraystretch}{1.5}
        \begin{tabular}{c|ccc|c|c}
                  &~$p_1$ &~$\cdots$ &~$p_{3N}$ &~$g$ &~$p$ \\\hline
           ~$v_1,\ldots,v_{3N-1}$ &       &          &          & \checkmark & \\\hline
           ~$u_1$ &  & \checkmark  &          & & \checkmark  \\
           ~$u_2$ & \checkmark &          & \checkmark & & \checkmark  \\
           ~$\vdots$ &       &          &          & &~$\vdots$  \\
           ~$u_{3N}$ & \checkmark  & \checkmark &          & & \checkmark  \\\hline
           ~$d_1^1,\ldots,d_1^{x}$ & \checkmark & & & & \\
           ~$\vdots$ & &~$\ddots$ & & & \\
           ~$d_{3N}^1,\ldots,d_{3N}^{x}$ & & & \checkmark & & \\
        \end{tabular}
        \caption{An illustration of the election instance constructed in the proof of \Cref{thm:Phragmen:DCAC:NPh}.}
        \label{fig:Phragmen:DCAC:NPh:construction}
    \end{figure}

    For correctness, assume that~$\mathcal{I}$ is a \Yes-instance, and~$C\subseteq\mathcal{S}$ is an exact cover of~$U$. We add to an instance all set-projects corresponding to sets in~$C$. Since~$C$ is an exact cover, in the modified instance, each voter~$u_i$ approves exactly one set-project and the distinguished project~$p$. At the time of~$1/(x+3)$, all the set-projects and the distinguished project can be funded. Due to the tie-breaking order, the rule funds all~$N$ added set-projects. This completely exhausts the personal budget of all supporters of~$p$. The guard-project can be funded in time~$1/(3N-1)$. At this time, the sum of personal budgets of voters approving~$p$ is
    \[
        \frac{3N}{3N-1} - \frac{3N}{x+3} =
        \frac{3N}{3N-1} - \frac{3N}{9N^2 - 3N} =
        \frac{3N-1}{3N-1} = 1.
    \]
    Therefore, the distinguished project~$p$ may also be funded, but the tie-breaking order funds the guard-project~$g$ first. This exhausts the available budget, so the project~$p$ is not funded.

    In the opposite direction, let~$\mathcal{J}$ be a \Yes-instance and~$S\subseteq Q$ be a set of added projects securing that the distinguished project~$p$ is not funded. Assume that there is a voter~$u_i$ such that this voter is not approving any of the projects in~$S$, i.e., this voter approves only the distinguished project~$p$. At the latest, at the time~$t_0 = 1/x$, all the set-projects are funded by the rule because of the dummy-voters. As the supporters of the guard-project~$g$ are disjoint from the supporters of any other project, they have enough budget to fund~$g$ at time~$1/(3N-1)$. However, at this point in time, the overall budget of supporters of~$p$ is at least
    \begin{multline*}
        \frac{3N}{3N-1} - \frac{3N-1}{x}
            = \frac{3N}{3N-1} - \frac{3N-1}{9N^2-3N-3}%
            = \frac{3N(9N^2 - 3N - 3) - (3N-1)^2}{(3N-1)(9N^2-3N-3)}
            = \frac{27N^3 - 18N^2 - 3N - 1}{27N^3 - 18N^2 - 6N + 3},
    \end{multline*}
    which is clearly greater than~$1$ for any~$N \geq 2$. Therefore, the rule must fund the project before time~$1/(3N-1)$. This contradicts that~$S$ is a solution. Hence, no such~$u_i$ can exist. That is, every voter~$u_i$ approves at least one added set-project. By a simple counting argument, it must hold that each~$u_i$ approves exactly one added set-project, which directly implies that if we construct~$C$ such that we take all sets~$S_j\in\mathcal{S}$ for which the corresponding set-project~$p_j$ is in~$S$, we obtain an exact cover of~$U$.
\end{proof}

\subsection{\equalShares}

We conclude with a similar result also for the \equalShares rule. Again, control under this rule is computationally hard, even if all projects are of the same cost and the control is unweighted.

\begin{theorem}
    \label{thm:MES:CCAC:NPc}\label{thm:MES:DCAC:NPc}
    Both \CCACs{\equalShares} and \DCACs{\equalShares} are \NPc, even if the projects are of unit cost.
\end{theorem}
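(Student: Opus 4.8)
The plan is to adapt the reduction behind \Cref{thm:MES:CCDC:NPc} to project addition, in exactly the same spirit as moving from \Cref{thm:Phragmen:CCDC:NPh} to \Cref{thm:Phragmen:CCAC:NPh}. Membership in \NP is clear: one guesses the set $S\subseteq Q$ of at most $\numDeleted$ spoilers to activate, simulates the polynomial-time-computable \equalShares rule on the resulting instance, and checks whether $p$ is funded (constructive) or not (destructive).

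For \NPhness I would reduce from \RXthreeCs again. Starting from $\mathcal{I}=(U,\mathcal{S})$ with $|U|=3N$, I build essentially the same election as in the deletion case (which is the construction of \cite[Theorem 3.4]{JaneczkoF2023}, possibly with the spare copies $d_1,\dots$ of $d$), but make every $\mathcal{S}$-project a spoiler project, keep unit costs, set $\budget$ to the original committee size, use the same tie-breaking order, and set $\numDeleted=N$. The point is that an exact cover of $U$ consists of exactly $N$ sets, so $\numDeleted=N$ is precisely the number of set-projects needed to activate one exact cover and no more. The distinguished project is $p$ for the constructive version and $d$ for the destructive version.

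Correctness would reuse the structural facts underlying \Cref{thm:MES:CCDC:NPc}: once \equalShares has processed everything preceding the block $p,d,d_1,\dots$, (i) the combined remaining balance of the element-voters is too small to fund two of $\{p,d,d_1,\dots\}$, so exactly one of them is funded; (ii) $p$ wins this internal competition if and only if the set-projects already processed form an exact cover of $U$; and (iii) only pairwise-disjoint set-projects are ever processed before this block. The forward direction then activates the $N$ spoilers corresponding to an exact cover and concludes that $p$ becomes funded (hence $d$ stops being funded, for the destructive goal). For the backward direction, if activating some $S\subseteq Q$ funds $p$, then the balance of every element-voter must be exhausted before $p$ is processed; the other voters alone cannot do this, so the activated set-projects must, and by (iii) they are pairwise disjoint, so $|S|=N$ and they cover $U$ — an exact cover. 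Finally I would re-check the initial configuration: with no spoiler active there are no set-projects at all, so no exact-cover situation can occur and $d$ (or a $d$-copy) beats $p$; hence $p\notin\equalShares(E)$ for the constructive problem and $d\in\equalShares(E)$ for the destructive one, as required.

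The hard part will be confirming that deleting the $\mathcal{S}$-projects from the base instance — as opposed to having them present and deletable — does not disturb the \equalShares run of \cite[Theorem 3.4]{JaneczkoF2023}: one must verify that no voter (element-voter or $C_U$-voter) ever accumulates enough balance to fund an unintended project when the set-projects are simply absent, and that the tight numerical inequality used in \Cref{thm:MES:CCDC:NPc} continues to hold verbatim for the modified instance. A secondary, easier point is deciding whether the spare copies $d_1,\dots$ are still needed; since in the addition setting the standard projects $p$ and $d$ can be neither added nor removed, one copy of $d$ most likely suffices, which would slightly simplify the construction and the proof.
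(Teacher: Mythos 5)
Your proposal matches the paper's proof essentially verbatim: the paper also takes the construction from \Cref{thm:MES:CCDC:NPc}, moves the $\mathcal{S}$-projects into the spoiler set, sets $\numDeleted = N$, and argues that $p$ gets funded if and only if the added set-projects exhaust the budgets of the $U'$-voters, which happens exactly when they form an exact cover. The only cosmetic difference is that the paper keeps the extra copies $d_1,\ldots,d_{2N}$ rather than questioning whether one copy of $d$ suffices.
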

\begin{proof}
    Membership of both problems to \NP is clear; it suffices to guess a set of projects to remove and check whether the special project is (un)selected. To prove \NP-hardness, we reduce from \RXthreeCs. Let~$\mathcal{I}=(U,\mathcal{S})$ be an instance of the \RXthreeCs problem. We construct an instance of our problem exactly in the same way as in \Cref{thm:MES:CCDC:NPc}, but~$\mathcal{S}$-projects are removed from the constructed instance and placed in the set of projects that can be added. We also set~$\numDeleted=N$.

    One can clearly see that initially,~$d$ is winning, and~$p$ is losing. If we add a set of~$N$ projects forming an exact cover over~$U$, then~$p$ will win with~$d$ and other projects from~$d_1, \ldots, d_{2N}$. Conversely, if~$p$ wins after adding some up to~$\budget=N$ projects to the instance, then the added projects must have finished the budget of all voters from~$U'$, which is possible if and only if these~$\mathcal{S}$ projects form an exact cover over~$U$.
\end{proof}

\section{Counting Solutions}\label{sec:counting}

For performance measures based on the probability that a project wins/loses if a randomly selected set of projects is deleted or added, it is essential to efficiently determine the number of solutions for an instance. However, it turns out that all our \NPhness results, excluding \equalShares and the deletion operation, also imply~$\#\P$-hardness for respective problems. Therefore, we do not expect the existence of a significantly faster algorithm for computing such measures than a simple enumeration of all possible solutions.

\section{Experiments}

We analyze the effect of project deletions on real-world data from Pabulib~\citep{fal-fli-pet-pie-sko-sto-szu-tal:c:pabulib} and explore how different performance measures based on this operation can help with understanding and explanation of outcomes for proposers of losing projects and PB election organizers.

\paragraph{Data.} In our experiments, we analyze~$543$ approval-based PB instances from Pabulib.
We included every instance with approval ballots available as of October 2024 and contained at least one losing project.
In total, our dataset contains~$10531$ losing projects for \greedyAV,~$5771$ for \greedyCost,~$6004$ for \phragmen, and~$7412$ for \equalShares. Moreover, the largest instance consists of~$160$ projects and~$90494$ voters.

\paragraph{Experimental Setup.}
The experiments were run on computers with two AMD EPYC™ 7H12, 64-core, 2.6 GHz CPUs, and 256 GB DDR4 3200MT/s RAM. We use the same Python implementation for rules \greedyAV, \greedyCost, and \phragmen as in \citep{BoehmerFJPPSSS2024}. For \equalShares, we use our own implementation in C++ that significantly outperforms the available implementations (see \Cref{sec:MESimplementation} for details).
For each rule, every instance, and every~$\numDeleted$ projects in this instance, where~$\numDeleted \in \{1,2,3\}$, we determined the winners after deleting these~$\numDeleted$ projects. Evaluating the instances with the \phragmen rule took us 38000 computing hours (single core), and both \greedyAV and \greedyCost took 900 computing hours. Our C++ implementation of \equalShares required 5000 computing hours. The overall running time is significantly skewed by the large instances, as we need to try all~$\Oh{n^\numDeleted}$ subsets of projects for each instance (recall that significant speed-up for these rules is not possible due to \Cref{thm:MES:CCDC:NPc,thm:Phragmen:CCDC:NPh}).
For computing the optimal subset of projects to delete in the setting with either \greedyAV or \greedyCost, we use the dynamic programming approach from~\Cref{thm:AV:DCDC:P:ifUnaryPrices}, which can compute optimal control for the whole dataset in less than 51 minutes.

\subsection{How Close is a Project to Being Funded?}

\begin{figure*}[!tb]
    \centering
    \begin{minipage}[t]{.45\textwidth}
        \centering
        \includegraphics[width=0.97\linewidth, angle=0]{imgs/overview_allTrue.pdf}
        \caption{The distribution of projects according to their optimal control size. Each bar represents one rule and is partitioned into five parts whose sizes correspond to the number of projects that require {\small~$r\in\{0,1,2,3,4+\}$} deletions to get funded. The part with winning projects is always the darkest, and the part for projects with~$r \geq 4$ is the lightest.}
        \label{fig:dataset-overview-histogram}
    \end{minipage}%
    \quad
    \begin{minipage}[t]{0.45\textwidth}
        \centering
         \includegraphics[width=\linewidth, angle=0]{imgs/cost_vs_cost_to_win_chosen.pdf}
        \caption{The ratio between the project's cost and the sum of costs of projects in the cheapest possible set of projects that makes a given project winning. The interesting projects are labeled with their name, and each project~$p$ is plotted for every rule~$f$ (unless~$p\in f(E)$). Instance: Warsaw, 2023.}
        \label{fig:cheapest-removal-set}
    \end{minipage}%
    \quad
    \begin{minipage}[t]{0.45\textwidth}
        \centering
        \includegraphics[width=\textwidth, angle=0]{imgs/project_probs_poland_warszawa_2019_ursynow.pb_GreedyAV.pdf}
        \caption{The probability of being funded for each initially losing project after removing~$r\in\{1,2,3\}$ projects. The projects are ordered according to their winning probability for~$r=1$. The used rule is \greedyAV, and the depicted instance is Warsaw, Ursynow, 2019.}
        \label{fig:prob-instance}
    \end{minipage}
\end{figure*}

In general, PB rules do not provide any information about the performance of proposed projects---a project is either funded or not---and there are no direct ways of measuring how close a project were to being successful.
Indeed, this is exactly what drove \citet{BoehmerFJPPSSS2024}
to initiate the study of project performance measures.
In this subsection, we propose several further such measures, based on constructive control by deleting projects.

The first, very basic, approach we suggest is to count how many other projects need to be removed from the instance to make some initially losing project~$p$ funded. In \Cref{fig:dataset-overview-histogram}, we present an overview of the results for the whole dataset. For each initially losing project~$p$, we computed the minimum number of projects we need to remove in order to make~$p$ funded. For \greedyAV, more than 47\% initially losing projects get funded after the removal of at most~$3$ projects; for the remaining rules, the value is smaller, but still significant -- 37\% for \greedyCost, 40\% for \equalShares, and 43\% for \phragmen, respectively. Another interesting information we gain from \Cref{fig:dataset-overview-histogram} is that for most of the projects for which it is enough to remove at most three projects, it is actually enough to remove only one other project. This is most evident in the case of \greedyAV.

Yet, saying that a project
is close to winning simply because it can be funded after deleting some small number of carefully selected projects is
overly simplistic. On the one hand, it is not too surprising
that a project gets selected after deleting some other,
expensive projects. On the other hand, we expect such expensive projects to be well-prepared and to not be removed for formal reasons, or due to missing some deadline.
Consequently, instead of taking the number of projects
that we need to delete to get some initially losing project
$p$ funded, we may rather seek the cheapest set of projects
(of a given cardinality) whose deletion gets~$p$ funded.
This measure indeed is much more fine-grained than our first one. To see this, we consider the results for Warszawa 2023 Citywide PB election, shown in \Cref{fig:cheapest-removal-set}: There are certain projects where the removal of expensive projects is the only way to get them funded (e.g., project~$1915$), but there are also projects that get funded after removing rather cheap projects (see, e.g., projects~$640$ and~$1592$). Note that this behavior is not very consistent among different PB rules, which is caused mostly by their underlying principles: whether the rule is more proportionality- or social-welfare-oriented.

Both above measures have the downside that they focus
on deleting exactly a particular subset of projects. However, even if we can get some initially losing project~$p$ be funded after deleting projects whose cost is~$X$, it is  possible that~$p$ would be losing after deleting some other projects, whose cost is~$2X$ (while this may seem unintuitive, it can happen due to various possible interactions among the projects and involved operation of our PB rules).
Hence, in the remainder of this section we take a more
stochastic approach and analyze the probability that a project gets funded if we remove a random subset of projects of some predefined size. We start with an overview of the whole dataset and later we analyze one specific instance, to show what information can be gained from measures based on the probability of winning.

In \Cref{fig:overall-prob}, we have a data point for every instance and plot the percentage of initially losing projects that have at least 25\% probability of being funded after the removal of three random projects. First, we see that for all the voting rules, there are instances where successful control is either unlikely for large fractions of projects or, on the contrary, where control is likely to be successful for nearly all projects. This affects more frequently instances of smaller size. Also, there is a visible difference between \greedyAV and the remaining rules: unlike in \greedyAV, for these rules, the instances are more `clustered' around certain percentage values, while for \greedyAV, the instances are more spread.

\begin{figure}[bt!]
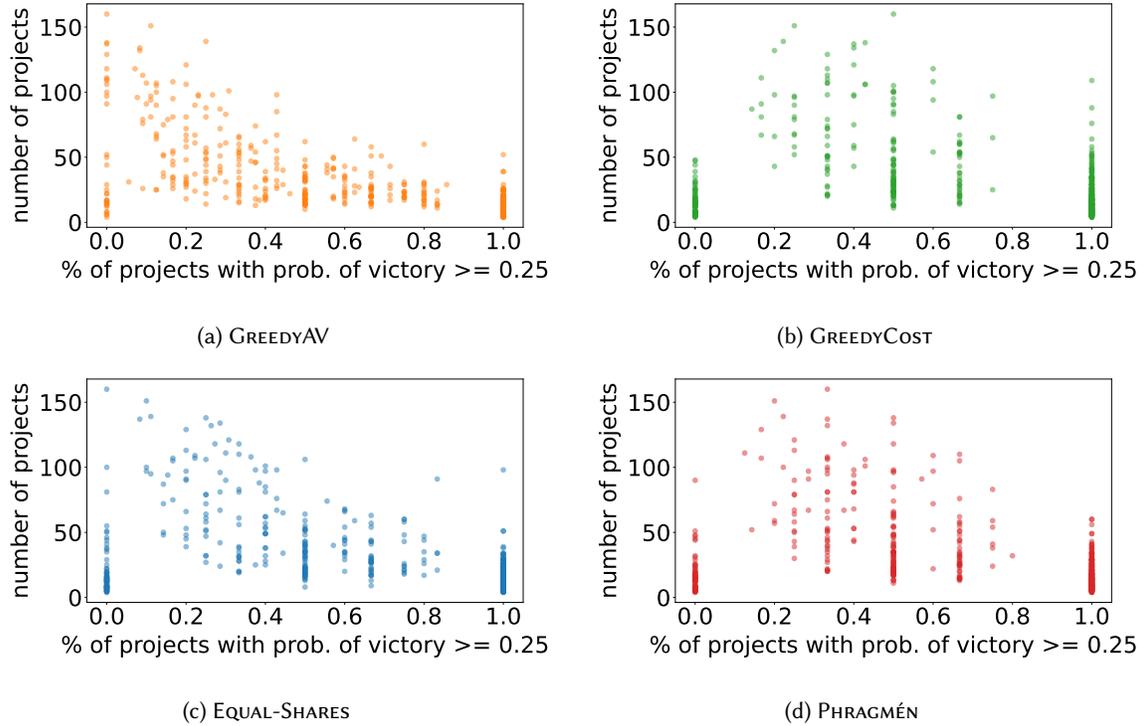

    \centering
    \subfloat[][\greedyAV]{\includegraphics[width=0.47\linewidth, angle=0]{imgs/scatter_GreedyAV.pdf}\label{fig:overall-prob:AV}}\quad
    \subfloat[][\greedyCost]{\includegraphics[width=0.47\linewidth, angle=0]{imgs/scatter_GreedyCost.pdf}}\\
    \subfloat[][\equalShares]{\includegraphics[width=0.47\linewidth, angle=0]{imgs/scatter_MES.pdf}}\quad
    \subfloat[][\phragmen]{\includegraphics[width=0.47\linewidth, angle=0]{imgs/scatter_Phragmen.pdf}\label{fig:overall-prob:PH}}
    \caption{Each point of this plot represents a single instance. On the~$x$ axis, we have the percentage of losing projects with a probability of at least 0.25 for getting funded after the removal of~$3$ random projects. On the~$y$ axis, we have the instance size~$\numProjects$.%
    }
    \label{fig:overall-prob}
\end{figure}

Now, we focus on a specific instance. In \Cref{fig:prob-instance}, we plot for each project of the instance Warsaw, Ursynov, 2019, the probability for this project to become a winner if~$i$ projects are removed for different values of~$i$. From this plot, we can nicely distinguish between projects that are `clear losers,' meaning that their probability of winning is very close to zero regardless of the size of the removal set, and projects that are much closer to winning. This clearly demonstrates how useful such a measure of a project's strength can be.

\subsection{Who Are My Biggest Rivals?}

The performance measures introduced so far are mostly useful for election organizers. They allow us to compare projects from a `global perspective', meaning that, based on them, we can see how a losing project performed relative to other losing projects. However, for project proposers, it is very important to know why their project was not funded and what can be done better to improve their project's performance in the next election. One possibility is to identify a given project~$p$'s rivals---projects that were funded, but if they would be removed, then~$p$'s chances to get funded significantly increase. Proposers of losing projects can then analyze such rivals, learn what they did differently, and improve their projects.

We propose the following measure of rivalry. Let~$p$ be a losing project. We say that rivalry between~$p$ and some other project~$q$ is equal to the probability that~$p$ is funded after we remove~$q$ and~$r$ other random projects. In \Cref{fig:rivalry-heatmaps}, we present the results for~$r=2$, instance Warsaw, Ursynow, 2019, and different PB rules. It is not surprising that the strongest rivals are usually the projects that were initially funded. However, this is not always the case. One such example is project 1490 under the \phragmen rule (\Cref{fig:rivalry-heatmaps:PH}), for which the (initially losing) project 1432 is a much stronger rival than most of the initially winning projects. It can also be the case that for some projects, their funding relies solely on the performance of a few other projects. A very good example of this behavior is project 1806 under the \greedyAV rule (\Cref{fig:rivalry-heatmaps:AV}): unless we remove project 210, there is almost no chance that project 1806 will ever be funded. This measure also nicely complements the measures from the previous section, as, based on plots similar to \Cref{fig:rivalry-heatmaps}, we can clearly distinguish projects that are hopeless losers, projects that are somewhere in the middle, and projects that almost got in.

\begin{figure}
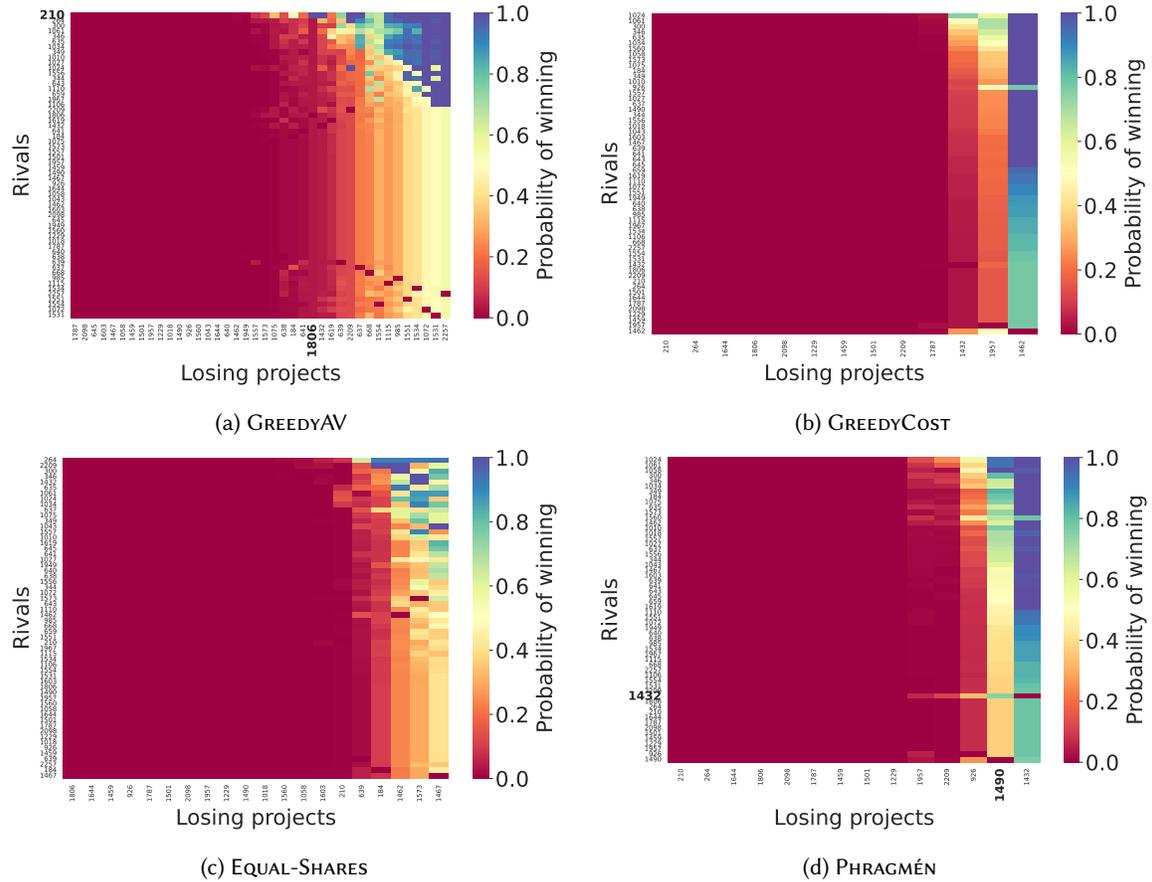

    \centering
    \subfloat[][\greedyAV]{\includegraphics[width=.47\linewidth]{imgs/rivals_GreedyAV.pdf}\label{fig:rivalry-heatmaps:AV}}\quad
    \subfloat[][\greedyCost]{\includegraphics[width=.47\linewidth]{imgs/rivals_GreedyCost.pdf}}\\
    \subfloat[][\equalShares]{\includegraphics[width=.47\linewidth]{imgs/rivals_MES.pdf}}\quad
    \subfloat[][\phragmen]{\includegraphics[width=.47\linewidth]{imgs/rivals_Phragmen.pdf}\label{fig:rivalry-heatmaps:PH}}
    \caption{Rivals for initially losing projects. For each losing project~$p$ (plotted on the~$x$-axis) and every other (not necessarily losing) project~$q$ (plotted on the~$y$-axis), we display the probability (red represents a value close to~$0$, blue represents a value close to~$1$) that~$p$ is funded when~$q$ and~$r=2$ other random projects are removed. Warsaw, Ursynow, 2019.}
    \label{fig:rivalry-heatmaps}
\end{figure}

\subsection{Comparison with Existing Measures}

In this section, we analyze the correlation between our performance measures based on the project control and the existing performance measures based on different operations. More specifically, we compare \texttt{del1}, \texttt{del2}, and \texttt{del3}---the probability of winning after the removal of at most~$1$,~$2$, and~$3$ projects, respectively---with two performance measures of \citet{BoehmerFJPPSSS2024}---\texttt{cost}, where the cost of a losing project is reduced until the project is winning, and so-called \emph{singleton} (\texttt{add}), where we keep adding voters approving only the losing project until it becomes winning. Each of these measures assigns values between $0$ (far from winning) and $1$ (close to winning). Note that \citet{BoehmerFJPPSSS2024} introduced four more performance measures, but we do not assume them, as they are anyway strongly correlated with the \texttt{add} measure, and, moreover, their calculation is very computational demanding.

The overview of the computed correlation coefficients is available in \Cref{tab:PCC}. The computed value of correlation is based on all losing projects from all our instances. We can see that for all of our rules, the correlation follows the same pattern. The metric \texttt{del1} has correlation coefficient around $0.45$ with \texttt{cost} and $0.35$ with \texttt{del}. The correlation coefficients increase slightly with increasing number of deleted projects.
The correlations coefficient are much lower than those found in \citet{BoehmerFJPPSSS2024}.
This can be explained with not many projects being able to be made winning just by removing some amount of winning projects and thus having probability of winning $0$. With more removals, more projects have non-zero probability of winning and so the metric better distinguishes the losing projects.
This behavior can be seen in \Cref{fig:correlations_add,fig:correlations_cost} as the thick line at the bottom of the plots. Additionally, mostly projects that have a good score on the other measure are seen having non-zero score with measure \texttt{del3} with the exception of \greedyAV in the comparison with \texttt{cost}.
With increasing the probability of winning the scores of the other measures also on average increase.
Together this indicates that this the measeure \texttt{delr} can be useful in distinguishing the very best losing projects.

Hence, our performance measures can clearly provide additional information that nicely extends the information package for project proposers and election organizers as introduced in \citep{BoehmerFJPPSSS2024}.

\begin{table}[bt!]
    \centering\renewcommand{\arraystretch}{1.2}
    \newcommand{\corr}[1]{\tikz{\pgfmathsetmacro{\myperc}{#1*#1*#1*#1*40}\node [transform shape, rounded corners=0pt,fill=blue!\myperc,inner sep=0, minimum width=25pt] {#1}; }}
    \begin{tabular}{lcccc}
        \toprule
         & \small\greedyAV & \small\greedyCost & \small\phragmen & \small\textsc{Eq-Shares}  \\
        \midrule
        \small corr(\texttt{cost},\texttt{add}) & \corr{0.45} & \corr{0.91} & \corr{0.80}  & \corr{0.72} \\
        \small corr(\texttt{cost},\texttt{del1}) & \corr{0.45} & \corr{0.43} & \corr{0.44}  & \corr{0.48} \\
        \small corr(\texttt{cost},\texttt{del2}) & \corr{0.48} & \corr{0.48} & \corr{0.49} & \corr{0.53}\\
        \small corr(\texttt{cost},\texttt{del3}) & \corr{0.49} & \corr{0.51} & \corr{0.52} & \corr{0.56}\\
        \small corr(\texttt{del1},\texttt{add}) & \corr{0.38} & \corr{0.38} & \corr{0.29} & \corr{0.30}\\
        \small corr(\texttt{del2},\texttt{add}) & \corr{0.42} & \corr{0.43} & \corr{0.32}& \corr{0.33} \\
        \small corr(\texttt{del3},\texttt{add}) & \corr{0.43} & \corr{0.45} & \corr{0.33} & \corr{0.34}\\
        \bottomrule
    \end{tabular}
    \caption{Pearson correlation coefficients between our performance measures \texttt{del$r$}---the probability of winning after deleting up to~$r$ projects---and two measures introduced by \protect\citet{BoehmerFJPPSSS2024}---adding singleton votes (\texttt{add}) and reducing the cost of a project (\texttt{cost}). Values near~$1$ mean a stronger correlation.}
    \label{tab:PCC}
\end{table}

\begin{figure}
    \centering
    \subfloat[][\greedyAV]{\includegraphics[width=.47\linewidth]{imgs/corr_GreedyAV_price_reduction_win_percentage_after_3_projects_removed.pdf}}\quad
    \subfloat[][\greedyCost]{\includegraphics[width=.47\linewidth]{imgs/corr_GreedyCost_price_reduction_win_percentage_after_3_projects_removed.pdf}}\\
    \subfloat[][\equalShares]{\includegraphics[width=.47\linewidth]{imgs/corr_MES_price_reduction_win_percentage_after_3_projects_removed.pdf}}\quad
    \subfloat[][\phragmen]{\includegraphics[width=.47\linewidth]{imgs/corr_Phragmen_price_reduction_win_percentage_after_3_projects_removed.pdf}}
    \caption{Correlation plots of measures \texttt{cost} and \texttt{del3} for each rule. Each dot represents one losing project.}
    \label{fig:correlations_cost}
\end{figure}

\begin{figure}
    \centering
    \subfloat[][\greedyAV]{\includegraphics[width=.47\linewidth]{imgs/corr_GreedyAV_singleton_addition_win_percentage_after_3_projects_removed.pdf}}\quad
    \subfloat[][\greedyCost]{\includegraphics[width=.47\linewidth]{imgs/corr_GreedyCost_singleton_addition_win_percentage_after_3_projects_removed.pdf}}\\
    \subfloat[][\equalShares]{\includegraphics[width=.47\linewidth]{imgs/corr_MES_singleton_addition_win_percentage_after_3_projects_removed.pdf}}\quad
    \subfloat[][\phragmen]{\includegraphics[width=.47\linewidth]{imgs/corr_Phragmen_singleton_addition_win_percentage_after_3_projects_removed.pdf}}
    \caption{Correlation plots of measures \texttt{add} and \texttt{del3} for each rule. Each dot represents one losing project.}
    \label{fig:correlations_add}
\end{figure}

\subsection{Discussion}

In our experiments, we clearly demonstrated how useful performance measures based on project control can be for all interested parties. We can use these measures to compare different projects and provide the election organizers with information on which projects were very close to being funded. This is very useful because, in practice, cities often try to discuss popular losing projects and fund them from an increased or completely separate budget (recall that PB rules are often not exhaustive). Moreover, our rivalry-based measures may help project proposers identify which other projects prevented their projects from being funded. If such strong rivals share some similarities with the losing project, the proposer can try to learn from the rival and significantly improve their losing project to make it funded in the next installation of elections. Finally, we compared our measures with the measures already existing in the literature and demonstrated that our measures nicely supplement and extend the information package for losing projects, which can increase transparency, popularity, and perceived legitimacy of participatory budgeting elections.

\FloatBarrier

\section{Conclusions}

We have studied the computational complexity of candidate control in participatory budgeting elections with two different goals---constructive and destructive---and two different control operations---project deletion and project addition. Regardless of the rule used, the associated decision problems are computationally intractable; however, for the \greedyAV and \greedyCost, which are used in a majority of real-life PB elections, a polynomial-time algorithm exists for natural and realistic restrictions of the problem. Based on our theoretical results, we suggest multiple project performance measures based on the control operations. In the series of experiments on real-life PB instances, we demonstrate how useful insights these measures can provide for losing projects.

\appendix

\begin{acks}
This project has received funding from the European Research Council (ERC) under the European Union’s Horizon 2020 research and innovation programme (grant agreement No 101002854).
ŁJ’s research presented in this paper is supported in part from the funds assigned by Polish Ministry of Science and Technology to AGH University.
This work was co-funded by the European Union under the project Robotics and advanced industrial production (reg. no. CZ.02.01.01/00/22\_008/0004590).
This work was supported by the Ministry of Education, Youth and Sports of the Czech Republic through the e-INFRA CZ (ID:90254), project OPEN-30-16 and OPEN-34-18.
JP and ŠS acknowledge the additional support of the Grant Agency of the CTU in Prague, grant No. SGS23/205/OHK3/3T/18.

\begin{center}
    \vspace{0.5cm}
    \includegraphics[width=5cm]{imgs/erceu.png}
\end{center}
\end{acks}

\bibliographystyle{ACM-Reference-Format}
\bibliography{references}

\appendix

\section{A New Implementation of \equalShares}\label{sec:MESimplementation}

The new implementation of MES has two major improvements in comparison to the previous implementations in Python \cite{BoehmerFJPPSSS2024} and JavaScript \footnote{https://github.com/equalshares/equalshares-compute-tool}. It is written in C++, utilizing STL data structures, which is on average much faster programming language than Python and JavaScript. The second improvement is using cache friendly data structures like C++ vectors instead of previously used hash tables.

We performed tests to measure the improvements introduced in our implementation of MES. The tests were made on the same instances as the instances evaluation \footnote{https://pabulib.org/?hash=66e85250d9ca8}  PB instances from Pabulib \citep{fal-fli-pet-pie-sko-sto-szu-tal:c:pabulib}. For each PB instance, we run each implementation of MES 10 times and take the average of measure. The running time is taken inside the implementation (e.g. we are not measuring the time the python interpreter needs to parse the python program).

\begin{table}[h!]
        \centering
        \def\arraystretch{1.5}
        \begin{tabular}{p{3.5cm}M{2cm}M{2cm}M{2cm}M{2cm}}
                \toprule
                 comparison & min & max & avg & median \\
                \midrule
                 js/cpp & 2.65 & 139 & 59.1 & 59.9 \\
                 python/cpp & 0.787 & 605 & 94 & 67.5 \\
                \bottomrule
        \end{tabular}
        \caption{Comparison of speedup between our C++ MES implementation and  different implementations of MES.}
        \label{tab:speedups}
\end{table}

The table \Cref{tab:speedups} shows that we achieved a great speedup. C++ implementations are always much faster than the Python ones and, in most cases, are also significantly faster than JavaScript implementations.

\begin{figure}
    \centering
    \includegraphics[width=0.5\linewidth]{imgs/speed.pdf}
    \caption{Caption}
    \label{fig:enter-label}
\end{figure}

\begin{figure}
    \centering
    \includegraphics[width=0.5\linewidth]{imgs/speed_lines.pdf}
    \caption{Caption}
    \label{fig:enter-label}
\end{figure}

\begin{figure}
    \centering
    \includegraphics[width=0.5\linewidth]{imgs/speedup_scatter_js_cpp.pdf}
    \caption{Caption}
    \label{fig:enter-label}
\end{figure}

\begin{figure}
    \centering
    \includegraphics[width=0.5\linewidth]{imgs/speedup_scatter_python_cpp.pdf}
    \caption{Caption}
    \label{fig:enter-label}
\end{figure}

\section{Ethical Statement}

Even though our paper's goal is to improve the explainability and transparency of participatory budgeting outcomes, voting control is traditionally understood as malicious and highly undesired behavior. As such, one might object that our work could increase awareness of possible manipulation in PB elections. We want to stress that manipulations based on our performance measures are implausible: Our measures can be computed only after the elections have ended and we have complete (and anonymized) information about the whole instance. That is, the potential knowledge based on our measures can be used, if at all, to manipulate the next installation of PB elections.
However, the new instance will most likely be different, since some projects have already been funded, new projects will be proposed, and, in particular, voters' preferences may change over time.

\end{document}